\documentclass[12pt,reqno]{amsart}
\usepackage[dvips,colorlinks=true,linkcolor=red,citecolor=blue]{hyperref}
\usepackage[final]{graphicx}
\usepackage{amsfonts}
\usepackage{pdfsync}
\usepackage{amsmath}
\usepackage{amssymb}
\usepackage{amsthm}

\newtheorem{theorem}{Theorem}[section]
\newtheorem{lemma}[theorem]{Lemma}
\newtheorem{corollary}[theorem]{Corollary}
\newtheorem{proposition}[theorem]{Proposition}

\topmargin-.5in \textwidth6.6in \textheight9in \oddsidemargin0in
\evensidemargin0in

\newcommand{\R}{{\mathord{\mathbb R}}}
\newcommand{\C}{{\mathord{\mathbb C}}}
\newcommand{\Z}{{\mathord{\mathbb Z}}}
\newcommand{\N}{{\mathord{\mathbb N}}}
\newcommand{\E}{{\mathord{\mathbb E}}}
\newcommand{\PP}{{\mathord{\mathbb P}}}
\newcommand{\supp}{{\mathop{\rm supp\ }}}
\newcommand{\tr}{{\mathop{\rm tr\ }}}
\newcommand{\re}{{\mathop{\rm Re }}}

%
\newcommand{\esp}{\mathbb{E}}

%
%

%

\begin{document}

\title[The random displacement model]{Localization for the random displacement model}

\author[Klopp]{Fr{\'e}d{\'e}ric Klopp$^1$}

\thanks{F.\ K.\ was partially supported by the ANR grant
  08-BLAN-0261-01.}

\author[Loss]{Michael Loss$^2$}

\thanks{M.\ L.\ was supported in part by NSF grant DMS-0901304.}

\author[Nakamura]{Shu Nakamura$^3$}

\thanks{S.\ N.\ was partially supported by JSPS grant Kiban (A) 21244008}

\author[Stolz]{G{\"u}nter Stolz$^4$}

\thanks{G.\ S.\ was supported in part by NSF grant DMS-0653374.}

\address{$^1$ LAGA, U.M.R. 7539 C.N.R.S, Institut Galil{\'e}e, Universit{\'e}
  de Paris-Nord, 99 Avenue J.-B.  Cl{\'e}ment, F-93430 Villetaneuse,
  France\ et \ Institut Universitaire de France}
\email{\href{mailto:klopp@math.univ-paris13.fr}{klopp@math.univ-paris13.fr}}

\address{$^2$ Georgia Institute of Technology, School of Mathematics,
Atlanta, Georgia 30332-0160, loss@math.gatech.edu}
\email{\href{mailto:loss@math.gatech.edu}{loss@math.gatech.edu}}

\address{$^3$ Graduate School of Mathematical Sciences, University of Tokyo, 3-8-1 Komaba, Meguro-ku, Tokya, Japan 153-8914, shu@ms.u-tokyo.ac.jp}
\email{\href{mailto:shu@ms.u-tokyo.ac.jp}{shu@ms.u-tokyo.ac.jp}}

\address{$^4$ University of Alabama at Birmingham, Department of Mathematics, Birmingham, Alabama 35294-1170, stolz@uab.edu}
\email{\href{mailto:stolz@uab.edu}{stolz@uab.edu}}

\maketitle

\vspace{.3truein}
\centerline{\bf Abstract}

\medskip {\sl We prove spectral and dynamical localization for the
  multi-dimensional random displacement model near the bottom of its
  spectrum by showing that the approach through multiscale analysis is
  applicable. In particular, we show that a previously known Lifshitz
  tail bound can be extended to our setting and prove a new Wegner
  estimate. A key tool is given by a quantitative form of a property
  of a related single-site Neumann problem which can be described as
  ``bubbles tend to the corners''.}

\section{Introduction}
\label{sec:introduction}

We consider the {\it random displacement model} (RDM), a random Schr{\"o}dinger operator
\begin{equation} \label{eq:hamiltonian}
H_{\omega} = -\Delta + V_{\omega}
\end{equation}
in $L^2(\R^d)$, $d\ge 1$, where the random potential has the form
\begin{equation} \label{eq:potential}
V_{\omega}(x) = \sum_{i\in \Z^d} q(x-i-\omega_i).
\end{equation}

This models a random perturbation of the periodic potential $\sum_i q(x-i)$, where the single-site terms sit at exact lattices sites $i\in \Z^d$. The parameter $\omega = (\omega_i)_{i\in \Z^d}$ describes a configuration of random displacement vectors $\omega_i \in \R^d$. Before entering a more thorough discussion of background and assumptions, let us state the main result of our work in a simple non-trivial special case:

\begin{theorem}
Suppose that $d\ge 2$ and that $q\in C_0^{\infty}(\R^d)$ is real-valued, sign-definite, rotation symmetric and supp$\,q \subset \{x:|x|\le r\}$ for some $r<1/4$. Also assume that $(\omega_i)_{i\in \Z^d}$ are i.i.d.\ $\R^d$-valued random variables, uniformly distributed on $[-d_{max}, d_{max}]^d$ where $d_{max} = 1/2-r$.

Then $H_{\omega}$ is spectrally and dynamically localized at energies near the bottom of its almost sure spectrum.
\end{theorem}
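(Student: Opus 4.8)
The plan is to verify the hypotheses of the bootstrap multiscale analysis of Germinet and Klein on an interval $[E_0,E_0+\delta_0]$ at the bottom of the almost sure spectrum; this yields both spectral and dynamical localization there. By ergodicity $E_0:=\inf\sigma(H_\omega)$ is deterministic, and a neighborhood of $E_0$ in $\sigma(H_\omega)$ is covered by such an interval, so this suffices. The deterministic ingredients of multiscale analysis --- the generalized eigenfunction expansion, the Combes--Thomas resolvent estimate, and the geometric resolvent and eigenfunction decay inequalities --- hold for $H_\omega$ as for any Schr\"odinger operator with bounded, compactly supported single-site terms, and independence at a distance is immediate from the i.i.d.\ assumption together with the fact that, since $r+d_{max}=1/2$, each displaced bump $q(\cdot-i-\omega_i)$ is supported in the closed unit cube $C_i:=i+[-1/2,1/2]^d$ and thus depends only on $\omega_i$. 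Everything therefore reduces to two probabilistic estimates near $E_0$: an initial length-scale estimate and a Wegner estimate.

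First I would identify $E_0$ and the minimizing configurations using the quantitative form of ``bubbles tend to the corners''. Neumann bracketing over the cubes $C_i$, combined with $\supp q(\cdot-i-\omega_i)\subset C_i$, gives $E_0=\min_{u\in[-d_{max},d_{max}]^d}E_0^N(u)$, where $E_0^N(u)$ denotes the ground state energy of the single-site Neumann operator $-\Delta^N_{[-1/2,1/2]^d}+q(\cdot-u)$; the quantitative bubbles-to-corners statement asserts that this minimum is attained precisely when $u$ lies in the set $\mathcal C$ of corners of $[-d_{max},d_{max}]^d$, and moreover $E_0^N(u)-E_0\ge g(\operatorname{dist}(u,\mathcal C))$ for some strictly increasing $g$ with $g(0)=0$. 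That this value is actually achieved follows from a reflection argument: since $q$ is rotation symmetric, hence invariant under the reflections of $C_i$, reflecting the corner-optimized single-site ground state across the faces of the cube produces an honest, positive ground state of the $2\Z^d$-periodic ``clustered'' configuration in which the $2^d$ bumps around a common lattice corner are drawn toward it.

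The initial length-scale estimate then follows from a Lifshitz-tail bound, obtained by extending the known argument to this setting. Subadditivity of the Dirichlet eigenvalue counting function gives $N(E_0+\epsilon)\le C\,\pro(\lambda_1(H^D_{\Lambda_L})\le E_0+\epsilon)$ for every box $\Lambda_L$, while $\lambda_1(H^D_{\Lambda_L})\ge\lambda_1(H^N_{\Lambda_L})\ge\min_{C_i\subset\Lambda_L}E_0^N(\omega_i)\ge E_0+\min_i g(\operatorname{dist}(\omega_i,\mathcal C))$ by Dirichlet--Neumann comparison and a second use of Neumann bracketing, now over single cubes. Hence the event $\{\lambda_1(H^D_{\Lambda_L})\le E_0+\epsilon\}$ forces $\operatorname{dist}(\omega_i,\mathcal C)\le g^{-1}(\epsilon)$ for all $\sim L^d$ cubes in $\Lambda_L$; for the uniform law each such event has probability $\le C\,g^{-1}(\epsilon)^d$ and, by the disjoint dependence noted above, they are independent, so choosing $L\sim\epsilon^{-1/2}$ yields $N(E_0+\epsilon)\le C\exp(-c\,\epsilon^{-d/2})$. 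A Combes--Thomas estimate then converts this into resolvent decay on a box of suitable initial size $L_0$ at energies slightly above $E_0$ with probability $\ge1-L_0^{-p}$, which is the input multiscale analysis requires.

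The remaining, and genuinely hard, step is the Wegner estimate near $E_0$, say $\esp[\operatorname{tr}\chi_{[E-\eta,E+\eta]}(H_{\Lambda_L})]\le C\,\eta^{\alpha}L^{b}$ for $E$ in a small interval above $E_0$. The obstruction is that $\omega$ enters $V_\omega$ only through translations, so $\partial_{\omega_{i,k}}V_\omega=-(\partial_k q)(\cdot-i-\omega_i)$ changes sign and the monotone spectral averaging available for alloy-type models does not apply. The way around this is that near $E_0$ the dependence on $\omega$ is effectively monotone in the right directions: an IMS-type localization estimate, together with $E_0^N(u)-E_0\ge g(\operatorname{dist}(u,\mathcal C))$, shows that any eigenfunction of $H_{\Lambda_L}$ with eigenvalue in $[E_0,E_0+\delta_0]$ has nearly all of its $L^2$-mass on the union of those cubes $C_i$ whose displacement $\omega_i$ lies within a small $\rho=\rho(\delta_0)$ of $\mathcal C$ --- that is, with every $|\omega_{i,k}|$ close to $d_{max}$, which is exactly the regime in which $E_0^N$ has a definite one-sided slope in $\omega_{i,k}$ (this non-degeneracy at the corners being the sharp content of bubbles-to-corners). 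Since the eigenfunction is normalized and, up to a small error, supported on at most $|\Lambda_L|$ such cubes, some cube carries mass $\ge c|\Lambda_L|^{-1}$; pushing $\omega_{i,k}$ monotonically toward $d_{max}$ --- or away from it, whichever direction remains in $[-d_{max},d_{max}]$ longer --- moves the corresponding eigenvalue monotonically at rate $\gtrsim L^{-d}$ by first-order perturbation theory and the single-site slope bound. Averaging this over the choice of worst cube and coordinate, in the spirit of Combes--Hislop--Klopp spectral averaging but along a configuration-dependent direction, produces a Wegner estimate with polynomial loss in $L$ and an essentially Lipschitz bound in $\eta$. Feeding this estimate and the Lifshitz-tail initial estimate into the Germinet--Klein bootstrap scheme gives spectral and dynamical localization on $[E_0,E_0+\delta_0]$; the hypothesis $d\ge2$ is used precisely in establishing the bubbles-to-corners property, hence the whole strategy, the one-dimensional case being accessible by other, transfer-matrix based, methods.
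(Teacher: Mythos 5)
Your overall architecture (verify the Germinet--Klein hypotheses; deterministic ingredients are standard; IAD from disjoint supports; reduce everything to an initial length-scale estimate plus a Wegner estimate; exploit the quantitative bubbles-to-corners property via Neumann bracketing) coincides with the paper's, and your sketch of the Wegner estimate --- pushing $\omega_i$ toward the nearest corner, combined with an IMS-type concentration argument for low-lying eigenfunctions and a Combes--Hislop--Klopp style spectral averaging --- is in the same spirit as the paper's Section~\ref{sec:wegner}. However, the Lifshitz-tail step has a genuine gap that the paper goes to considerable lengths to circumvent.

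Your chain of inequalities
\[
\lambda_1(H^D_{\Lambda_L})\ge\lambda_1(H^N_{\Lambda_L})\ge\min_{i}E_0^N(\omega_i)\ge E_0+\min_i g\bigl(\operatorname{dist}(\omega_i,\mathcal C)\bigr)
\]
is correct as far as it goes, but the conclusion you draw from it is not. The event $\{\lambda_1(H^D_{\Lambda_L})\le E_0+\epsilon\}$ forces only $\min_i g(\operatorname{dist}(\omega_i,\mathcal C))\le\epsilon$, i.e.\ that \emph{at least one} $\omega_i$ is within $g^{-1}(\epsilon)$ of a corner, not that \emph{all} $\sim L^d$ of them are. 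In fact the single-cube Neumann bracketing lower bound is saturated the moment a single displacement sits in a corner, so it gives essentially no probabilistic smallness. The true difficulty, which this bracketing cannot see, is that even if every $\omega_i$ is exactly at a corner the ground-state energy of $H_{\omega,L}$ is close to $E_0$ only if the corners are chosen in the specific alternating pattern $\omega^*$ (up to translation). Proving that this periodic pattern is the unique minimizer among periodic configurations is the content of \cite{BLS2}, and turning that uniqueness into a probabilistic bound on finite boxes is the content of \cite{KN2}; both are genuinely nontrivial and are precisely where $d\ge2$ and $r<1/4$ enter. (You ascribe $d\ge2$ to the bubbles-to-corners lemma, but that lemma holds in all dimensions; $d\ge2$ is needed for uniqueness of the periodic minimizer, which fails for $d=1$.) The paper bridges to the discrete-distribution result of \cite{KN2} via the form comparison $H_{\omega,L}-E_0\ge C^{-1}(H_{c(\omega),L}-E_0)$ of Proposition~\ref{prop:formcomp}; your argument, as written, neither reproduces nor replaces this step. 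A further warning sign is that your argument ``proves'' the classical Lifshitz exponent $d/2$, whereas the paper only obtains $1/2$ and explicitly states that improving to $d/2$ is open.
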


The exact meaning of the latter will be recalled below.

The RDM represents a natural way to model a solid with structural disorder. It can be considered as intermediate between the {\it Anderson model}, which has no structural disorder and randomness instead appears in the form of coupling constants at the single site terms, and the {\it Poisson model}, where the structure of the medium is entirely dissolved by placing single-site scatterers at the points of a Poisson process. This point of view has recently been supported by an investigation of the integrated density of states of the RDM in \cite{F} and \cite{FU}.

It is physically expected that multi-dimensional random Schr{\"o}dinger operators such as the three models mentioned above should exhibit {\it localization} at energies near the bottom of the spectrum, while extended states should exist at high energy. For mathematicians the latter remains an open problem, while localization at low energy for the Anderson model (both in the continuum and in the original lattice setting) and, with more effort and more recently, for the Poisson model has been proven rigorously. The references given in Section~\ref{sec:msa} below may be used as a starting point into the enormous literature on localization for the Anderson model. For the multi-dimensional Poisson model localization at low energy has been proven, separately for the case of positive and negative single-site potentials, in \cite{GHK1} and \cite{GHK2}.

A localization proof for the RDM provides additional challenges and
does not follow from the methods alone which have led to proofs for
the Anderson and Poisson models. The main reason for this is that the
RDM does not have any obvious monotonicity properties with respect to
the random parameters. Such properties are frequently used in an
essential way in the theory of the Anderson model and, to some extend,
can also be exploited for the Poisson model.

This becomes apparent most immediately when attempting to characterize the bottom of the spectrum for these models. For the Anderson model with sign-definite potential $q$ this corresponds to choosing all couplings minimal (if $q$ is positive) or maximal (if $q$ is negative), respectively. For the Poisson model the bottom of the spectrum is $0$ if $q$ is positive (due to large regions devoid of any Poisson points) and $-\infty$ if $q$ is negative (due to dense clusters of Poisson points).

Identifying a mechanism which characterizes the bottom of the spectrum, a crucial preliminary step towards the localization question, poses a non-trivial challenge for the RDM. It is not at all obvious which configurations $\omega = (\omega_i)$ of the displacements should characterize minimal energy.

Far more than a characterization of the spectral minimum is needed for a proof of localization. From the theory of the Anderson model it is well known that sufficient ingredients are smallness of the integrated density of states (IDS) at the bottom of the spectrum (e.g.\ in the form of {\it Lifshitz tails}), and sensitivity of the spectrum to the random parameters (for example in the form of {\it spectral averaging} or {\it Wegner estimates}). The usual approaches to verifying both of these ingredients make heavy use of monotonicity properties as well, providing further obstacles to a localization proof for the RDM, where such properties are not apparent.

There are two previous works in which localization properties of modified versions of the RDM (\ref{eq:hamiltonian}), (\ref{eq:potential}) have been shown. In \cite{Klopp}, Klopp considered a semiclassical version $-h^2 \Delta + V_{\omega}$ of (\ref{eq:hamiltonian}) and was able to show the existence of a localized region in the spectrum for sufficiently small value of the semiclassical parameter $h$. In this regime neither Lifshitz tails nor an exact characterization of the spectral minimum are needed for the localization proof. More recently, Ghribi and Klopp \cite{GhKl} considered an RDM of the form (\ref{eq:hamiltonian}) with an additional periodic background potential. For a generic non-zero choice of the latter and sufficiently small displacement vectors $\omega_i$, they use first order perturbation arguments to recover monotonicity properties which lead to Lifshitz tails as well as a Wegner estimate, and thus localization.

Our goal here is to prove localization for the model (\ref{eq:hamiltonian}), (\ref{eq:potential}) without working in the semi-classical regime or modifying the background. This means that we can not hope to reveal any monotonicity properties by exclusively using perturbative arguments, at least not easily and not with first order perturbation theory.

We will make central use of symmetry properties. In particular, we will assume throughout that the real-valued single-site potential $q$ is reflection symmetric in each coordinate, i.e.\
\begin{equation} \label{eq:symm}
q(\ldots, x_{k-1}, -x_k,\ldots, x_{k+1}, \ldots) = q(\ldots, x_{k-1}, x_k, x_{k+1}, \ldots), \quad k=1,\ldots,d.
\end{equation}

Due to the use of some higher order perturbation theory we will need some smoothness of $q$ and for convenience assume that
\begin{equation} \label{eq:qsmooth}
q\in C^{\infty}(\R^d),
\end{equation}
which could be substantially weakened. We also assume that
\begin{equation} \label{eq:qsupp}
\mbox{supp}\,q \subset [-r,r]^d \quad \mbox{for some $r<1/4$}.
\end{equation}

The latter is best understood in conjunction with the following
assumption on the displacement parameters $\omega = (\omega_i)_{i\in
  \Z^d}$: They are i.i.d.\ $\R^d$-valued random variables, distributed
according to a measure $\mu$ satisfying
\begin{equation}
  \label{eq:rhosupp}
{\mathcal C}  \subset \mbox{supp}\,\mu \subset [-d_{max},d_{max}]^d,
\end{equation}
where $d_{max} = \frac{1}{2}-r$. Here ${\mathcal C} := \{(\pm d_{max}, \ldots, \pm d_{max})\}$ denotes the $2^d$ corners of the cube $[-d_{max}, d_{max}]^d$. For the proof of our main result, Theorem~\ref{thm:main} below, we will need some smoothness of the distribution of $\mu$. A convenient assumption is that $\mu$ has a density $\rho$, i.e. $\mu(B) = \int_B \rho(\omega_0)\,d\omega_0$ for all Borel subsets $B$ of $[-d_{max},d_{max}]^d$, and that
\begin{equation} \label{eq:rhosmooth}
\rho \in C^1([-d_{max},d_{max}]^d).
\end{equation}
However, as will be discussed in Section~\ref{sec:conclusion}, our proofs work for considerably more general distributions. In particular, our proof of Theorem~\ref{thm:main} below only uses that $\mu$ has a $C^1$-density in a neighborhood of the corners ${\mathcal C}$ and can be arbitrary away from this neighborhood.

By (\ref{eq:qsupp}) and (\ref{eq:rhosupp}), the centers of the
``bubbles'' $q(x-i-\omega_i)$ can move all the way into the
corners $i + {\mathcal C}$ of
$i+[-d_{max},d_{max}]^d$, and the supports of the bumps stay, up
to touching, mutually disjoint. See Figure~\ref{fig1} for a typical
configuration. Note that only $r<1/2$ would be needed to do this in a
non-trivial way (give the bubble space to move), but that we will need
$r<1/4$ for technical reasons to make use of results in \cite{BLS2} and \cite{KN2}, see the discussion preceding Corollary~\ref{le:3} below.

\vspace{.3cm}

\begin{figure}[h]
  \centering
  \includegraphics[width=5cm]{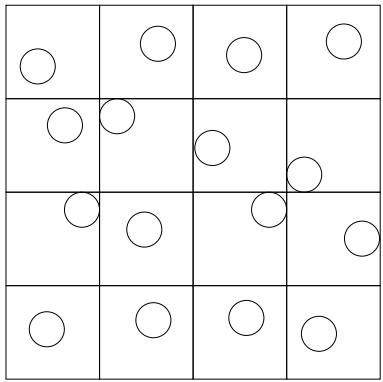}
  \caption{A typical configuration}
  \label{fig1}
\end{figure}

\vspace{.3cm}

The random operator $H_{\omega}$ is ergodic with respect to shifts in $\Z^d$ and thus, by the general theory of ergodic operators (see e.g.~\cite{Stollmann}), its spectrum is almost surely deterministic, i.e.\ there is a $\Sigma \subset \R$ such that
\[ \sigma(H_{\omega}) = \Sigma \quad \mbox{almost surely}.\]

Under the above assumptions it was shown in \cite{BLS1} that among all configurations $\omega$ a configuration with lowest spectral minimum is given by $\omega^* = (\omega^*_i)_{i\in \Z^d}$ where
\begin{equation} \label{eq:omegastar}
\omega^*_i = ((-1)^{i_1} d_{max}, \ldots, (-1)^{i_d} d_{max}), \quad i=(i_1,\ldots,i_d) \in \Z^d.
\end{equation}

As discussed in \cite{BLS1}, under the condition (\ref{eq:rhosupp}) this also characterizes the minimum of the almost sure spectrum of the RDM, i.e.\
\[ E_0 := \inf \Sigma = \inf \sigma(H_{\omega^*}).\]
In the configuration $\omega^*$ the single site potentials in $V_{\omega^*}$ form densest possible clusters where $2^d$ neighboring bumps move into adjacent corners of their unit cells, see Figure~\ref{fig2}.

\vspace{.3cm}

\begin{figure}[h]
  \centering
  \includegraphics[width=5cm]{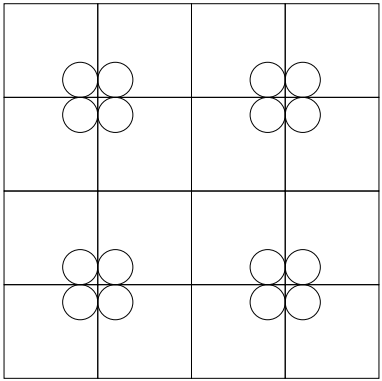}
  \caption{The minimizing configuration}
  \label{fig2}
\end{figure}

\vspace{.3cm}

Thus, for the set of assumptions listed above, we have an answer to the preliminary question of characterizing $E_0 = \inf \Sigma$. For a proof of localization we will need much more information.  In particular, we will need quantitative bounds on the probability that other configurations have spectral minimum close to $E_0$. This will require one more condition on $q$.

To state this condition, we need to introduce Neumann operators where a single bump is placed into a unit cell at varying position. For this, and for later, set
\[ \Lambda_r = \Lambda_r(0) = \left(-\frac{r}{2}, \frac{r}{2}\right)^d, \quad \Lambda_r(x) = \Lambda_r+x, \quad \chi_x = \chi_{\Lambda_1(x)},\]
the characteristic function of the unit cube centered at $x$.

For $a\in [-d_{max},d_{max}]^d$ let
\begin{equation} \label{eq:singlesiteop}
H_{\Lambda_1}^N(a) = -\Delta + q(x-a)  \quad \mbox{on $L^2(\Lambda_1)$}
\end{equation}
 with Neumann boundary condition. Finally, let
\[ E_0(a) = \inf \sigma(H_{\Lambda_1}^N(a)) \]
be the lowest eigenvalue of $H_{\Lambda_1}^N(a)$. Note that, by symmetry of $q$, $E_0(a)$ is symmetric with respect to all the coordinate hyperplanes $a_k=0$, $k=1,\ldots,d$.

In \cite{BLS1} the following alternative was established (only requiring boundedness of $q$, not smoothness):
\begin{itemize}
\item[(i)] Either $E_0(a)$ is strictly maximized at $a=0$ and strictly minimized in the corners ${\mathcal C}$ of $[-d_{max},d_{max}]^d$,
\item [(ii)] or $E_0(a)$ is identically zero, and the ground state of $H_{\Lambda_1}^N(a)$ is constant near the boundary of $\Lambda_1$.
\end{itemize}

From this and the symmetries of $q$ it is easy to see, e.g.\ \cite{BLS1}, that $\omega^*$ given by (\ref{eq:omegastar}) is a spectrally minimizing configuration and that the almost sure spectral minimum $E_0$ of $H_{\omega}$ is given by the minimum value of $E_0(a)$, i.e.\ its value at $a\in {\mathcal C}$.

While it is possible to construct non-vanishing $q$ where (ii) holds, e.g.\ our remarks in Section~\ref{sec:discussion}, alternative (i) is the generic case. It holds if $q\not= 0$ is sign-definite (since then the ground state energy $0$ of the Neumann Laplacian $-\Delta_{\Lambda}^N$ must be shifted up or down), but also for generic sign-indefinite $q$.

We are now able to state our main result on localization. Here, for
a self-adjoint  operator $H$ and Borel function $g$ we define $g(H)$
by the functional calculus. In particular, $\chi_I(H)$ denotes the
spectral projection onto $I$ for $H$.

\begin{theorem} \label{thm:main} Assume that $d\ge 2$, $\omega$ and
  $q$ satisfy (\ref{eq:symm}) to (\ref{eq:rhosmooth}), and that
  $E_0(a)$ does not vanish identically in $a\in [-d_{max},d_{max}]^d$.

  Then there exists $\delta>0$ such that $H_{\omega}$ almost surely
  has pure point spectrum in $I=[E_0, E_0+\delta]$ with exponentially
  decaying eigenfunctions. Moreover, $H_{\omega}$ is dynamically
  localized in $I$, in the sense that for every $\zeta <1$, there
  exist $C<\infty$ such that
  \begin{equation}
    \label{eq:dynloc}
    \E \left( \sup_{|g|\le 1} \|\chi_x g(H_{\omega}) \chi_I(H_{\omega})
      \chi_y\|_2^2 \right) \le C e^{-|x-y|^{\zeta}}
  \end{equation}
  for all $x,y\in \Z^d$. Here, the supremum is taken over all Borel
  functions $g:\R\to \C$ which satisfy $|g|\le 1$ pointwise.
\end{theorem}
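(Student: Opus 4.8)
The plan is to verify, on a one‑sided neighborhood of $E_0$, the two hypotheses that drive the multiscale analysis, and then invoke the standard machinery (see Section~\ref{sec:msa}) to extract spectral and dynamical localization. Throughout, for a cube $\Lambda_L$ which is a union of unit cells we write $H_{\omega,\Lambda_L}$ for the restriction of $H_\omega$ to $L^2(\Lambda_L)$ with suitable boundary conditions (Dirichlet for the multiscale step, Neumann where bracketing is used). Because $\mathrm{supp}\,q \subset [-r,r]^d$ and $|\omega_i|\le d_{max}=\tfrac12-r$, each bump $q(\,\cdot - i-\omega_i)$ stays inside the closed cell $\overline{\Lambda_1(i)}$, so $H_{\omega,\Lambda_L}$ depends only on $\{\omega_i : i\in\Lambda_L\}$ up to a boundary layer, and operators on disjoint cubes are independent — the independence-at-a-distance needed for the multiscale analysis. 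The standing hypothesis that $E_0(a)$ does not vanish identically places us in alternative (i) of \cite{BLS1}, so $E_0(a)$ is strictly minimized on the corners $\mathcal{C}$; Corollary~\ref{le:3} (which rests on \cite{BLS2}, \cite{KN2} and is the source of the restriction $r<1/4$) upgrades this to the quantitative ``bubbles tend to the corners'' bound, a lower bound on $E_0(a)-E_0$ in terms of the distance of $a$ to $\mathcal{C}$. With these at hand it suffices to prove, for $E$ in a fixed neighborhood of $E_0$: (I) an initial length scale estimate, and (II) a Wegner estimate.

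For (I) we extend the Lifshitz tail bound of \cite{KN2} to the present corner‑touching geometry and to densities that are merely $C^1$ near $\mathcal{C}$. The output is that $\mathbb{P}\big(\inf\sigma(H^N_{\omega,\Lambda_L})\le E_0+cL^{-2}\big)$ decays faster than any inverse power of $L$ (equivalently $N(E)\le \exp(-c(E-E_0)^{-d/2})$ as $E\downarrow E_0$). The mechanism is Neumann bracketing combined with the quantitative bound above: heuristically, a low‑lying eigenvalue in a cube of side $L$ forces the configuration there to imitate the periodic minimizer $\omega^*$ of \eqref{eq:omegastar} over a region of linear size comparable to $L$, which — the displacements being i.i.d.\ with a density — is an event of probability at most $\exp(-cL^d)$. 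Together with a Combes--Thomas estimate this yields the required finite‑volume resolvent decay at an initial scale.

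Estimate (II) is where the lack of monotonicity is felt and where we expect the real work. Since $\partial_{\omega_i}V_\omega=-\nabla q(\,\cdot - i-\omega_i)$ is sign‑indefinite, $H_\omega$ is not monotone in any displacement and the usual spectral‑averaging proof of a Wegner bound does not apply directly. Our route near $E_0$ has three steps. First, the finite‑volume Lifshitz estimate from (I) shows that on the event $\sigma(H_{\omega,\Lambda_L})\cap[E-\eta,E+\eta]\ne\emptyset$ with $E$ within $\delta$ of $E_0$, each interior displacement $\omega_i$ must be close to a corner of its cell and the local arrangement must resemble the minimizer $\omega^*$; so we may work near such a ``corner configuration'' $\omega^c$. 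Second, near $\omega^c$ we expand the low‑lying eigenvalue branches of $H_{\omega,\Lambda_L}$ by (higher‑order) perturbation theory — this is where the smoothness \eqref{eq:qsmooth} of $q$ is used — and transfer the monotonicity/convexity of the single‑site ground state energy $E_0(a)$ near $\mathcal{C}$, supplied by Corollary~\ref{le:3}, together with the reflection symmetry \eqref{eq:symm} of $q$ in each coordinate, to these branches; the upshot is that the relevant eigenvalue genuinely moves as a function of the displacements near $\omega^c$. Third, a spectral‑averaging argument — integrating over a single coordinate of a single displacement, using that $\rho\in C^1$ near $\mathcal{C}$ — then gives
\[
 \mathbb{P}\big(\mathrm{dist}(E,\sigma(H_{\omega,\Lambda_L}))\le\eta\big)\le C\,\eta^{\alpha}\,|\Lambda_L|^{q}
\]
for suitable $\alpha,q>0$, a Wegner estimate of the form required by the multiscale analysis.

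The principal obstacle is the middle step of (II): making the perturbation expansion of the finite‑volume eigenvalues quantitative and uniform in $L$, and showing that the favorable curvature of the \emph{single‑site} problem near the corners survives the assembly of many cells into a large box — that the coupling between neighboring cells and through the kinetic term does not spoil the transferred monotonicity. This is precisely why first‑order perturbation theory (as in the periodic‑background model of \cite{GhKl}) does not suffice here, and why $r<1/4$ is imposed. Granting (I) and (II), the bootstrap multiscale analysis of Section~\ref{sec:msa} applies on an interval $I=[E_0,E_0+\delta]$ for some $\delta>0$ and yields both pure point spectrum with exponentially decaying eigenfunctions and the strong dynamical localization bound \eqref{eq:dynloc}.
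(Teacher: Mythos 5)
The architecture you propose — initial length estimate from a Lifshitz bound, a Wegner estimate near $E_0$, then multiscale analysis — is indeed the paper's skeleton, and your account of (I) is roughly in the right spirit (though the paper's actual mechanism is a clean quadratic-form comparison $H_{\omega,L}-E_0 \geq C^{-1}(H_{c(\omega),L}-E_0)$ between the operator and the one with each $\omega_i$ snapped to its nearest corner, Proposition~\ref{prop:formcomp}, after which the discrete-distribution result of \cite{KN2} is invoked verbatim; also the parenthetical ``$N(E)\le\exp(-c(E-E_0)^{-d/2})$'' overstates what is proved — the exponent obtained is $1/2$, not $d/2$). But step (II) is where your proposal has a genuine gap, and you yourself flag it. You propose to expand low-lying eigenvalue branches of the \emph{finite-volume} operator $H_{\omega,\Lambda_L}$ near a corner configuration by higher-order perturbation theory, show they ``move'', and then spectrally average. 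Making that perturbation expansion quantitative and uniform in $L$ — your acknowledged ``principal obstacle'' — is not actually resolved in the paper, because the paper does not take that route at all. The higher-order perturbation theory is done \emph{only} for the single-cell Neumann operator $H_{\Lambda_1}^N(a)$: the differential equation of Lemma~\ref{perturbation} (second-order perturbation theory plus a boundary integration by parts) yields Corollary~\ref{cor:offcenterdecay}, which says $\nabla E_0(a)\neq0$ away from symmetry planes, and in particular near the corners ${\mathcal C}$. This non-vanishing gradient is the key new technical input you do not clearly identify (you attribute the quantitative bound to Corollary~\ref{le:3}, but that is the Lifshitz-tail corollary; the source is Corollary~\ref{cor:offcenterdecay}), and it is what turns strict minimization at the corners into a \emph{linear} lower bound $E_0(a)-E_0\gtrsim D(a)$.

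The transfer from one cell to the box of side $L$ is then achieved without any perturbation expansion on the large operator: since the bumps are confined to their cells and the boundary condition is Neumann, the quadratic form of $H_{\omega,L}$ decomposes exactly as a sum of single-cell forms, and Lemma~\ref{lem3} gives a cell-by-cell bound $-\langle\varphi,(\partial'_{c,a}H_i(a))\varphi\rangle \geq -(\partial'_{c,a}E_0(a))\|P_i(a)\varphi\|^2 - O(\|\overline{P}_i\varphi\|)$ that is uniform in $a$. Combined with Lemmas~\ref{lem1}--\ref{lem2} (which control, for low-energy $\psi$, the excited-state mass and the mass in cells whose displacement is far from a corner), this yields Proposition~\ref{prop:key}: the derivative of $H_{\omega,L}$ along the ``towards the closest corner'' vector field is uniformly positive on low-energy states. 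That positivity replaces the monotonicity used in the Anderson model, and the Wegner estimate then follows by the Gohberg--Krein formula and the $L^p$-theory of the spectral shift function as in \cite{HK}, \cite{CHN} — not by averaging a single displacement coordinate directly. In short: your high-level outline matches the paper, but your proposed Wegner mechanism (perturbation theory on the big box) is not the one that works, and the crucial single-cell ingredient (non-vanishing of $\nabla E_0(a)$ at ${\mathcal C}$, proved via Lemma~\ref{perturbation}) is not identified in your argument.
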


Note here that dynamical localization in physical sense is a special case of (\ref{eq:dynloc}), choosing $g(H) = e^{-itH}$ and taking the supremum over $t\in \R$.

The proof of Theorem~\ref{thm:main} proceeds via multiscale analysis and our main task will be to establish the two main ingredients into the multiscale analysis, i.e.\ a smallness bound on the probability that finite volume restrictions of $H_{\omega}$ have low lying eigenvalues (related to Lifshitz tails of the integrated density of states) and a Wegner estimate.

It is in the proof of these two ingredients where new ideas are needed. In both proofs we will use that alternative (i) can be strengthened if some smoothness is assumed for $q$: in this case it can be shown that the first partial derivatives of $E_0(a)$ are non-zero in all directions away from the symmetry planes. In particular, at its strict minima in the corners ${\mathcal C}$, the function $E_0(a)$ is not flat as it has non-vanishing gradient. The proof of this result, given in Section~\ref{sec:key}, starts from a second-order perturbation theory formula.

This can be considered as the crucial monotonicity property which makes the localization proof work. In the spectrally minimizing configuration $\omega^*$ of the RDM all bubbles sit in corners. Non-vanishing of the gradient of $E_0(a)$ in the corners will allow to gain quantitative control on how close the ground state energies for other configurations are to $E_0$.

First, in Section~3, this will lead to a Lifshitz tail bound with a proof which is
based on an argument in \cite{KN2}. This also uses the fact that under
alternative (i) the configuration $\omega^*$ is, up to translation,
the unique {\it periodic} configuration with spectral minimum $E_0$, a
result established in \cite{BLS2}, where it is also shown that this is
{\it not} true for $d=1$. In \cite{KN2} it was shown that this
uniqueness result leads to a Lifshitz tail bound for the IDS near
$E_0$ if the distribution of the $\omega_i$ is discrete and contains
all corners ${\mathcal C}$. In fact, for
technical reasons the bound obtained there is weaker than the
``classical'' Lifshitz tail, but it is strong enough for a
localization proof (if one also has a Wegner estimate).  In
Section~\ref{sec:lifshitz} we will show how non-vanishing of $\nabla
E_0(a)$ in the corners can be used to get same bound on the IDS under
the assumptions considered here, i.e.\ with distribution satisfying
(\ref{eq:rhosupp}).

    What allows us to push the localization argument through is that under assumption (\ref{eq:rhosupp}) we can also prove a Wegner estimate, see Section~\ref{sec:wegner}. This will again use that $\nabla E_0(a) \not= 0$ at the corners, which will provide us with a measure for how much the ground state energy of finite volume restrictions of $H_{\omega}$ is pushed upwards if the bubbles $q(x-i-\omega_i)$ move away from the corners towards the center $i$ of $\Lambda_1(i)$. This is technically implemented in Proposition~\ref{prop:key} below in form of positivity of the derivative of $H_{\omega}$ with respect to a suitable vector field, whose proof is close to an argument previously used in \cite{KNNN} for operators with random magnetic fields. Based on this result, we are able to prove a Wegner estimate by modifying an argument developed in \cite{CHN} and \cite{HK} in the context of the Anderson model, which itself is a modification of the original argument due to Wegner. In particular, we obtain the correct (linear) volume dependence and can conclude H{\"o}lder continuity of the IDS as a by-product.

We finally mention that the connection between monotonicity properties of $H_{\Lambda_1}^N(a)$ and the RDM is made through surprisingly simple Neumann bracketing arguments, making crucial use of the variational characterization of the lowest eigenvalue as the minimum of the quadratic form. Thus, we are able to deduce monotonicity properties of a model with infinitely many parameters from a one-parameter model. This trick, employed in Sections~\ref{sec:lifshitz} and \ref{sec:wegner}, was previously used in \cite{KN1} to study the Anderson model with sign-indefinite single-site potential. It is the main reason why symmetry of the single-site potential is important for us.

Our concluding Section~\ref{sec:conclusion} serves two purposes. First, in Section~\ref{sec:msa}, we briefly discuss how the Lifshitz tail bound and Wegner estimate obtained here lead to a proof of spectral and dynamical localization via multi-scale analysis. In Section~\ref{sec:discussion} we mention some generalizations, related results and open problems.

\section{Bubbles tend to the corners} \label{sec:key}

In this section we will prove a property of the derivatives of $E_0(a)$, the ground state of the Neumann operators $H_{\Lambda_1}^N(a)$ defined in (\ref{eq:singlesiteop}), which is crucial for all our later arguments. The main result of \cite{BLS1}, i.e.\ that under the generic alternative (i) the function $E_0(a)$ is strictly minimized in the corners ${\mathcal C}$, may be dubbed as ``bubbles tend to the corners''. The seemingly small but important improvement to be shown here means that ``as bubbles tend to the corners the rate of change
of $E_0(a)$ does not vanish''.

In fact, in \cite{BLS1} two methods were developed to prove that the
minimium of $E_0(a)$ is in $\mathcal C$. One method, used to prove
Theorem~1.3 in \cite{BLS1}, applied to Neumann operators on
rectangular domains as considered here and is mostly based on
exploiting symmetries of the domain and the potential. A second,
very different, method was developed in \cite{BLS1} to show the
phenomenon that ``bubbles tend to the boundary'' for Neumann
operators on general smooth domains and with smooth potentials, see
the proof of Theorem~1.4 in \cite{BLS1}. What we do here is to apply
the second method in the setting of rectangular domains. That
``smooth methods'' apply to rectangles is possible due to our use of
Neumann boundary conditions, which allow to get smooth extensions of
the ground state eigenfunction by reflection. The specific geometry
of rectangles allows us to push the method used in Theorem~1.4 of
\cite{BLS1} further and to obtain the non-vanishing of derivatives
of $E_0(a)$ in the corners.

\subsection{Basic smoothness properties} \label{sec:basicsmooth}

Throughout this section we fix an open rectangular parallelepiped $D \subset \R^d$ and consider operators on $L^2(D)$. Let $q$ be a real-valued smooth function with closed support in $D$, i.e.\ $q\in C^{\infty}(D)$ such that $q(x)=0$ for $x$ in a neighborhood of $\partial D$.
Consider the quadratic form
$$
\Vert \nabla u \Vert_2^2 + \langle u , qu \rangle
$$
where we use the symbol
$$
\langle u,v \rangle = \int_D \overline{u(x)} v(x) dx \ .
$$
This quadratic form is a closed form on $H^1(D)$ and defines a
unique self--adjoint operator
$$
H :=-\Delta_N + q
$$
where $\Delta_N$ is the Neumann Laplacian. The eigenvalues of this
operator are discrete, have finite degeneracy and tend to infinity.
It will be important for us that the eigenfunctions are regular up to
and including the boundary, i.e.,  $C^\infty(\overline D)$. We state
this fact as a lemma.
\begin{lemma} \label{lem:efsmooth}
The eigenfunctions of the operator $H$ can be extended to a
neighborhood of $D$ where they are infinitely often differentiable.
\end{lemma}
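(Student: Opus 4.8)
The plan is to establish interior smoothness and boundary regularity separately, then combine them with a reflection argument that uses the special geometry of the rectangular parallelepiped $D$ together with the fact that $q$ vanishes near $\partial D$. First I would recall that any eigenfunction $u$ of $H=-\Delta_N+q$ satisfies the elliptic equation $-\Delta u = (E-q)u$ weakly in $D$. Since $q\in C^\infty(D)$ and $u\in H^1(D)\subset L^2_{\mathrm{loc}}$, standard interior elliptic regularity (bootstrapping: $u\in H^1\Rightarrow (E-q)u\in H^1\Rightarrow u\in H^3$, and so on) shows $u\in C^\infty(D)$. The only issue is behavior up to and across $\partial D$.

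The key observation is that because $q(x)=0$ in a neighborhood of $\partial D$, near the boundary $u$ solves the \emph{free} equation $-\Delta u = E u$ with Neumann data $\partial_\nu u = 0$ on $\partial D$. Now exploit that $D$ is a rectangular parallelepiped: each face of $D$ lies in a coordinate hyperplane $\{x_k = c\}$. Across such a face I would extend $u$ by even reflection in the variable $x_k$. Because $u$ satisfies a homogeneous Neumann condition $\partial_{x_k} u = 0$ on that face, the even reflection $\tilde u$ is $C^1$ across the face, and since $-\Delta$ is invariant under the reflection $x_k \mapsto 2c - x_k$ while $q$ is identically zero there, $\tilde u$ is a weak (hence, by interior regularity again, classical $C^\infty$) solution of $-\Delta \tilde u = E\tilde u$ in a full neighborhood of the (interior of the) face. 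Iterating this reflection across all the faces — and, at edges and corners of $D$, composing reflections in several coordinates, which is consistent precisely because the faces meet orthogonally in coordinate hyperplanes — produces a smooth extension of $u$ to an open neighborhood of $\overline D$. This is exactly where rectangularity is essential: for a general smooth domain no such reflection trick is available, but the paper only needs it for rectangles.

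I expect the main technical point (not really an obstacle, but the step requiring care) to be the behavior near the lower-dimensional strata of $\partial D$ — edges and corners — where one must check that the successive reflections in distinct coordinates commute and glue to a single well-defined smooth function. This is fine because the faces through a given edge or corner lie in mutually orthogonal coordinate hyperplanes, so the group generated by the corresponding reflections acts properly and the even-extension formula is unambiguous; one then invokes interior elliptic regularity for $-\Delta - E$ on the extended (open) neighborhood to upgrade the $C^1$-gluing to $C^\infty$. A secondary point worth a remark is that one should first know $u$ is at least continuous up to $\partial D$ to make the reflection meaningful; this follows from elliptic regularity up to the boundary for the Neumann problem with smooth ($C^\infty$, here even zero) coefficients near $\partial D$, or alternatively one can phrase the whole argument at the level of the weak formulation and never refer to pointwise boundary values until smoothness has been established. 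Either way, the conclusion is that every eigenfunction extends to a $C^\infty$ function on a neighborhood of $\overline D$, as claimed.
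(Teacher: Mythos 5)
Your proposal is correct and takes essentially the same approach as the paper: reflect the eigenfunction (and implicitly the potential) repeatedly across the coordinate-hyperplane faces of $D$, using the Neumann condition to obtain a weak solution of the same eigenvalue equation on a larger region, and then invoke elliptic regularity to conclude $C^\infty$ smoothness. The paper phrases this as reflecting both $q$ and $u$ to all of $\R^d$ and noting the reflected potential is $C^\infty$ (which rests exactly on your observation that $q$ vanishes near $\partial D$), so the two arguments coincide.
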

\begin{proof}
By reflecting the potential and eigenfunctions repeatedly across the boundary one
obtains a generalized eigenfunction with the same eigenvalue on the whole space
$\R^d$. Since the potential is $C^\infty$ it follows, by elliptic
regularity, that the generalized eigenfunctions are in $C^\infty(\R^d)$.
\end{proof}

We set for $a \in \R^d$
$$
q_a(x)= q(x-a) \ .
$$
The set of points $a$ for which the support of $q_a$ is a subset of
$D$ is denoted by $G$. This is an open rectangle as well. Let
$$
H_a := -\Delta_N +q_a,
$$
which is short for the notation $H_D^N(a)$ used elsewhere in this
paper, and denote its eigenvalues in increasing order and counted
with multiplicity by
$$
E_n(a) \ , \ n=0,1,2, \dots
$$
and the corresponding real-valued normalized eigenfunctions as
$$
u_n(x,a) \ .
$$
In what follows we often denote the extension of the function to a
larger set by the same symbol.
\begin{lemma}[\bf{Differentiability of eigenvalue and eigenfunction}] \label{lem:diff}
The eigenvalue $E_0(a)$ as well as the eigenfunction $u_0(\cdot,a)$ are (as an $L^2(D)$-valued function)
infinitely often differentiable in a neighborhood of $G$. In
particular these two functions are in $C^\infty(\overline G)$.
\end{lemma}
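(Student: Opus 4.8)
The statement to prove is Lemma~\ref{lem:diff}: smooth dependence of the lowest Neumann eigenvalue $E_0(a)$ and its normalized eigenfunction $u_0(\cdot,a)$ on the displacement parameter $a$, ranging over a neighborhood of the open rectangle $G$. The plan is to reduce the $a$-dependence to a question about a smooth family of operators and then invoke the standard analytic (here $C^\infty$) perturbation theory of Kato, together with the fact that $E_0(a)$ is a simple isolated eigenvalue.

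First I would argue that $E_0(a)$ is simple and isolated from the rest of the spectrum. Simplicity of the lowest Neumann eigenvalue is standard (the ground state can be chosen strictly positive, so there is a one-dimensional eigenspace; this uses that the form domain $H^1(D)$ contains $|u|$ whenever it contains $u$, and a connectedness/Perron--Frobenius type argument). Isolation then follows since the spectrum of $H_a$ is discrete with finite multiplicities. This is the input that lets perturbation theory produce a \emph{smooth} (not merely continuous) branch for the ground state.

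Second, I would set up the family in a fixed Hilbert space. The natural move is to pull back by the translation $x\mapsto x-a$, but translation does not preserve $D$, so instead I would keep the operator on $L^2(D)$ and simply observe that the form
$$
\mathfrak{h}_a(u) = \|\nabla u\|_2^2 + \langle u, q_a u\rangle, \qquad u\in H^1(D),
$$
has $a$-independent form domain $H^1(D)$, and that $a\mapsto q_a = q(\cdot-a)$ is a $C^\infty$ map from a neighborhood of $G$ into, say, $C^\infty_0(D)\subset L^\infty(D)$ (since $q$ is smooth with support compactly contained in $D$, for $a$ near $G$ the translate $q_a$ still has support in $D$, and all its derivatives depend smoothly on $a$ by the chain rule). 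Hence $a\mapsto \mathfrak{h}_a$ is a holomorphic (real-analytic, in particular $C^\infty$) family of type (a) in Kato's sense: the perturbation $q_a - q_{a_0}$ is bounded, form-small, with smooth $a$-dependence. By Kato's theory (Chapter VII), the resolvent $(H_a - z)^{-1}$ is jointly smooth in $(a,z)$ on the resolvent set, and near the simple eigenvalue $E_0(a_0)$ the Riesz projection
$$
P_0(a) = -\frac{1}{2\pi i}\oint_\Gamma (H_a - z)^{-1}\,dz,
$$
with $\Gamma$ a fixed small circle around $E_0(a_0)$, is $C^\infty$ in $a$ with values in bounded operators on $L^2(D)$; it remains rank one for $a$ near $a_0$. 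Then $E_0(a) = \tr(H_a P_0(a))$ — or, more cleanly, $E_0(a) = \langle v_0, H_a v_0\rangle$ where $v_0(a) = P_0(a) u_0(\cdot,a_0)/\|P_0(a)u_0(\cdot,a_0)\|$ — is a $C^\infty$ function of $a$, and $u_0(\cdot,a) = v_0(a)$ (up to the sign fixed by $\langle u_0(\cdot,a), u_0(\cdot,a_0)\rangle > 0$) is a $C^\infty$ $L^2(D)$-valued function of $a$. Since $a_0$ was arbitrary in a neighborhood of $\overline G$, and since smoothness up to the closed rectangle $\overline G$ follows because the whole construction is valid on an open neighborhood of $\overline G$ (the support condition on $q$ gives room), we conclude $E_0, u_0 \in C^\infty(\overline G)$.

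The main obstacle — and really the only nontrivial point — is guaranteeing simplicity of $E_0(a)$ \emph{uniformly} for $a$ in a neighborhood of $\overline G$, so that the rank-one Riesz projection persists and the branch does not cross another eigenvalue; this is exactly where the ground-state positivity argument is needed rather than generic perturbation theory (which would only give a smooth branch of a possibly multiple eigenvalue, with eigenfunctions not necessarily smooth in $a$). Everything else is bookkeeping: checking that $a\mapsto q_a$ is smooth as an $L^\infty$-valued (indeed $C^k$-valued) map, which is immediate from $q\in C^\infty$ with compact support in $D$, and citing Kato's holomorphic-family machinery. One should also note that the regularity \emph{in $x$} of $u_0(\cdot,a)$, i.e.\ that each $u_0(\cdot,a)\in C^\infty(\overline D)$, is already handled by Lemma~\ref{lem:efsmooth}; the present lemma only concerns the dependence on $a$.
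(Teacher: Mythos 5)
Your proposal is correct and follows essentially the same route as the paper: establish that $E_0(a)$ is a simple, locally uniformly isolated eigenvalue, and then derive smoothness of the Riesz projection $P_0(a)$ and hence of $E_0(a)$ and $u_0(\cdot,a)$ from the smooth $a$-dependence of the bounded perturbation $q_a$. The paper makes the resolvent-difference identity explicit and iterates it by hand to get $C^\infty$-dependence (then reads off $E_0(a)$ from $(H_a-z)^{-1}P_0(a)=(E_0(a)-z)^{-1}P_0(a)$), whereas you invoke Kato's holomorphic-family-of-type-(a) machinery more abstractly, but the underlying argument is the same.
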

\begin{proof} Continuity of the eigenvalues was proved in Lemma 2.1 in \cite{BLS1}. Since the ground
state eigenvalue is not degenerate, $E_0(a)$ is strictly separated from the rest of
the spectrum locally uniformly in $a$. Hence, there is a circle $C$
with center $E_0(a)$ which is strictly separated from the rest of
the spectrum, locally uniformly in $a$.
Using the formula, which holds for all $z$ in the intersection of the resolvent sets of $H_a$ and $H_{a'}$,
\begin{equation} \label{resolventformula}
(H_a -z)^{-1} -  (H_{a'} -z)^{-1} = (H_a -z)^{-1}\left[q_{a'} -q_a
\right] (H_{a'} -z)^{-1} \ ,
\end{equation}
we learn that the  projection onto the
eigenfunction $u_0(x,a)$,
$$
P_0 (a) = \frac{1}{2\pi i} \oint_C (H_a-z)^{-1} dz \ ,
$$
is also continuous.
Repeated use of (\ref{resolventformula}) shows that
the projection $P_0(a)$ can be arbitrarily often differentiated with respect to $a$.
Note that formula (\ref{resolventformula})  holds for all positions of the
potential, in particular, the support of the potential does not have
to be in $D$.   The
eigenvalue equation
$$
(H_a -z)^{-1} P_0(a) = (E_0(a) -z)^{-1} P_0(a)
$$
now shows that $E_0(a)$, in turn, can also be differentiated as often
as we please.
\end{proof}

Given the above smoothness properties and the non-degeneracy of $E_0(a)$ we can derive the following first and second order perturbation formulas for $E_0(a)$,
\begin{equation} \label{eq:fopt}
\partial_{a_j} E_0(a) = - \langle u_0, (\partial_{x_j} q_a) u_0 \rangle,
\end{equation}
and
\begin{equation} \label{eq:sopt}
\partial_{a_j}^2 E_0(a) = \langle u_0, (\partial_{x_j}^2 q_a) u_0\rangle - 2 \sum_{k\not= 0} \frac{\langle u_0, (\partial_{x_j} q_a) u_k\rangle^2}{E_k-E_0},
\end{equation}
for $j=1,\ldots,d$. This is done in complete analogy to the better known case of non-degenerate eigenvalues of operators of the form $A+\lambda B$, with (\ref{eq:fopt}) corresponding to the Feynman-Hellmann formula, see e.g.\ Section~2.3 of \cite{BLS1}.

\subsection{Second order perturbation theory}

Making notations more explicit, we now write
\[ D = (\alpha_1, \beta_1) \times D', \]
where
\[ D' = (\alpha_2, \beta_2) \times \ldots \times (\alpha_d, \beta_d) \subset \R^{d-1}.\]
Then $G = I \times G'$, with an open interval $I=(-\delta_1,\delta_2)$, $\delta_1>0$, $\delta_2>0$, and an open rectangular parallelepiped $G' \subset \R^{d-1}$.

We will consider the dependence of $E_0(a_1,\ldots,a_d)$ on the first variable $a_1 \in I$ at fixed value of $(a_2,\ldots,a_d) \in G'$. Throughout this subsection we will abuse notation and write $q_{a_1} := q_a$, $H_{a_1} := -\Delta_N + q_{a_1}$, as well as $E_n(a_1)$ and $u_n(x,a_1)$ for its eigenvalues and eigenfunctions.

For simplicity of presentation we focus on the variable $a_1$. Clearly, the arguments below also apply to the dependence of $E_0$ on each of the other variables $a_2$, \ldots, $a_d$.

The following lemma provides us with a differential equation for the
eigenvalue $E_0(a_1)$.

\begin{lemma} \label{perturbation}
The ground state energy satisfies the equation
\begin{equation} \label{eq:bigone}
E_0'' - 4\langle u_0, \partial_1 u_0\rangle E_0' =
 -  2\sum_{k \not= 0} \frac{B(u_k, \partial_1
u_0)^2}{E_k-E_0} \ .
\end{equation}
\end{lemma}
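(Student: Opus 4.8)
The plan is to derive equation \eqref{eq:bigone} by starting from the second-order perturbation formula \eqref{eq:sopt} for $j=1$ and rewriting its two terms using the structure of the problem. First I would introduce the symmetric bilinear form
\[ B(u,v) = \langle \nabla u, \nabla v\rangle + \langle u, q_{a_1} v\rangle, \]
whose diagonal is the quadratic form of $H_{a_1}$, and record the basic identities: $B(u_0, u_0) = E_0$, $B(u_k, u_0) = E_0 \langle u_k, u_0\rangle = 0$ for $k\ne 0$ (so in particular $B(u_k,\partial_1 u_0)$ makes sense as the natural object replacing $\langle u_0, (\partial_{x_1}q_{a_1})u_k\rangle$), and the normalization $\langle u_0, u_0\rangle = 1$, which upon differentiating in $a_1$ gives $\langle \partial_1 u_0, u_0\rangle = 0$ — wait, more carefully, $2\langle \partial_1 u_0, u_0\rangle = 0$ only if $u_0$ is real and normalized, which it is. Hmm, but then the term $\langle u_0, \partial_1 u_0\rangle$ in \eqref{eq:bigone} would vanish; the point must be that here $\partial_1$ denotes the partial derivative in the \emph{spatial} variable $x_1$ applied to the extended eigenfunction, not the parameter derivative $\partial_{a_1}$. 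So I would carefully distinguish $\partial_1 = \partial_{x_1}$ from $\partial_{a_1}$ throughout.

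The key algebraic step is to re-express the numerators $\langle u_0, (\partial_{x_1}q_{a_1})u_k\rangle$. Since $q_{a_1}(x) = q(x-a_1 e_1)$, we have $\partial_{a_1} q_{a_1} = -\partial_{x_1} q_{a_1}$. I would then use the eigenvalue equation $(-\Delta_N + q_{a_1})u_k = E_k u_k$ together with an integration by parts: differentiating the weak eigenvalue equation for $u_0$ with respect to $x_1$ — legitimate because by Lemma~\ref{lem:efsmooth} the eigenfunctions extend smoothly across $\partial D$, so $\partial_1 u_0 \in H^1$ and boundary terms from reflection are controlled — yields
\[ (-\Delta + q_{a_1})(\partial_1 u_0) = E_0\, \partial_1 u_0 - (\partial_1 q_{a_1}) u_0 \]
in the interior. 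Pairing this against $u_k$ and integrating by parts gives $B(u_k, \partial_1 u_0) - E_0\langle u_k,\partial_1 u_0\rangle = -\langle u_k, (\partial_1 q_{a_1})u_0\rangle$ up to a boundary term $\int_{\partial D} u_k\, \partial_\nu(\partial_1 u_0)$; I expect this boundary term to either vanish or combine with the Neumann condition. Similarly I would rewrite the first term $\langle u_0, (\partial_{x_1}^2 q_{a_1})u_0\rangle$ of \eqref{eq:sopt} by integrating by parts twice against the smooth extension, converting it into $-2\langle \partial_1 u_0, (\partial_1 q_{a_1})u_0\rangle$ plus boundary contributions, and then relate $\langle \partial_1 u_0, (\partial_1 q_{a_1}) u_0\rangle$ back to $E_0'$ via the Feynman–Hellmann formula \eqref{eq:fopt}, which reads $E_0'(a_1) = -\langle u_0, (\partial_{x_1}q_{a_1})u_0\rangle = \langle u_0,(\partial_{a_1}q_{a_1})u_0\rangle$. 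Collecting: the first term contributes $E_0'$-type pieces, the sum over $k\ne 0$ becomes $-2\sum_{k\ne0} B(u_k,\partial_1 u_0)^2/(E_k-E_0)$, and the cross-terms assemble into the $-4\langle u_0,\partial_1 u_0\rangle E_0'$ correction, yielding \eqref{eq:bigone}.

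The main obstacle I anticipate is the careful bookkeeping of the boundary terms generated by the integrations by parts on the rectangle $D$: because $u_0$ satisfies a \emph{Neumann} condition, $\partial_\nu u_0 = 0$ on $\partial D$, but $\partial_1 u_0$ does \emph{not} satisfy a Neumann condition, so integrating by parts with $\partial_1 u_0$ as one factor produces genuine boundary integrals over the faces of $D$. The resolution should exploit the reflection-extension from Lemma~\ref{lem:efsmooth}: on the two faces $\{x_1 = \alpha_1\}$ and $\{x_1=\beta_1\}$ the relevant boundary terms vanish by the Neumann condition on $u_0$ (since $\partial_1 u_0 = \partial_\nu u_0 = 0$ there), and on the remaining faces $\{x_j = \alpha_j\}$, $\{x_j=\beta_j\}$ for $j\ne 1$, one uses that $q$ vanishes near $\partial D$ together with $\partial_\nu u_0 = 0$ and $\partial_\nu(\partial_1 u_0) = \partial_1(\partial_\nu u_0) = 0$ (the $x_1$- and $x_j$-derivatives commute on the smooth extension). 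I would organize the proof so that each integration by parts is immediately paired with the vanishing boundary condition that kills its boundary term, and only then collect the interior contributions; the algebra of combining the $\sum_k$ term with the Feynman–Hellmann substitution is then a short computation.
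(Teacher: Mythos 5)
Your definition of $B$ is the wrong one, and with it the identity changes. The paper defines $B(u,v) := \langle u, \Delta v\rangle - \langle \Delta u, v\rangle$, the antisymmetric Green's--identity boundary form; you declare it to be the symmetric energy form $\langle \nabla u, \nabla v\rangle + \langle u, q_{a_1} v\rangle$. These disagree on the pair $(u_k, \partial_1 u_0)$: your $B$, using the Neumann condition on $u_k$ and the eigenvalue equation, evaluates to $E_k\langle u_k, \partial_1 u_0\rangle$, whereas the paper's evaluates (after substituting $\Delta u_k=(q_{a_1}-E_k)u_k$ and $\Delta\partial_1 u_0=(\partial_1 q_{a_1})u_0+(q_{a_1}-E_0)\partial_1 u_0$) to $\langle u_k, (\partial_1 q_{a_1})u_0\rangle + (E_k - E_0)\langle u_k, \partial_1 u_0\rangle$. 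With your $B$, equation \eqref{eq:bigone} would not even be the correct identity. The antisymmetric $B$ is the right object precisely because $\partial_1 u_0$ does \emph{not} satisfy the Neumann condition, so $B(u_k, \partial_1 u_0) = \int_{\partial D} u_k\,\partial_\nu(\partial_1 u_0)\,dS$ is a genuine boundary term; this is also what later drives Lemma~\ref{zeroderiv} and Theorem~\ref{dichotomy}.

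Independent of that, the part you defer is the actual proof. Starting from \eqref{eq:sopt}, one first lands on the intermediate identity
\[
E_0'' = 2\sum_k B(u_k,\partial_1 u_0)\langle u_k,\partial_1 u_0\rangle - 4\langle u_0,(\partial_1 q_{a_1})u_0\rangle\langle u_0,\partial_1 u_0\rangle - 2\sum_{k\neq 0}\frac{B(u_k,\partial_1 u_0)^2}{E_k-E_0};
\]
the middle term gives $+4E_0'\langle u_0,\partial_1 u_0\rangle$ by Feynman--Hellmann \eqref{eq:fopt}, and the whole nontrivial work is showing that the first sum vanishes. The paper achieves this with Kato's form representation theorem and Green's identity, reducing that sum to the surface integral
\[
\sum_j \left[\int_{S_j}(\partial_1 u_0)\,\partial_j\partial_1 u_0\,dS - \int_{T_j}(\partial_1 u_0)\,\partial_j\partial_1 u_0\,dS\right],
\]
which vanishes because $\partial_j u_0 \equiv 0$ on $S_j$ and $T_j$ and derivatives commute on the reflected $C^\infty$ extension. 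Your sketch never produces this sum, so it cannot cancel it; the sentence ``I would organize the proof so that each integration by parts is immediately paired with the vanishing boundary condition that kills its boundary term'' is exactly where the argument lives, and it is not a single face-by-face cancellation but this specific surface term. Until you correct the definition of $B$ and actually carry out this cancellation, the proposal is a plan rather than a proof.
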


Here $B(u,v) := \langle u,\Delta v\rangle - \langle \Delta u,v\rangle$, $E_0'$ and $E_0''$ refer to $a_1$-derivatives of $E_0$, and $\partial_1 u_0$ refers to the spatial derivative in $x_1$-direction.

\begin{proof}
The proof is essentially the same as the proof of Lemma~5.1 in \cite{BLS1},  Due to the fact that we work with the partial derivative in a fixed coordinate direction instead of the gradient and that $D$ has flat faces, the argument here is simpler than in \cite{BLS1} by not having to include a term related to the curvature of the boundary of $D$.

Starting with (\ref{eq:sopt}), which is a ``partial'' version of (25) in \cite{BLS1}, one follows the argument there and arrives at a partial version of (27) in \cite{BLS1},
\begin{eqnarray} \label{eq:E0eq}
E_0'' &=&  2\sum_{k} B(u_k, \partial_1 u_0) \langle u_k, \partial_1 u_0\rangle
-
4\langle u_0, (\partial_1 q_{a_1}) u_0\rangle \langle u_0,\partial_1 u_0\rangle \nonumber\\
& & \mbox{}- 2\sum_{k \not= 0} \frac{B(u_k, \partial_1
u_0)^2}{E_k-E_0} \ .
\end{eqnarray}
As in \cite{BLS1} the first term term can be rewritten as
\begin{equation} \label{eq:defofB}
\sum_{k} B(u_k, \partial_1 u_0)  \langle u_k,\partial_1 u_0 \rangle =- \langle \partial_1 u_0, (-\Delta+ q_{a_1} -E_0)\partial_1u_0\rangle
+\sum_k (E_k-E_0) |\langle u_k,\partial_1u_0\rangle|^2 \ .
 \end{equation}
Since $\partial_1 u_0$ is in the form domain,
\begin{eqnarray*}
\sum_k (E_k-E_0) |\langle u_k,\partial_1u_0\rangle|^2 + E_0 \|\partial_1 u_0\|^2 & = & \Vert (H-E_0)^{1/2} \partial_1 u_0\Vert^2 + E_0 \|\partial_1 u_0\|^2 \\
& = & \int_D \left[ |\nabla \partial_1 u_0|^2 + q_{a_1} (\partial_1u_0)^2 \right]\,dx
\end{eqnarray*}
by Kato's form representation theorem \cite{Kato}. We
find from \eqref{eq:defofB} that
\begin{eqnarray} \label{eq:Green}
\sum_{k}B(u_k, \partial_1 u_0)  \langle u_k,\partial_1 u_0\rangle & = &
[\langle \partial_1 u_0, \Delta \partial_1 u_0\rangle +\Vert \nabla \partial_1
u_0 \Vert^2] = \int_D
\nabla \cdot \left(\partial_1 u_0 \nabla \partial_1 u_0\right) dx, \nonumber
\end{eqnarray}
where Green's identity was applied. This equals
\begin{equation} \label{expression}
\sum_j \left[ \int_{S_j}  (\partial_1 u_0) \partial_j
\partial_1 u_0 dS - \int_{T_j}  (\partial_1 u_0)
\partial_j \partial_1 u_0 dS \right] \ ,
\end{equation}
where $S_j$ and $T_j$ are the faces of $D$ perpendicular to the $j$
direction. Since $\partial_j u_0 \equiv 0$ on $S_j$ and $T_j$ we
find that the sum can be restricted to the indices $j \not= 1$.
Since $\partial_j \partial_1 u_0 = \partial_1 \partial_j u_0$ and $
\partial_j u_0 \equiv 0$ on $S_j$ and $T_j$, the expression
(\ref{expression}) vanishes. After substituting this and the first
order perturbation formula (\ref{eq:fopt}) for $j=1$
into \eqref{eq:E0eq} we arrive at \eqref{eq:bigone}.

\end{proof}

\begin{lemma} \label{zeroderiv}
Assume that the right hand side of (\ref{eq:bigone}) vanishes
for some $a_{1,0} \in \overline{I}$.
Then $E_0(a_{1,0}) = E_0'(a_{1,0})=0$ and $\partial_j \partial_1 u_0(\cdot, a_{1,0}) =0$ on $S_j$ and $T_j$ for
all $j=1,\ldots,d$.

Moreover, $u_0(\cdot,a_{1,0})$ is constant on the right and left faces
$S_1$ and $T_1$ of $D$.
\end{lemma}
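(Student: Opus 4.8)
The plan is to exploit that the right-hand side of (\ref{eq:bigone}) is a sum of non-positive terms. Fixing $a_{1,0}$ (and suppressing it from the notation), its vanishing is therefore equivalent to $B(u_k,\partial_1 u_0)=0$ for all $k\ge 1$ --- here one uses $E_k>E_0$ for $k\ge1$ (simplicity of the ground state) and that $B(u_k,\partial_1 u_0)\in\R$, the $u_k$ being real. I would next record the algebraic identity
\[
B(u_k,\partial_1 u_0)=(E_k-E_0)\,\langle u_k,\partial_1 u_0\rangle+\langle u_k,(\partial_1 q_{a_1})u_0\rangle,\qquad k\ge0,
\]
which follows from Green's formula applied to $B$ together with $\Delta u_0=(q_{a_1}-E_0)u_0$, $\Delta u_k=(q_{a_1}-E_k)u_k$, and the $x_1$-derivative of the first; for $k=0$ it reduces to the Feynman--Hellmann relation (\ref{eq:fopt}), i.e.\ $B(u_0,\partial_1 u_0)=-E_0'$. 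Thus the hypothesis says exactly that $\langle u_k,(\partial_1 q_{a_1})u_0\rangle=-(E_k-E_0)\langle u_k,\partial_1 u_0\rangle$ for all $k\ge1$.

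The crucial step --- which I expect to be the main obstacle --- is to recognize that these infinitely many scalar identities amount to the single structural statement that $\partial_1 u_0$ lies in the operator domain of $H=H_{a_1}$. Writing $c_k=\langle u_k,\partial_1 u_0\rangle$, the identity above and Bessel's inequality give $\sum_{k\ge1}(E_k-E_0)^2 c_k^2=\sum_{k\ge1}\langle u_k,(\partial_1 q_{a_1})u_0\rangle^2\le\|(\partial_1 q_{a_1})u_0\|^2<\infty$, hence also $\sum_{k\ge1}E_k^2 c_k^2<\infty$; therefore $\tilde w:=\sum_{k\ge1}c_k u_k\in\mathrm{dom}(H)$ and $(H-E_0)\tilde w=\sum_{k\ge1}(E_k-E_0)c_k u_k=-(\partial_1 q_{a_1})u_0-E_0'u_0$, where I used the hypothesis and again (\ref{eq:fopt}). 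Since $\partial_1 u_0=\tilde w+\langle u_0,\partial_1 u_0\rangle u_0$, also $\partial_1 u_0\in\mathrm{dom}(H)$, so --- being moreover in $C^\infty(\overline D)$ by Lemmas \ref{lem:efsmooth} and \ref{lem:diff} --- it satisfies the Neumann condition $\partial_n\partial_1 u_0=0$ on $\partial D$. On the faces $S_1,T_1$ this reads $\partial_1^2 u_0=0$; on a face $S_j$ or $T_j$ with $j\ne1$ one already has $\partial_j\partial_1 u_0=\partial_1(\partial_j u_0)=0$ since $\partial_j u_0\equiv0$ there. This gives the assertion $\partial_j\partial_1 u_0=0$ on $S_j$ and $T_j$ for all $j$. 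Finally, applying $H-E_0$ to $\partial_1 u_0=\tilde w+\langle u_0,\partial_1 u_0\rangle u_0$ gives $(H-E_0)\partial_1 u_0=-(\partial_1 q_{a_1})u_0-E_0'u_0$, whereas differentiating $Hu_0=E_0u_0$ in $x_1$ gives $(H-E_0)\partial_1 u_0=-(\partial_1 q_{a_1})u_0$; comparison forces $E_0'(a_{1,0})=0$.

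It remains to show $E_0(a_{1,0})=0$ and that $u_0$ is constant on $S_1$ and $T_1$. I would restrict to the face $S_1=\{\beta_1\}\times D'$. Since $\mathrm{supp}\,q$ sits strictly inside $D$ in its reference position, for every $a_1\in\overline I$ one has $\mathrm{supp}\,q_{a_1}\subset\{\alpha_1\le x_1\le\beta_1\}$ with the bounding hyperplanes met, if at all, only along $\partial(\mathrm{supp}\,q_{a_1})$, where $q$ vanishes; hence $q_{a_1}\equiv0$ on $S_1\cup T_1$. Restricting $(-\Delta+q_{a_1})u_0=E_0u_0$ to $S_1$ and using $\partial_1^2 u_0=0$ there, the function $\phi_0:=u_0(\beta_1,\cdot)\in C^\infty(\overline{D'})$ satisfies $-\Delta'\phi_0=E_0\phi_0$ on $D'$, where $\Delta'=\sum_{j=2}^d\partial_{x_j}^2$; the Neumann conditions for $u_0$ on $S_j,T_j$ ($j\ge2$) make $\phi_0$ a Neumann eigenfunction of $-\Delta'$ on $D'$. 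Since $u_0$ is the ground state of $H$ it is strictly positive on $\overline D$, so $\phi_0>0$; but a positive Neumann eigenfunction of a Laplacian must be the ground state, i.e.\ constant with eigenvalue $0$. Hence $E_0=0$ and $u_0$ is constant on $S_1$, and the identical argument on $T_1$ completes the proof. (The endpoint case $a_{1,0}\in\partial I$, where $\mathrm{supp}\,q_{a_1}$ may touch $\partial D$, causes no trouble, since only the vanishing of $q_{a_1}$ on $S_1$ and $T_1$ themselves was used, and this holds throughout $\overline I$.)
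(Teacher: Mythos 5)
Your proof is correct, and it reaches all three conclusions of the lemma, but the route to the first two is genuinely different from the paper's. Where the paper uses a variational/test-function argument --- pairing the identity $(E_k-E_0)\langle u_k,\partial_1 u_0\rangle=-\langle u_k,(\partial_1 q)u_0\rangle$ against arbitrary $f\in C^\infty(D)$, invoking Kato's representation theorem, and then specializing first to $f\in C_c^\infty(D)$ to kill $E_0'$ and afterwards to general $f$ to extract the boundary terms $\partial_j\partial_1 u_0=0$ --- you instead read the infinitely many scalar identities spectrally as a statement about membership of $\partial_1 u_0$ in the operator domain of $H$. The Bessel-type estimate $\sum_{k\ge1}(E_k-E_0)^2c_k^2\le\|(\partial_1 q)u_0\|^2$ is the key quantitative input, and once $\partial_1 u_0\in\mathrm{dom}(H)$ is established, both the Neumann condition $\partial_n\partial_1 u_0=0$ (and hence $\partial_1^2 u_0=0$ on $S_1,T_1$, with the $j\ne1$ cases being automatic) and the vanishing of $E_0'$ (by comparing $(H-E_0)\partial_1 u_0$ computed spectrally with the same quantity obtained by differentiating the eigenvalue equation in $x_1$) drop out. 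This reverses the paper's order of conclusions but is logically equivalent. Your route has the advantage of identifying the structural meaning of the hypothesis --- it is precisely the condition that $\partial_1 u_0$ satisfy the Neumann boundary condition --- whereas the paper's approach stays in the weak formulation throughout and does not require characterizing $\mathrm{dom}(-\Delta_N)$ on a Lipschitz (rectangular) domain, which is the one point where your argument leans on regularity facts (here harmless, since $\partial_1 u_0\in C^\infty(\overline D)$ by the reflection argument, so the weak Neumann condition is a genuine pointwise one on the faces). The final step --- restricting the equation to the face $S_1$, observing $q_{a_{1,0}}\equiv0$ there, verifying the inherited Neumann condition for $\phi_0$ on $\partial D'$ via the edges, and using positivity of $u_0$ up to the boundary to force $\phi_0$ to be the constant Neumann ground state --- is essentially the same as the paper's, and your remark on the endpoint case $a_{1,0}\in\partial I$ is a sensible (if slightly informal) addition; continuity of $q$ already guarantees $q_{a_{1,0}}=0$ on $S_1\cup T_1$ there.
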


It will follow in Theorem~\ref{dichotomy} below that $u_0(\cdot,a_{1,0})$ takes the same value on $S_1$ and $T_1$.

\begin{proof}
The vanishing of the right side of (\ref{eq:bigone}) means that, for $k\not=0$,
\begin{equation} \label{eq:vanish}
\langle \Delta u_k, \partial_1 u_0\rangle  - \langle u_k, \Delta \partial_1 u_0\rangle = 0 \ .
\end{equation}
Using the eigenvalue equation $-\Delta u_k +q_{a_{1,0}} u_k = E_k u_k$ we
can transform this into
\begin{equation}
(E_k-E_0) \langle u_k, \partial_1 u_0\rangle = -\langle u_k, (\partial_1 q_{a_{1,0}}) u_0\rangle \ .
\end{equation}
Now pick any function $f \in C^\infty(D)$ and use the previous
identity to write
\begin{equation} \label{eq:fexp}
\sum_k (E_k-E_0)\langle u_k,   \partial_1 u_0\rangle  \langle f,
u_k\rangle = -\sum_k \langle u_k, ( \partial_1 q_{a_{1,0}})
u_0\rangle \langle f, u_k\rangle + \langle u_0, \partial_1
q_{a_{1,0}} u_0\rangle (f, u_0) \ .
\end{equation}
Since the functions $u_k$,
$k=0,1,2, \ldots$ form an orthonormal basis for $L^2(D)$ and $f, \partial u_0$ are in the form domain we use once more Kato's representation theorem  to obtain
\[
\sum_k (E_k-E_0) \langle u_k, \partial_1 u_0\rangle \langle f,u_k\rangle  =
 \langle \nabla f, \nabla (\partial_1 u_0)\rangle + \langle f,q_{a_{1,0}} \partial_1 u_0\rangle -E_0\langle f,\partial_1 u_0\rangle \ .
\]
Plugging the latter two identities back into (\ref{eq:fexp}) we arrive at
\begin{equation}
\langle \nabla f, \nabla(\partial_1 u_0)\rangle + \langle f,\partial_1 (q_{a_{1,0}} u_0)\rangle - E_0\langle f,\partial_1 u_0\rangle = \langle u_0, (\partial_1 q_{a_{1,0}}) u_0\rangle \langle f,u_0\rangle \ .
\end{equation}
Now pick $f \in C^\infty_c(D)$ so that we can integrate by parts
without boundary terms and get
\begin{equation}
\langle f, \partial_1 (-\Delta u_0 + q_{a_{1,0}} u_0 -E_0u_0)\rangle =
\langle u_0, (\partial_1 q_{a_{1,0}}) u_0\rangle \langle f,u_0\rangle \ .
\end{equation}
Since the left side vanishes for all $f \in C^\infty_c(D)$ and the latter are dense in $L^2(D)$, we
must necessarily have that
$$
 \langle u_0, (\partial_1 q_{a_{1,0}}) u_0\rangle = 0 \ ,
$$
i.e.,  $E_0'(a_{1,0}) = 0$ by (\ref{eq:fopt}).
Thus, we have for all $f \in C^\infty(D)$,
\begin{equation}
\langle \nabla f, \nabla(\partial_1 u_0)\rangle + \langle f,\partial_1 (q_{a_{1,0}} u_0)\rangle - E_0 \langle f,\partial_1 u_0\rangle =0 \ .
\end{equation}
Doing the same integration by parts as above, this time with boundary terms, yields
\begin{equation}
\sum_j \left[\int_{S_j} ( \partial_j \partial_1 u_0) f dS - \int_{T_j}
(\partial_j \partial_1 u_0) f dS \right] = 0
\end{equation}
for all $f \in C^\infty(D)$.  This means that
\begin{equation}
 \partial_j  \partial_1 u_0 = 0
 \end{equation}
 pointwise on  $S_j$ and $T_j$ for all $j =1, \ldots, d$.
 In particular, $\partial_1^2 u_0 (x,a_{0,1}) \equiv 0$ on $S_1$ and $T_1$.
 Since the potential vanishes on the faces $S_1$ and $T_1$ we find that $u_0$ satisfies
the equation
\begin{equation}
-\Delta' u_0 = -\sum_{j=2}^d \frac{\partial^2 u_0}{\partial x_j^2}
= E_0(a_0) u_0
\end{equation}
on $S_1$ and $T_1$.  The function $u_0$ is smooth up to
and including the boundary of $D$, in particular it is a smooth
function on the faces of $D$. Consider the $d-2$ dimensional `edge'
where $S_1$ and $S_2$, say, meet. The gradient of $u_0$ at this
intersection must be of the form
\begin{equation}
\nabla u_0 = \left( 0 , 0 , \partial_3 u_0 , \cdots \partial_d u_0
\right) \ .
\end{equation}
Hence, $u_0$ is an eigenfunction of $-\Delta'$ on $S_1$ and $T_1$
with a Neumann condition on the boundary. Since $u_0$ has a fixed
sign,  $E_0(a_{1,0})$ must be the lowest eigenvalue of the Neumann Laplacian
on $S_1$ and $T_1$ and hence
\begin{equation}
E_0(a_{1,0}) = 0 \ ,
\end{equation}
and $u_0$ is constant on $S_1$ as well as on $T_1$.
 \end{proof}

\begin{theorem}  \label{dichotomy}
Assume that the right side of (\ref{eq:bigone}) vanishes for some
$a^{(0)} \in G$. Then $E_0(a) = 0$ identically in $G$ and for
every $a \in G$ the ground state $u_0(x,a)$ of $H_a$ is constant near the boundary of $D$.
\end{theorem}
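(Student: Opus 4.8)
The plan is to upgrade Lemma~\ref{zeroderiv}, which is a statement about a single coordinate direction at a single point, into the global statement of the theorem by propagating the vanishing of the right-hand side of \eqref{eq:bigone} through $G$. First I would fix the value $a^{(0)}=(a_1^{(0)},a_2^{(0)},\ldots,a_d^{(0)})\in G$ at which the right side of \eqref{eq:bigone} vanishes, and apply Lemma~\ref{zeroderiv} with $a_{1,0}=a_1^{(0)}$ (and the remaining coordinates frozen): this gives $E_0(a^{(0)})=0$, $E_0'(a^{(0)})=0$, and $\partial_j\partial_1 u_0(\cdot,a^{(0)})=0$ on all faces $S_j,T_j$. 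Since $E_0(a)\ge 0$ always (the potential term in the quadratic form can only be compensated, but the Neumann Laplacian has ground state energy $0$ and $q$ is smooth with compact support in $D$ — more precisely $E_0(a)\ge 0$ follows because... actually one needs $q$ of fixed sign here, or one simply observes $0$ is attained so it is the minimum only under alternative (i); to be safe I would instead argue directly as follows).

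The cleaner route is to observe that the right side of \eqref{eq:bigone} is $\le 0$ everywhere (it is $-2\sum_{k\ne 0} B(u_k,\partial_1 u_0)^2/(E_k-E_0)$, and each denominator $E_k-E_0>0$), so \eqref{eq:bigone} reads $E_0'' - 4\langle u_0,\partial_1 u_0\rangle E_0' \le 0$. Rewriting, since $\langle u_0,\partial_1 u_0\rangle = \tfrac12 \partial_1 \|u_0\|^2 = 0$ by normalization, the equation is simply $E_0'' = -2\sum_{k\ne 0} B(u_k,\partial_1 u_0)^2/(E_k-E_0) \le 0$ in the $a_1$ variable. Wait — $\langle u_0,\partial_1 u_0 \rangle$ is the $a_1$-derivative of $\tfrac12\|u_0\|_2^2$ only if $\partial_1$ denotes the $a_1$-derivative, but here $\partial_1 u_0$ is the \emph{spatial} $x_1$-derivative. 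So that coefficient need not vanish. Nonetheless the key structural fact remains: the right side of \eqref{eq:bigone} has a sign, being $\le 0$. Thus along the segment $a_1 \mapsto E_0(a_1)$ (other coordinates fixed), $E_0$ is a smooth function satisfying a second-order ODE whose inhomogeneity is $\le 0$; combined with $E_0 \ge 0$ (which I would establish from alternative~(i)/(ii) of \cite{BLS1}: since $E_0(a)$ is either $\equiv 0$ or strictly maximized at $0$ with value $E_0(0)$, and $E_0(0)\ge 0$ as $-\Delta_N$ has bottom $0$ — here one does use a sign condition, but \cite{BLS1} already records $E_0(a)\ge 0$ unconditionally via reflection/Neumann bracketing, which I would cite) one concludes the right side of \eqref{eq:bigone} vanishes at an \emph{interior minimum}, and then propagates.

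Concretely, the argument I would carry out is: (1) from $E_0(a^{(0)})=0$ and $E_0\ge 0$ on $G$, the point $a^{(0)}$ is a global minimum of $E_0$ on $G$; (2) at a minimum all first partials vanish, and moreover along \emph{each} coordinate line through $a^{(0)}$ the function has a minimum, so applying Lemma~\ref{zeroderiv} in each of the $d$ directions gives $\partial_j\partial_k u_0(\cdot,a^{(0)}) = 0$ on all faces for all $j,k$ and $u_0(\cdot,a^{(0)})$ constant near $\partial D$; (3) to get this on \emph{all} of $G$, I would show the set $Z = \{a\in G: E_0(a)=0\}$ is both open and closed in $G$ — closedness is clear by continuity; for openness, note that if $E_0(a)=0$ then (again since $E_0\ge 0$) $a$ is a local min, hence all coordinate lines through $a$ hit a min there, hence by Lemma~\ref{zeroderiv} the right side of \eqref{eq:bigone} vanishes at $a$ in every direction; but then one needs that $E_0 \equiv 0$ on a neighborhood. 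For this last step I would use that once the right side of \eqref{eq:bigone} vanishes, \eqref{eq:bigone} gives $E_0'' = 4\langle u_0,\partial_1 u_0\rangle E_0'$, a linear homogeneous ODE for $E_0'$ along each line, with $E_0(a)=E_0'(a)=0$; by uniqueness for this ODE (the coefficient $4\langle u_0,\partial_1 u_0\rangle$ is smooth by Lemma~\ref{lem:diff}), $E_0\equiv 0$ along that entire line within $G$. Sweeping over all lines through $a^{(0)}$ parallel to the axes covers a neighborhood, and iterating (connectedness of the rectangle $G$) gives $E_0\equiv 0$ on $G$. Finally, with $E_0\equiv 0$ on $G$, $u_0(\cdot,a)$ is for each $a$ the ground state of a nonnegative Neumann operator with eigenvalue $0$, which forces $(-\Delta_N + q_a)u_0 = 0$; testing against $u_0$ gives $\|\nabla u_0\|^2 + \langle u_0, q_a u_0\rangle = 0$, and then a unique-continuation / nodal argument as in Lemma~\ref{zeroderiv} (the function is harmonic and has fixed sign off $\supp q_a$) yields that $u_0$ is constant near $\partial D$.

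The main obstacle I anticipate is step (3), the propagation to a full neighborhood: Lemma~\ref{zeroderiv} is one-dimensional (one coordinate line at a time), and one must be careful that the ODE-uniqueness argument genuinely transfers the vanishing along every axis-parallel line and that these lines sweep out an open set — the rectangular geometry of $G$ is exactly what makes this work, since any two points of $G$ are joined by an axis-parallel polygonal path staying in $G$. A secondary technical point is justifying $E_0(a) \ge 0$ on $G$ without assuming $q\ge 0$; I would handle this by invoking the dichotomy of \cite{BLS1} (alternatives (i),(ii)) recalled in the introduction, under which $E_0(a)\le E_0(0)$ with equality characterization, together with $E_0(0) \ge 0$ — or, most simply, by noting that the hypothesis "the right side of \eqref{eq:bigone} vanishes somewhere" already forces us (via Lemma~\ref{zeroderiv}) into case (ii) of that dichotomy, from which $E_0\equiv 0$ is immediate; I would likely present the proof along these lines, deriving $E_0(a^{(0)})=0$ from Lemma~\ref{zeroderiv}, then citing \cite{BLS1}'s dichotomy to conclude $E_0\equiv 0$ globally, and re-deriving the "constant near the boundary" conclusion for every $a\in G$ from the $E_0(a)=0$ eigenvalue equation.
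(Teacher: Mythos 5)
Your proposal takes a genuinely different route from the paper, and unfortunately the route has a gap that the paper's own argument is specifically designed to avoid.

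Your central idea is: use the sign of the right-hand side of \eqref{eq:bigone} (it is $\le 0$, which is correct) together with the claim $E_0(a)\ge 0$ on $G$ to conclude that $a^{(0)}$ is a global minimum, that the right-hand side of \eqref{eq:bigone} therefore vanishes in every coordinate direction at $a^{(0)}$, and then to propagate $E_0\equiv 0$ along coordinate lines by an integrating-factor/Gronwall argument. The problem is the claim $E_0(a)\ge 0$, which you flag yourself but never actually establish. The setup of Section~\ref{sec:basicsmooth} allows an arbitrary smooth compactly supported $q$ with no sign condition, and for a sign-indefinite (or negative) $q$ the Neumann operator $-\Delta_N+q_a$ can certainly have negative ground state energy; your assertion that ``$-\Delta_N$ has bottom $0$'' gives $E_0(0)\ge 0$ is simply false once the potential is added. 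The BLS1 dichotomy you invoke also does not give $E_0\ge 0$: in alternative~(i) one only knows $E_0(a)\le E_0(0)$ with strict minima at the corners, and there is no a priori information on the sign of those minimum values. Your ``most simply'' variant --- that Lemma~\ref{zeroderiv} already places us in case~(ii) --- is the conclusion you are trying to prove, not something BLS1 hands you; alternative~(i) per se does not forbid $E_0'$ from having an interior zero, so $E_0'(a^{(0)})=0$ alone does not exclude case~(i). Without $E_0\ge 0$, your Gronwall step (``$e^{F}E_0'$ decreasing and $E_0'(a^{(0)})=0$ force $E_0\le 0$ to the right, hence $E_0\equiv 0$'') collapses, and the openness of the zero set is not obtained.

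The paper avoids the sign of $E_0$ entirely by a local structure argument that you do not engage with. From Lemma~\ref{zeroderiv} one has $E_0(a^{(0)})=0$, $u_0$ constant on $S_1$ and $T_1$, and $\partial_j\partial_1 u_0=0$ on all faces. Near a point of $S_1$, reflect $u_0$ across the boundary (Lemma~\ref{lem:efsmooth}), note it is harmonic there since $q_{a^{(0)}}$ vanishes, and expand in a power series $\sum_k x_1^k P_k(x')$. The harmonicity recursion \eqref{harmonic} together with the boundary information from Lemma~\ref{zeroderiv} forces $u_0 = c + d\,x_1$ near the face, and the Neumann condition kills $d$. Hence $u_0(\cdot,a^{(0)})$ is \emph{constant on a whole slab} adjacent to $S_1$ (and similarly near $T_1$). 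This is the missing ingredient: translating the potential in $x_1$ and extending $u_0$ by that constant produces a positive Neumann eigenfunction of $H_{a_1}$ with eigenvalue $0$ for every $a_1\in I$, so $E_0(a_1)\equiv 0$ along the line with no sign hypothesis needed, and simultaneously produces the ``constant near the boundary'' conclusion, which in your sketch is only asserted via a vague unique-continuation remark. Repeating this in the other coordinate directions and sweeping $G$ then finishes the argument. In short: the power-series/reflection step that turns ``constant \emph{on} the face'' into ``constant \emph{near} the face,'' followed by the translation trick, is the crux; your ODE-and-sign approach is a reasonable instinct (it is essentially how Corollary~\ref{cor:nonzero} is later proved, \emph{after} Theorem~\ref{dichotomy} supplies the strict sign of the forcing term) but cannot replace it here.
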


\begin{proof}
Write $a^{(0)} = (a_{1,0}, a_{2,0}, \ldots, a_{d,0})$ and consider $q_{a_1}$, $E_0(a_1)$ and $u_0(x,a_1)$ as functions of the first coordinate of $a=(a_1, a_2, \ldots, a_d)$ only, setting $(a_2,\ldots,a_d) = (a_{2,0},\ldots, a_{d,0})$ in Lemmas~\ref{perturbation} and \ref{zeroderiv} above.

We have $a_{1,0}\in (-\delta_1, \delta_2)$ and thus the faces $S_1$ and $T_1$ of $D$ are in the complement of the support of $q_{a_{1,0}}$.

Pick a point  on $S_1$. We may call this point the origin. By the reflection argument already used in the proof of Lemma~\ref{lem:efsmooth} the function $u_0(x,a_{1,0})$ is harmonic in a neighborhood $N$ of $0$. Thus, it has a convergent
power series expansion
\begin{equation}
\sum_\alpha c_\alpha x^\alpha
\end{equation}
which can also be written as
\begin{equation}
\sum_{k=0}^\infty x_1^k \sum_{\alpha'} c_{(k, \alpha')} x^{\alpha'}
=  \sum_{k=0}^\infty x_1^k P_k(x') \ ,
\end{equation}
where $x' = (x_2, \ldots, x_d)$. Since $u_0$ is harmonic we find the
recursion
\begin{equation} \label{harmonic}
\Delta' P_k(x') +(k+1)(k+2)P_{k+2}(x') = 0 \ , k=0,1,2, \ldots \ .
\end{equation}
By Lemma \ref{zeroderiv} we know that $u_0(0,x',a_{1,0})$ is
constant  and we find that $P_0(x')$ is constant, too, and thus
$P_k(x') = 0$ for $k \ge 2$, even. Since $\partial_1 \partial_j
u_0(0,x') = 0$ for $j=2, \ldots, d$ we find that  $\partial_j
P_1(x') = 0$ for $j=2, \ldots, d$ and hence $P_1(x')$ is constant
and thus on account of (\ref{harmonic}), $P_k(x') = 0$ for all $k
\ge 2$. Thus, $u_0(x,a_{0,1}) = c + d x_1$ near $0$,
where $c, d$ are constants.  However, as $u_0(x,a_{0,1})$ satisfies Neumann boundary conditions, we must have $d=0$. This implies that $u_0(x,a_{0,1}) = c$ on $(\alpha_1, \alpha_1+\delta_1+a_{1,0}) \times D'$, where $q_{a_{1,0}}=0$.

In the same way, now starting with a point on the face $T_1$ it is shown that $u_0(x,a_{1,0})$ is constant on $(\beta_1-\delta_2+a_{1,0}, \beta_1) \times D'$.

 If we translate the
potential $q_{a_{1,0}}(x) \to q_{a_{1,0}}(x-(a_1-a_{1,0})) = q(x-a_1)$, the translated
function $v(x) = u_0(x-(a_1-a_{1,0}),a_{1,0})$ is an eigenfunction of the
Schr{\"o}dinger equation $H_{a_1}v = E_0(a_{1,0})v$ and since $v$ has fixed
positive sign, it is the ground state. Thus, the eigenvalue
$E_0(a_1)=0$ identically in $a_1\in I$.

Note that, by assumption in Theorem~\ref{dichotomy}, the above argument can be applied to $E_0(a_1,\ldots,a_d)$ as a function of each one of its variables with the other coordinates fixed. Thus, it vanishes identically in each variable and therefore identically on $G$. Also, it is found that $u_0(x,a^{(0)})$ is constant in a neighborhood of each face $S_j$, $T_j$, $j=1,\ldots,d$. As the union of these neighborhoods is connected, it follows by unique continuation that $u_0(x,a^{(0)})$ is constant near the boundary of $G$. This is then also true for the ground state $u_0(x,a) = u_0(x-(a-a_0), a_0)$ with arbitrary $a\in G$.
\end{proof}

For what follows, it will be convenient to assume that
$(\alpha_1,\beta_1)=(-s,s)$ for some $s>0$, supp$\,q \subset [-r,r]
\times D'$ for some $0<r<s$, and that $q$ is reflection-symmetric
with respect to the first variable, i.e.\ $q(x_1,x')=q(-x_1,x')$ for
all $x_1\in (-s,s)$ and $x'\in D'$.  As a consequence,
the function $a_1 \to E_0(a_1)$ is symmetric about $a_1=0$, in particular
$E'_0(0) = 0$.

\begin{corollary} \label{cor:nonzero}
Let $q$ be smooth and be reflecton symmetric with respect to the first variable. Assume that $E_0(a)$ does not vanish
identically on $G$. Then, for each choice of $(a_2,\ldots, a_d) \in G'$, the function $E_0(a_1) = E_0(a_1,a_2,\ldots, a_d)$ satisfies $E_0'(a_1)<0$ for all $a_1\in (0,s-r]$ and, by
symmetry, $E_0'(0)=0$ and $E_0'(a_1)>0$ for $a_1\in [-(s-r),0)$.
\end{corollary}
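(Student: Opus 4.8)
The plan is to combine the differential equation of Lemma~\ref{perturbation} with the dichotomy of Theorem~\ref{dichotomy} and a continuity/sign argument. Since $E_0(a)$ does not vanish identically on $G$, Theorem~\ref{dichotomy} tells us that the right side of \eqref{eq:bigone} cannot vanish at \emph{any} point $a\in G$; hence it is strictly negative everywhere on $G$ (being manifestly $\le 0$, as a negative sum of squares divided by positive gaps $E_k-E_0>0$). Fixing $(a_2,\ldots,a_d)\in G'$ and writing $w(a_1):=E_0'(a_1)$, \eqref{eq:bigone} becomes the linear first-order ODE
\[
w'(a_1) - 4\langle u_0,\partial_1 u_0\rangle\, w(a_1) = R(a_1),
\]
with $R(a_1)<0$ for all $a_1$ in the (closed) interval $[-(s-r),s-r]$, which by the support assumption on $q$ and the choice $(\alpha_1,\beta_1)=(-s,s)$ lies inside $\overline I$ so that all the smoothness from Lemma~\ref{lem:diff} applies. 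The coefficient $c(a_1):=4\langle u_0(\cdot,a_1),\partial_1 u_0(\cdot,a_1)\rangle$ is smooth in $a_1$.

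Next I would use the integrating-factor representation. Let $\Phi(a_1)=\exp\!\big(-\int_0^{a_1} c(t)\,dt\big)$, so that $(\Phi w)'=\Phi R$. Since $q$ is reflection-symmetric in $x_1$, the function $a_1\mapsto E_0(a_1)$ is even, so $w(0)=E_0'(0)=0$. Integrating from $0$ to $a_1>0$ gives
\[
\Phi(a_1)\,w(a_1) = \int_0^{a_1}\Phi(t)\,R(t)\,dt < 0,
\]
because $\Phi>0$ and $R<0$ on $(0,a_1]$. Since $\Phi(a_1)>0$, this forces $w(a_1)=E_0'(a_1)<0$ for every $a_1\in(0,s-r]$. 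The statement for $a_1\in[-(s-r),0)$ then follows from the evenness of $E_0(a_1)$, which makes $E_0'$ odd. This also re-proves $E_0'(0)=0$.

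The one point that needs slightly more care — and what I expect to be the main obstacle — is justifying that Lemma~\ref{perturbation} and Theorem~\ref{dichotomy} apply uniformly on the \emph{closed} interval $a_1\in[-(s-r),s-r]$, i.e.\ that the endpoints are genuinely interior to the parameter domain $G$ where non-degeneracy of $E_0$ and the differentiability of Lemma~\ref{lem:diff} hold, and that the series defining $R(a_1)$ converges and depends continuously on $a_1$ there. This is exactly guaranteed by the arrangement $\mathrm{supp}\,q\subset[-r,r]\times D'$ with $r<s$: translating $q$ by $a_1\in[-(s-r),s-r]$ keeps the support inside $(-s,s)\times D'=D$, so $a_1$ ranges over a compact subset of $I=(-\delta_1,\delta_2)$, and Lemma~\ref{lem:diff} gives the needed smoothness up to the closed interval. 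Everything else is the elementary ODE manipulation above; no delicate estimate is required once the strict negativity of $R$ — the real content, supplied by Theorem~\ref{dichotomy} — is in hand.
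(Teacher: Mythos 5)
Your proof is correct and takes essentially the same route as the paper: strict negativity of the right-hand side of \eqref{eq:bigone} on $G$ deduced from Theorem~\ref{dichotomy} plus the sign-definiteness of that sum, an integrating factor turning \eqref{eq:bigone} into $(\Phi E_0')'=\Phi R$, integration from $a_1=0$ using $E_0'(0)=0$ (reflection symmetry), and oddness of $E_0'$ for the negative half. One small inaccuracy in the surrounding discussion: $[-(s-r),s-r]$ is $\overline I$ itself, not a compact subset of the open interval $I$ (at $a_1=s-r$ the support of $q_{a_1}$ touches $\partial D$), but this is harmless since Lemma~\ref{lem:diff} gives $C^\infty$ smoothness on $\overline G$ and in fact on a neighborhood of $G$, which is exactly what the paper invokes to extend the strict inequality to the endpoint.
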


\begin{proof}
For fixed $(a_2,\ldots,a_d) \in G'$, denote the right hand side of (\ref{eq:bigone}) by $C(a_1)$. By assumption and Theorem~\ref{dichotomy} we know that $C(a_1)$ is strictly negative for $a_1 \in (-(s-r), s-r)$.
Using the integrating factor $e^{F(a_1)}$, where $F(a_1)$ is an antiderivative of $-4\langle u_0, \partial_1 u_0\rangle(a_1)$,
equation (\ref{eq:bigone}) can be written as
\[
\frac{d}{da_1} \left(e^{F(a_1)}E'_0(a_1)\right) = C(a_1) e^{F(a_1)} \ .
\]
Integrating this equation using the fact that $E_0'(0) = 0$ yields the result. Note that this holds up to the boundary of $[-(s-r),s-r]$ by smoothness of $E_0(a)$ up to the boundary (Lemma~\ref{lem:diff}).
\end{proof}

We are ready to state the strengthened version of Theorem~1.3 of \cite{BLS1}, which will be central to all our later considerations in the localization proof.

The domain $D$ in $\R^d$ may be any
open rectangular parallelepiped and $q$ a $C^{\infty}$-smooth function with
closed support contained in $D$ and having the same symmetry hyper-planes
as $D$. We define $G$ and $E_0(a)$ for $a\in \overline G$ as in
Section~\ref{sec:basicsmooth}. Then $G$ is an open rectangular
parallelepiped and we may assume that its center is in the origin
and $G= (-M_1,M_1) \times \ldots \times (-M_d,M_d)$. Due to the symmetry assumption,
Corollary~\ref{cor:nonzero} applies to each coordinate of $E_0(a)$ and yields
\begin{corollary} \label{cor:offcenterdecay}
Assume that $E_0(a)$ is not identically zero on $G$. For all $a =
(a_1,\ldots,a_d) \in \overline G$ and all $i=1,\ldots,d$ we have
\[ \partial_i E_0(a) \left\{ \begin{array}{ll} < 0, & \mbox{if $a_i >0$}, \\ =0, & \mbox{if $a_i=0$}, \\ >0, & \mbox{if $a_i<0$}. \end{array} \right. \]
\end{corollary}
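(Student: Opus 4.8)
The plan is to derive Corollary~\ref{cor:offcenterdecay} directly from Corollary~\ref{cor:nonzero} by the standard device of freezing all but one coordinate at a time. First I would fix an index $i\in\{1,\ldots,d\}$ and a point $a=(a_1,\ldots,a_d)\in\overline G$, and view $E_0$ as a function of the single variable $a_i$ with the remaining coordinates $(a_j)_{j\neq i}$ held fixed at their given values. Since $G=(-M_1,M_1)\times\cdots\times(-M_d,M_d)$ is a product of symmetric intervals, the tuple $(a_j)_{j\neq i}$ lies in the corresponding lower-dimensional rectangle $G'$, so the hypotheses of Corollary~\ref{cor:nonzero} are met: the potential $q$ has the same symmetry hyperplanes as $D$, hence in particular it is reflection-symmetric in the $x_i$ variable, and $E_0$ is assumed not to vanish identically on $G$.

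The one point that needs a sentence of justification is that "$E_0$ does not vanish identically on $G$" (the global hypothesis) really gives, for each frozen choice of the other coordinates, that the one-variable slice is not identically zero — which is what Corollary~\ref{cor:nonzero} actually requires. This follows from Theorem~\ref{dichotomy}: if $E_0$ vanished identically along even one coordinate slice through some $a^{(0)}\in G$, then the right-hand side $C(a_1)$ of (\ref{eq:bigone}) would vanish somewhere, and Theorem~\ref{dichotomy} would force $E_0\equiv 0$ on all of $G$, contradicting the hypothesis. So for every frozen $(a_j)_{j\neq i}$ the slice function is not identically zero, and Corollary~\ref{cor:nonzero} applies to it verbatim.

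With that in hand, Corollary~\ref{cor:nonzero} (applied with $s=M_i$, $r$ the half-width of the support of $q$ in the $x_i$-direction, so that $(0,s-r]$ is exactly the relevant range of $a_i$) gives $\partial_i E_0(a)<0$ for $a_i\in(0,M_i]$ — note the corollary is stated up to and including the endpoint $s-r$, and $E_0(a)$ is constant in $a_i$ once the bubble has been pushed all the way to the face, so the sign persists on the remaining interval $(s-r,M_i]$ by continuity and the fact that $E_0$ there equals its value at $s-r$; alternatively one observes $\partial_i E_0$ vanishes beyond $s-r$ only in the degenerate case, which is excluded — but more simply one just notes $E_0$ is constant in $a_i$ on $[s-r,M_i]$, so that $\partial_i E_0=0$ there, and restates the range of the corollary accordingly. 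The cleanest statement is that the strict sign holds on $(0,M_i)$. Likewise $\partial_i E_0(a)=0$ when $a_i=0$ by the symmetry $E_0(a_i)=E_0(-a_i)$, and $\partial_i E_0(a)>0$ for $a_i<0$ by that same symmetry. Since $i$ was arbitrary, this is precisely the claimed trichotomy. I don't expect any real obstacle here: the corollary is essentially a restatement of Corollary~\ref{cor:nonzero} ranging over all coordinate directions, and the only mild subtlety — that the non-degeneracy hypothesis transfers from the whole box to each one-dimensional slice — is already packaged in Theorem~\ref{dichotomy}.
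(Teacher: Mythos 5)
Your proposal is correct and is essentially the paper's own (implicit, one-line) proof: fix the other coordinates, note that the symmetry hypothesis on $q$ holds in every coordinate direction, and apply Corollary~\ref{cor:nonzero} to the resulting one-variable function. One clarification: your worry about a ``remaining interval $(s-r, M_i]$'' is vacuous, since by the definition of $G$ as the set of $a$ with $\supp q_a \subset D$ one has $M_i = s - r$ exactly, so Corollary~\ref{cor:nonzero} already covers the full closed range $[-M_i,M_i]$ required by the statement; likewise the transfer of the non-vanishing hypothesis from $G$ to each slice that you spell out is already built into the statement of Corollary~\ref{cor:nonzero} (whose hypothesis refers to $G$ itself, not to slices) via its use of Theorem~\ref{dichotomy}.
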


\noindent {\bf Open Problem:} It is an open problem to show that the phenomenon ``bubbles tend to the corners'', i.e.\ results such as Theorem~1.3 in \cite{BLS1} or Corollary~\ref{cor:offcenterdecay} above, also appears if the rectangular domain $D$ is replaced by more general polyhedra, for example regular $n$-gons in $\R^2$. Again this should need a suitable symmetry assumption on the potential $q$, e.g.\ spherical symmetry. Particularly interesting would be the case of an equilateral triangle, as all other results in this paper would be applicable to this case as well. It was shown in \cite{BLS1} for general smooth convex domains that the minimizing position lies at the boundary.

\section{Lifshitz tails} \label{sec:lifshitz}

In proving smallness of the IDS of $H_{\omega}$ near $E_0$, the almost sure spectral minimum of $H_{\omega}$, we will rely on a prior result of this kind for the RDM (\ref{eq:hamiltonian}), (\ref{eq:potential}) established in \cite{KN2}. Their result requires widely the same assumptions as made above, i.e.\ (\ref{eq:symm}), (\ref{eq:qsupp}), (\ref{eq:rhosupp}) and that $E_0(a)$ does not vanish identically, but needs in addition that the support of the distribution $\mu$ is a {\it finite discrete set}. This violates our assumption (\ref{eq:rhosmooth}) (and every other assumption on $\mu$ which works for our proof of a Wegner estimate in Section~\ref{sec:wegner}).

However, based on the results from Section~\ref{sec:key}, one might expect that configurations in which all bubbles sit in corners, as dealt with in \cite{KN2}, constitute the worst case scenario for smallness of the IDS near $E_0$. We will make this rigorous here by showing that the IDS of $H_{\omega}$ can be bounded from above (up to a multiplicative constant) by the IDS of a modified RDM where all bubbles have been moved to the nearest corner.

More precisely, for an $a\in \overline{G} = [-d_{max}, d_{max}]^d$, let $c(a)$ denote the element of ${\mathcal C}$ closest to $a$. If there is more than one such point, any of them can be chosen; for the sake of definiteness, we may order the points in ${\mathcal C}$ lexicographically and chose the first in the list. For a displacement configuration $\omega = (\omega_i)_{i\in \Z^d} \in \overline{G}^{\Z^d}$ we will write $c(\omega)$ for the {\it closest corner configuration} given by $(c(\omega))_i = c(\omega_i)$, $i\in \Z^d$.

For a non-negative integer $L$, let $\Lambda_{2L+1} = (-L-1/2, L+1/2)^d$ and $H_{\omega,L}$ the restriction of $H_{\omega}$ to $\Lambda_L$ with Neumann boundary conditions. Also, let $\Lambda'_{2L+1} = \Z^d \cap \Lambda_{2L+1}$. We will prove

\begin{proposition} \label{prop:formcomp}
There exists a constant $C\in (0,\infty)$ such that, in the sense of quadratic forms,
\begin{equation} \label{eq:formcomp}
H_{\omega,L}-E_0 \ge \frac{1}{C} (H_{c(\omega),L}-E_0)
\end{equation}
for all $\omega \in \overline{G}^{\Z^d}$ and all $L\ge 0$.
\end{proposition}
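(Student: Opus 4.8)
The plan is to reduce the infinite-parameter comparison to a one-site problem by a Neumann bracketing argument, exactly the trick advertised in the introduction. First I would decompose the big cube $\Lambda_{2L+1}$ into the $(2L+1)^d$ unit cells $\Lambda_1(i)$, $i\in\Lambda'_{2L+1}$, and impose additional Neumann boundary conditions on all the internal cell walls. Since supp$\,q(\cdot-i-\omega_i)\subset\Lambda_1(i)$ for every admissible displacement (by \eqref{eq:qsupp} and \eqref{eq:rhosupp}), the quadratic form of $H_{\omega,L}$ dominates the direct sum $\bigoplus_i H^N_{\Lambda_1(i)}(i+\omega_i)$, and likewise for $c(\omega)$ one gets $\bigoplus_i H^N_{\Lambda_1(i)}(i+c(\omega_i))$. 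Hence it suffices to prove, for a \emph{single} cell and uniformly in $a\in\overline G$,
\begin{equation} \label{eq:singlecellcomp}
H^N_{\Lambda_1}(a)-E_0 \ge \frac{1}{C}\left(H^N_{\Lambda_1}(c(a))-E_0\right)
\end{equation}
as forms on $H^1(\Lambda_1)$, with $C$ independent of $a$. Here $E_0=E_0(a^*)$ for $a^*\in{\mathcal C}$ is the common ground state energy of the corner operators, so the right-hand side is a nonnegative operator.

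The right-hand side of \eqref{eq:singlecellcomp} has ground state energy $0$, so all I need is a \emph{lower spectral gap} comparison. Concretely, both operators have the same essential structure; write $E_1^N$ for the second Neumann eigenvalue of $H^N_{\Lambda_1}(c(a))$, which is bounded below by some $\gamma>0$ uniformly (it only takes finitely many values as $c(a)$ ranges over ${\mathcal C}$, and by the strict dichotomy — alternative (i), since $E_0(a)$ is not identically zero — the corner ground state is simple with a gap). Then $H^N_{\Lambda_1}(c(a))-E_0 \le \frac{1}{\gamma}\|H^N_{\Lambda_1}(c(a))\|_{\mathrm{bdd}}\cdots$ — more precisely, on the range of $\chi_{\{E_0\}}$ the right side vanishes, and on its orthogonal complement $H^N_{\Lambda_1}(c(a))-E_0$ is bounded above by a constant times $(H^N_{\Lambda_1}(c(a))-E_0)$ restricted there in a trivial way; the real content is that $H^N_{\Lambda_1}(a)-E_0$ is bounded \emph{below} by a positive constant on the orthogonal complement of its own ground state, and that its ground state energy $E_0(a)-E_0\ge 0$. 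Using Corollary~\ref{cor:offcenterdecay}, $E_0(a)\ge E_0$ with equality only at the corners, and the spectral gap above $E_0(a)$ is bounded below uniformly in $a$ (again by continuity of eigenvalues, Lemma~\ref{lem:diff}, and compactness of $\overline G$). So $H^N_{\Lambda_1}(a)-E_0 \ge c_0\, (I - P_0(a))$ for a uniform $c_0>0$, while $H^N_{\Lambda_1}(c(a))-E_0 \le c_1\,(I-P_0(c(a)))$ for a uniform $c_1$. The remaining task is to control the projections $P_0(a)$ against the full-space operator: I would estimate $\langle u, (H^N_{\Lambda_1}(a)-E_0)u\rangle$ from below by $c_0(\|u\|^2 - |\langle u,u_0(\cdot,a)\rangle|^2)$ and then bound $|\langle u,u_0(\cdot,a)\rangle|^2$ in terms of $\langle u,(H^N_{\Lambda_1}(c(a))-E_0)u\rangle$ plus a small multiple of $\|u\|^2$, using that $u_0(\cdot,a)$ is \emph{not} in the kernel of $H^N_{\Lambda_1}(c(a))-E_0$ unless $a$ is already a corner. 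Quantifying "not in the kernel" uniformly is where the non-vanishing gradient of $E_0$ (Corollary~\ref{cor:offcenterdecay}) enters: it forces $\langle u_0(\cdot,a),(H^N_{\Lambda_1}(c(a))-E_0)u_0(\cdot,a)\rangle$ to stay bounded below by a positive function of $\mathrm{dist}(a,{\mathcal C})$ that is comparable to $E_0(a)-E_0$, and a short compactness-plus-continuity argument upgrades this to the uniform bound needed to close \eqref{eq:singlecellcomp}.

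The main obstacle I anticipate is the uniformity near the corners themselves, where both sides of \eqref{eq:singlecellcomp} degenerate simultaneously (both ground state energies approach $E_0$ and $P_0(a)\to P_0(c(a))$). One cannot just invoke compactness crudely there; instead I would treat a neighborhood of each corner separately, Taylor-expand $E_0(a)-E_0$ and $\|u_0(\cdot,a)-u_0(\cdot,c(a))\|$ using the $C^\infty$-dependence from Lemma~\ref{lem:diff}, and use that by Corollary~\ref{cor:offcenterdecay} the quantity $E_0(a)-E_0$ vanishes exactly to the order dictated by the (nonzero, in the interior directions) gradient — this matching of vanishing orders on the two sides is precisely what makes the ratio bounded. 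Away from the corners both operators are uniformly gapped above $E_0$ and the comparison is an easy consequence of operator norm bounds and continuity. Finally, reassembling the cell-wise estimate \eqref{eq:singlecellcomp} into \eqref{eq:formcomp} is immediate since Neumann bracketing preserved the direction of the inequality and $E_0$ is subtracted uniformly in each cell; the constant $C$ is the maximum of the finitely many (or, after the compactness argument, uniformly bounded) cell constants, hence independent of $L$ and $\omega$.
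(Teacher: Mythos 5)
Your Neumann-bracketing reduction to the single-cell estimate is exactly the paper's first step: decompose the quadratic form over the unit cells $\Lambda_1(i)$ and observe that both $H_{\omega,L}$ and $H_{c(\omega),L}$ split as direct sums since each $q(\cdot-i-\omega_i)$ is supported inside $\Lambda_1(i)$. You also correctly identify the two ingredients that must enter: the linear growth $E_0(a)-E_0 \gtrsim D(a)$ from Corollary~\ref{cor:offcenterdecay}, and a split into a near-corner case and a far-from-corner case. Up to this point the proposal is aligned with the paper.

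However, the proposed \emph{single-cell} argument has a genuine gap. You want to deduce $H^N_{\Lambda_1}(a)-E_0 \ge \tfrac{1}{C}\left(H^N_{\Lambda_1}(c(a))-E_0\right)$ from the two estimates $H^N_{\Lambda_1}(a)-E_0 \ge c_0\left(I-P_0(a)\right)$ and $H^N_{\Lambda_1}(c(a))-E_0 \le c_1\left(I-P_0(c(a))\right)$. The second inequality is false: $H^N_{\Lambda_1}(c(a))-E_0$ is an unbounded operator (it contains the Neumann Laplacian), so it cannot be dominated by any multiple of a bounded projection. More concretely, taking $u=u_n(\cdot,c(a))$ the $n$-th eigenfunction of $H^N_{\Lambda_1}(c(a))$, the right side of the target inequality $\tfrac{1}{C}\langle u,(H^N_{\Lambda_1}(c(a))-E_0)u\rangle = \tfrac{1}{C}(E_n(c(a))-E_0)$ grows without bound in $n$, while your proposed lower bound $c_0\left(\|u\|^2-|\langle u,u_0(\cdot,a)\rangle|^2\right)$ for the left side is at most $c_0$. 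The form inequality is actually true for such $u$, but only because $H^N_{\Lambda_1}(a)$ and $H^N_{\Lambda_1}(c(a))$ share the \emph{same} Laplacian and differ by a bounded potential — information that your projection estimates discard.

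The device the paper uses to keep the Laplacian in play is Lemma~2.1 of \cite{KN1}: if $A\ge 0$ and $A+B\ge c_0>0$ with $B$ bounded, then $A+tB\ge \min(\tfrac12,c_0)(A+t)$ for $t\in[0,\tfrac12]$. With $A=H^N_{\Lambda_1}(c)-E_0$ (unbounded, nonnegative), $B=-\rho\sigma\cdot\nabla q(\cdot-c)$ (bounded), and using the Taylor expansion of $q(\cdot-a)$ about $q(\cdot-c)$ together with the linear lower bound $E_0(a)-E_0\ge D(a)/C$ to verify $A+B\ge c_0$, one obtains the quadratic-form comparison directly, with the full Laplacian retained on both sides. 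That is the missing ingredient; without it, or some equivalent mechanism to transfer the unbounded part of the form, the projection-based argument cannot close.
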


We start by showing this for $L=0$, i.e.\ on the level of the single-site operators $H_{\Lambda_1}^N(a)$. This is achieved in the following two lemmas, which separately treat the cases where the bubble is close to a corner or not close.

Recall from Section~\ref{sec:introduction} that $E_0(a)$ denotes the ground state energy of
$H_{\Lambda_1}^N(a)$ for $a\in \overline{G}$ and that the almost sure spectral minimum $E_0$ of $H_{\omega}$ is given by $E_0(a)$ when $a$ is one of the corners ${\mathcal C}$ of $\overline{G}$. As a consequence of Corollary~\ref{cor:offcenterdecay}, $E_0(a)$ grows linearly in the distance of $a$ from the nearest corner, i.e.\ there exists a constant $C\in(0,\infty)$ such that,
\begin{equation}
  \label{eq:1}
  E_0(a)-E_0\geq\frac1C D(a)
\end{equation}
where $D(a) = \min_{c\in {\mathcal C}} |a-c|$.

We will use this to prove
\begin{lemma}
  \label{le:1}
  There exists $C>0$ and $\delta>0$ such that, if $D(a) \le \delta$, then
  \begin{equation}
    \label{eq:3}
    H_{\Lambda_1}^N(a)-E_0\geq \frac1C
    \left(H_{\Lambda_1}^N(c)-E_0+|a-c|)\right).
  \end{equation}
\end{lemma}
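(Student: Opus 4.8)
The plan is to work on the level of quadratic forms on $L^2(\Lambda_1)$ and to compare $H_{\Lambda_1}^N(a)$ with $H_{\Lambda_1}^N(c)$, where $c=c(a)$ is the corner nearest to $a$. By Lemma~\ref{lem:diff} the ground state eigenfunction $u_0(\cdot,a)$ and the ground state energy $E_0(a)$ depend smoothly on $a$ up to the boundary, and $a\mapsto u_0(\cdot,a)$ is smooth as an $L^2$-valued (indeed $H^1$-valued) map. The first step is to exhibit an explicit lower bound for the form of $H_{\Lambda_1}^N(a)-E_0(a)$ on the orthogonal complement of $u_0(\cdot,a)$: by the spectral gap above the non-degenerate ground state (which is locally uniform in $a$, again by Lemma~\ref{lem:diff}), there is $\gamma>0$ such that, for $u$ in the form domain $H^1(\Lambda_1)$,
\begin{equation*}
\langle u, (H_{\Lambda_1}^N(a)-E_0(a)) u\rangle \ge \gamma\,\|u - \langle u_0(\cdot,a),u\rangle u_0(\cdot,a)\|_2^2,
\end{equation*}
uniformly for $a$ in the compact set $\{D(a)\le\delta\}$, once $\delta$ is fixed small enough that this neighborhood is contained in a fixed compact subset of $\overline G$ on which the gap is bounded below.

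Next I would transport this to a statement about $H_{\Lambda_1}^N(c)-E_0$. The point is that $H_{\Lambda_1}^N(c) - H_{\Lambda_1}^N(a) = q(x-c)-q(x-a)$ is bounded in operator norm by $C|a-c|$ (since $q$ is smooth with compact support, so Lipschitz), and similarly $|E_0(a)-E_0(c)| = |E_0(a)-E_0| \le C|a-c|$, and $\|u_0(\cdot,a)-u_0(\cdot,c)\|_{H^1} \le C|a-c|$. Combining these, the form of $H_{\Lambda_1}^N(c)-E_0$ is bounded above by a constant times the form of $H_{\Lambda_1}^N(a)-E_0(a)$ plus an error of size $C|a-c|\,\|u\|_2^2$; equivalently, after rearranging and absorbing constants, one gets a bound of the shape
\begin{equation*}
H_{\Lambda_1}^N(a)-E_0(a) \ge \frac1{C}\left(H_{\Lambda_1}^N(c)-E_0\right) - C|a-c|\,\car.
\end{equation*}
To finish I need to handle the term $-C|a-c|\,\car$ and also recover the claimed $+|a-c|$ on the right of \eqref{eq:3}. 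This is where the gradient estimate from Section~\ref{sec:key} enters: by Corollary~\ref{cor:offcenterdecay} and \eqref{eq:1}, $E_0(a)-E_0 \ge \frac1C D(a) = \frac1C|a-c|$, so $E_0(a) = E_0 + (E_0(a)-E_0)$ contributes a genuine positive shift $\ge \frac1C|a-c|\,\car$ on the left. The strategy is to split: use a fraction of the spectral-gap bound together with the positivity of $E_0(a)-E_0$ to dominate the bad term $C|a-c|\,\car$ (on the complement of $u_0$ the gap gives room, and on the $u_0$-direction the energy shift $E_0(a)-E_0$ does), while a remaining fraction still controls $\frac1C(H_{\Lambda_1}^N(c)-E_0)$; and additionally carry along $\frac1C|a-c|\,\car$ to produce the $+|a-c|$ term, again at the expense of shrinking the constant. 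All of this works provided $\delta$ (hence $|a-c|\le$ some small constant) is chosen small enough that the various absorptions are legitimate — i.e. $|a-c|$ small compared to $\gamma$.

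The main obstacle is the interplay between the three scales: the spectral gap $\gamma$ (order $1$), the energy shift $E_0(a)-E_0$ (order $|a-c|$, small), and the perturbation error (also order $|a-c|$). On the orthogonal complement of $u_0(\cdot,a)$ the gap easily dominates everything, so the real work is in the one-dimensional $u_0$-direction, where the only positive quantity available is $E_0(a)-E_0 \asymp |a-c|$, and one must beat the $O(|a-c|)$ perturbation error by a constant factor and still have a positive multiple of $|a-c|$ left over for the right-hand side of \eqref{eq:3}. Making the bookkeeping of constants consistent — choosing $\delta$ small, then $C$ large, in the right order so that no circularity arises — is the delicate part; the linear lower bound \eqref{eq:1}, which is exactly the payoff of the non-vanishing gradient proved in Section~\ref{sec:key}, is what makes it possible. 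I would also need to be slightly careful that the projection $\langle u_0(\cdot,a),u\rangle$ versus $\langle u_0(\cdot,c),u\rangle$ mismatch only costs $O(|a-c|)$, which follows from the $H^1$-Lipschitz dependence of $u_0$.
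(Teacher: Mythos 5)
Your proposal is built on the same two ingredients as the paper's proof: the linear lower bound \eqref{eq:1}, $E_0(a)-E_0\geq |a-c|/C$, which is the payoff of the non-vanishing gradient from Section~\ref{sec:key}, and first-order (Lipschitz) control of $q(\cdot-a)-q(\cdot-c)$, combined by an interpolation/absorption step. So the route is essentially the same. Two points are worth tightening, though. First, the spectral-gap decomposition along $u_0(\cdot,a)$ and its complement, and the attendant $H^1$-Lipschitz control of $u_0$, are unnecessary overhead: all you need from the single-site operator is the trivial lower bound $H_{\Lambda_1}^N(a)-E_0(a)\geq 0$ together with \eqref{eq:1}. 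Second, your intermediate inequality $H_{\Lambda_1}^N(a)-E_0(a)\geq \frac1C(H_{\Lambda_1}^N(c)-E_0)-C|a-c|$ with the \emph{same} $C$ in both places is not the form that lets you close the argument; what you actually need is that shrinking the coefficient in front of $H_{\Lambda_1}^N(c)-E_0$ shrinks the error term proportionally. Concretely, for $\lambda\in(0,1)$ write $H_{\Lambda_1}^N(a)-E_0=\lambda\bigl(H_{\Lambda_1}^N(a)-E_0\bigr)+(1-\lambda)\bigl(H_{\Lambda_1}^N(a)-E_0\bigr)$, bound the first piece below by $\lambda\bigl(H_{\Lambda_1}^N(c)-E_0\bigr)-\lambda\|q(\cdot-a)-q(\cdot-c)\|_\infty$ and the second by $(1-\lambda)\bigl(E_0(a)-E_0\bigr)\geq (1-\lambda)|a-c|/C_1$, then take $\lambda$ small. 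This convex-combination step is precisely what the paper encapsulates by invoking Lemma~2.1 of \cite{KN1} (with $A=H_{\Lambda_1}^N(c)-E_0\geq 0$ and $B$ the linearization $-\rho\sigma\cdot\nabla q(\cdot-c)$, after Taylor-expanding $q$ so that the lemma applies with a fixed $B$); stating it in that form makes the bookkeeping of constants transparent, whereas your phrasing ``at the expense of shrinking constants'' leaves the crucial scaling implicit.
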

\begin{proof}
  Let $a\in \overline{G}$ and pick $c\in {\mathcal C}$ such that $D(a)=|a-c|$. As $q$ is $C^1$, write
  \begin{equation}
    \label{eq:2}
    \begin{split}
      H_{\Lambda_1}^N(a)-E_0&=H_{\Lambda_1}^N(c)-E_0+
      q(\cdot-a)-q(\cdot-c)
      \\&=H_{\Lambda_1}^N(c) - E_0 + (c-a) \cdot \nabla q (\cdot -c) + o(|a-c|).
    \end{split}
  \end{equation}

  Using~\eqref{eq:1}, one obtains that
  \begin{equation*}
    H_{\Lambda_1}^N(c)-E_0+ (c-a) \cdot \nabla q(\cdot -c) \ge \frac{1}{C} |a-c| + o(|a-c|).
  \end{equation*}
  Hence, there exists $\rho\in(0,1)$ small such that, for
  $\sigma\in\mathbb{S}^{d-1}$
  with $a = c+\rho\sigma \in \overline{G}$, one has
  \begin{equation*}
    H_{\Lambda_1}^N(c)-E_0- \rho\,\sigma \cdot \nabla
    q(\cdot-c)\geq\rho/2C.
  \end{equation*}
  We recall Lemma 2.1 of \cite{KN1}: If we suppose $A\geq 0$ and $A+B\geq c_0>0$, then
  we have $A+tB\geq \min(1/2,c_0)\cdot (A+t)$ for $t\in [0,1/2]$, since
  \[
  A+tB=(1-t)A+t(A+B)\geq \tfrac12 A+tc_0\geq \min(\tfrac12, c_0)(A+t).
  \]
Applying this with $A=H_{\Lambda_1}^N(c)-E_0$  and $B=-\rho\sigma\cdot \nabla q(\cdot-c)$,
we learn that there exists $C_\rho = \max(2,2C/\rho) >0$ such that for
  $t\in[0,1/2]$ and for $\sigma \in\mathbb{S}^{d-1}$
  with $c+\rho \sigma \in \overline{G}$, one has
  \begin{equation*}
    H_{\Lambda_1}^N(c)-E_0-t\rho \sigma \cdot\nabla
    q(\cdot-c)\geq\frac1{C_\rho}
    \left(H_{\Lambda_1}^N(c)-E_0+t\right).
  \end{equation*}

  Hence, in view of~\eqref{eq:2}, for $|a-c| \leq \rho/2$ and
  $t=|a-c|/\rho$, one has
  \begin{equation*}
    H_{\Lambda_1}^N(a)-E_0\geq \frac1{C_\rho}
    \left(H_{\Lambda_1}^N(c)-E_0+|a-c|/\rho\right)
    +o(|a-c|).
  \end{equation*}
  Finally, this implies that there exists $\delta>0$ such that for
  $|a-c|\leq\delta$, one has~\eqref{eq:3} and completes the
  proof of Lemma~\ref{le:1}.
\end{proof}

Bubbles which are not close to a corner are easier to handle and considered in
\begin{lemma}
  \label{le:2} Fix $\delta\in(0,1)$. There exists $C=C_\delta\in (0,\infty)$ such
  that, for $D(a)\geq\delta$ and all $c\in {\mathcal C}$, one has
  \begin{equation*}
    H_{\Lambda_1}^N(a)-E_0\geq \frac1C
    \left(H_{\Lambda_1}^N(c)-E_0+|a-c|\right).
  \end{equation*}
\end{lemma}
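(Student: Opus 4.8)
The plan is to exploit that when $D(a)\ge\delta$, the single-site operator $H_{\Lambda_1}^N(a)$ is bounded below strictly above $E_0$, uniformly in $a$, by a compactness argument, and then compare it to $H_{\Lambda_1}^N(c)-E_0$, which is itself bounded above uniformly in form. First I would note that the map $a\mapsto E_0(a)$ is continuous on the compact set $\overline G$ (Lemma~\ref{lem:diff}), so on the compact set $\{a\in\overline G:D(a)\ge\delta\}$ it attains a minimum; by~\eqref{eq:1} this minimum is at least $\frac1C\delta>0$. Hence there is $\eta=\eta_\delta>0$ with $E_0(a)-E_0\ge\eta$ for all such $a$, which is exactly the statement $H_{\Lambda_1}^N(a)-E_0\ge\eta$ in the sense of quadratic forms (the bottom of the spectrum controls the form from below).

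Next I would produce a matching upper bound for the right-hand side. Since $c\in{\mathcal C}$ is one of finitely many points and $q$ is bounded, $H_{\Lambda_1}^N(c)-E_0=-\Delta_N+q(\cdot-c)-E_0$ satisfies, in the form sense, $H_{\Lambda_1}^N(c)-E_0\le -\Delta_N + M$ for $M=\|q\|_\infty+|E_0|$; more usefully, one wants a bound by a multiple of the \emph{left}-hand side. The cleanest route is: on the form domain $H^1(\Lambda_1)$, write $H_{\Lambda_1}^N(c)-E_0 = \|\nabla u\|^2 + \langle u,(q(\cdot-c)-E_0)u\rangle$, and similarly $H_{\Lambda_1}^N(a)-E_0 = \|\nabla u\|^2 + \langle u,(q(\cdot-a)-E_0)u\rangle$. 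Subtracting, $\big(H_{\Lambda_1}^N(c)-E_0\big) - \big(H_{\Lambda_1}^N(a)-E_0\big) = \langle u,(q(\cdot-c)-q(\cdot-a))u\rangle \le 2\|q\|_\infty\|u\|^2$. Combined with the lower bound $H_{\Lambda_1}^N(a)-E_0\ge\eta$ and with $|a-c|\le C'$ (the diameter of $\overline G$), it follows that
\[
H_{\Lambda_1}^N(c)-E_0+|a-c| \le \big(H_{\Lambda_1}^N(a)-E_0\big) + \big(2\|q\|_\infty + C'\big)\,\mathrm{Id}
\le \Big(1 + \tfrac{2\|q\|_\infty+C'}{\eta}\Big)\big(H_{\Lambda_1}^N(a)-E_0\big),
\]
which is the assertion with $C=C_\delta = 1+(2\|q\|_\infty+C')/\eta$.

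I do not expect a serious obstacle here; the lemma is genuinely the ``easy case'' as the text indicates. The one point requiring a little care is the uniformity of the lower bound $\eta_\delta$, i.e.\ that the infimum of $E_0(a)-E_0$ over $\{D(a)\ge\delta\}$ is strictly positive — this is where Corollary~\ref{cor:offcenterdecay} (equivalently~\eqref{eq:1}) is used, rather than just the qualitative statement from \cite{BLS1} that the minimum is attained only in the corners; with only the qualitative statement one would still get $\eta_\delta>0$ by continuity plus compactness, so in fact~\eqref{eq:1} is not even strictly needed for this lemma. A second minor point is that everything must be phrased as form inequalities on $H^1(\Lambda_1)$ with the same form domain for all three operators, which is automatic since all of them are Neumann realizations on the fixed cube $\Lambda_1$ and $q$ is a bounded potential, so the form domains all coincide with $H^1(\Lambda_1)$.
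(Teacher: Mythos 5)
Your proof is correct and takes essentially the same route as the paper: a uniform lower bound $H_{\Lambda_1}^N(a)-E_0\ge\eta_\delta>0$ on the compact set $\{D(a)\ge\delta\}$, the trivial bound $|q(\cdot-a)-q(\cdot-c)|\le 2\|q\|_\infty$, and absorbing the resulting constants (including $|a-c|\le\mathrm{diam}\,\overline G$) into a multiple of $H_{\Lambda_1}^N(a)-E_0$ via that lower bound. Your side observation that the qualitative strict positivity from \cite{BLS1} plus compactness would already give $\eta_\delta>0$ here, without invoking the quantitative estimate~\eqref{eq:1}, is also correct.
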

\begin{proof}
  Fix $\delta\in(0,1)$. By (\ref{eq:1}) there exists $\eta>0$
  such that, for $a$ with $D(a)\geq\delta$,
  $H_{\Lambda_1}^N(a)-E_0\geq\eta$. Hence, as
  $|q(x-a)-q(x-c)| \geq-2\|q\|_\infty$,
  there exists $C>1$ such that, for $a$ satisfying $D(a)\geq\delta$ and all $c\in {\mathcal C}$,
  one has
  \begin{equation*}
    (C+1)(H_{\Lambda_1}^N(a)-E_0)-(H_{\Lambda_1}^N(c)-E_0)
    \geq C\eta-2\|q\|_\infty\geq \eta\geq \frac1C|a-c|.
  \end{equation*}
  Hence,
  \begin{equation*}
    \begin{split}
      H_{\Lambda_1}^N(a)-E_0&\geq\frac1{C+1}
      \left(H_{\Lambda_1}^N(c)-E_0+\frac1C|a-c|\right)\\
      &\geq\frac1{C(C+1)}\left(H_{\Lambda_1}^N(c)-E_0+
        |a-c|\right).
    \end{split}
  \end{equation*}
  This proves Lemma~\ref{le:2}.
\end{proof}

To complete the {\em proof of Proposition~\ref{prop:formcomp}} we
need to extend the result of the previous lemmas to general boxes
$\Lambda_{2L+1}$. This is done by an argument previously used in the
proof of Theorem~2.1 in \cite{KN1} which makes crucial use of
properties of Neumann boundary conditions.

For $\psi\in H^1(\Lambda_{2L+1})$, the form domain of $H_{\omega,L}$, one has that the restriction of $\psi$ to $\Lambda_1(i)$ is in $H^1(\Lambda_1(i))$ for each $i\in \Lambda'_{2L+1}$ and
\begin{equation*}
  \langle(H_{\omega,L}-E_0)\psi,\psi\rangle=
  \sum_{i\in \Lambda'_{2L+1}}\langle(-\Delta-E_0+q(\cdot-i-\omega_i)\psi,
  \psi\rangle_{\Lambda_1(i)}
\end{equation*}
where $\langle\cdot,\cdot\rangle_{A}$ denotes the standard scalar
product in $L^2(A)$. This may also be applied to the modified displacement model $H_{c(\omega)}$,
\begin{equation*}
  \langle(H_{c(\omega),L}-E_0)\psi,\psi\rangle=
  \sum_{i\in \Lambda'_{2L+1}}\langle(-\Delta-E_0+q(\cdot-i-c(\omega_i))\psi,
  \psi\rangle_{\Lambda_1(i)}.
\end{equation*}
Hence, using Lemmas~\ref{le:1} and~\ref{le:2} on each term in the sum, we obtain that
\begin{equation*}
  \begin{split}
  \langle(H_{\omega,L}-E_0)\psi,\psi\rangle&\geq\frac1C
  \sum_{i\in \Lambda'_{2L+1}}\langle(H_{\Lambda_1}^N(c(\omega)_i)-E_0+|\omega_i-c(\omega_i)|) \psi, \psi \rangle_{\Lambda_1(i)}
  \\&\geq \frac{1}{C} \langle(H_{c(\omega),L}-E_0)\psi,\psi\rangle,
  \end{split}
\end{equation*}
where the positive term $\sum_i \langle |\omega_i - c(\omega_i)|\psi, \psi \rangle_{\Lambda_1(i)}$ was omitted. This completes the proof of Proposition~\ref{prop:formcomp}.

The random displacement model $
H_{c(\omega)}=-\Delta+\sum_{i\in\Z^d}q(\cdot-c(\omega_i))$ has i.i.d.\ displacement vectors $(c(\omega))_{i\in\Z^d}$, whose distribution is discrete with support given by the corners ${\mathcal C}$ of $\overline{G}$. Thus, it satisfies the assumptions of Theorem 4.1 in~\cite{KN2}; hence, by the proof of Theorem~1.2 in \cite{KN2}, in particular (3.2) in the same work, there exist $C_1>0$, $C_2<\infty$ and $\mu>1$ such that, for all $L$,
\begin{equation*}
  \PP(H_{c(\omega),L}\text{ has an eigenvalue less than
  }E_0+C_1/L^2)\leq C_2 L^d \mu^{-L}.
\end{equation*}

Note that this requires that $d \ge 2$. The argument leading to Theorem~4.1 of \cite{KN2} uses
crucially the uniqueness (up to translations) of the minimizing configuration of the potentials
proved in \cite{BLS2}, which holds only for $d \ge 2$.  See Section~\ref{sec:discussion} below for a comment on the differences for $d=1$. The assumption $r<1/4$ in (\ref{eq:qsupp}) was used in \cite{BLS2} for a more technical reason (rather than just $r<1/2$)  and thus also enters our argument here.

Using Proposition~\ref{prop:formcomp}, we immediately obtain the following finite volume bound on the probability for finding low lying eigenvalues. It is this result which enters the proof of localization via multiscale analysis.
\begin{corollary}
  \label{le:3}
  There exist $C_1>0$, $C_2<\infty$ and $\mu>1$ such that, for all $L$,
  \begin{equation} \label{eq:Lifshitzbound}
    \PP( H_{\omega,L}\text{ has an eigenvalue less than
    }E_0+C_1/L^2)\leq C_2 L^d \mu^{-L}.
  \end{equation}
\end{corollary}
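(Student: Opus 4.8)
The plan is to combine the quadratic form comparison of Proposition~\ref{prop:formcomp} with the finite-volume eigenvalue probability bound for the closest-corner model $H_{c(\omega),L}$ that was just recalled from \cite{KN2}. The basic observation is that~\eqref{eq:formcomp} says $H_{\omega,L}-E_0 \ge \frac1C(H_{c(\omega),L}-E_0)$ in the sense of quadratic forms, so by the min-max principle the $n$-th eigenvalue $E_n(H_{\omega,L})$ satisfies $E_n(H_{\omega,L}) - E_0 \ge \frac1C (E_n(H_{c(\omega),L}) - E_0)$ for every $n$, and in particular for the ground state. Equivalently, $E_0(H_{\omega,L}) - E_0 \ge \frac1C (E_0(H_{c(\omega),L}) - E_0)$.

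From this the deduction is immediate. If $H_{\omega,L}$ has an eigenvalue less than $E_0 + C_1'/L^2$ (with a threshold $C_1'$ to be chosen), then its ground state energy is below $E_0 + C_1'/L^2$, hence by the displayed inequality $E_0(H_{c(\omega),L}) - E_0 \le C(C_1'/L^2)$, i.e.\ $H_{c(\omega),L}$ has an eigenvalue below $E_0 + C C_1'/L^2$. Choosing $C_1' = C_1/C$, where $C_1$ is the constant from the $\cite{KN2}$-bound recalled just before the statement, we get the inclusion of events
\[
\{H_{\omega,L}\text{ has an eigenvalue} < E_0 + C_1'/L^2\} \subseteq \{H_{c(\omega),L}\text{ has an eigenvalue} < E_0 + C_1/L^2\}.
\]
Taking probabilities and applying the recalled bound $\PP(\cdot) \le C_2 L^d \mu^{-L}$ yields~\eqref{eq:Lifshitzbound} with the constant $C_1$ replaced by $C_1' = C_1/C$, and the same $C_2$ and $\mu$; after renaming $C_1'$ as $C_1$ this is exactly the claimed statement.

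There is really no serious obstacle here: the corollary is a direct consequence of the form comparison (Proposition~\ref{prop:formcomp}) together with the already-quoted $\cite{KN2}$ estimate, and the only thing to be careful about is that the form inequality for the full operators translates correctly into an eigenvalue inequality via min-max, and that the monotone rescaling of the threshold $E_0+C_1/L^2$ by the constant $C$ is absorbed into the (renamed) constant $C_1$. The key structural input — that the distribution of the closest-corner displacements $(c(\omega_i))_i$ is i.i.d., discrete, and supported on ${\mathcal C}$, so that Theorem~4.1 of \cite{KN2} applies, which in turn needs $d\ge2$ and the uniqueness result of \cite{BLS2} — has been supplied in the paragraphs preceding the statement, so nothing further is needed.
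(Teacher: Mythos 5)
Your proof is correct and is exactly the route the paper takes: the paper simply states that Corollary~\ref{le:3} follows "immediately" from Proposition~\ref{prop:formcomp} together with the quoted bound from \cite{KN2}, and your argument fills in that immediate step — the form inequality $H_{\omega,L}-E_0 \ge \frac{1}{C}(H_{c(\omega),L}-E_0)$ passes via min-max to the ground state energies, so the event that $H_{\omega,L}$ has an eigenvalue below $E_0+C_1'/L^2$ is contained in the event that $H_{c(\omega),L}$ has one below $E_0+CC_1'/L^2$, and choosing $C_1'=C_1/C$ finishes the argument. The only bookkeeping item (which you handle correctly) is renaming $C_1'$ as $C_1$ at the end.
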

The results of \cite{KN2} also show that the integrated
density of states of $H_{c(\omega)}$, say $\tilde N$,
satisfies a Lifshitz tails estimate of the form
\begin{equation} \label{eq:tildetail}
  \limsup_{\substack{E\to E_0\\E> E_0}}\frac{\log|\log\tilde
    N(E)|}{\log(E-E_0)}\leq -\frac12.
\end{equation}
By Proposition~\ref{prop:formcomp} we see that $N$, the integrated density of states
of $H_\omega$, satisfies, for $E\geq E_0$,
\begin{equation*}
 N(E)\leq\tilde N(E_0+C(E-E_0)).
\end{equation*}
Hence, we have proven the following result which is not required in the localization proof but stated here for its independent interest.
\begin{theorem}
  \label{thr:1}
 The IDS $N$ of $H_{\omega}$ has a Lifshitz tail of the form
  \begin{equation*}
    \limsup_{\substack{E\to E_0\\E> E_0}}\frac{\log|\log
      N(E)|}{\log(E-E_0)}\leq -\frac12.
  \end{equation*}
\end{theorem}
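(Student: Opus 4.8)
The plan is to combine the form comparison of Proposition~\ref{prop:formcomp} with the Lifshitz tail bound~\eqref{eq:tildetail} for the closest-corner model, exploiting that the iterated logarithm in the statement makes multiplicative constants in the energy irrelevant.

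\emph{Step 1: from forms to counting functions.} Apply Proposition~\ref{prop:formcomp} on $\Lambda_{2L+1}$. The form inequality $H_{\omega,L}-E_0\ge\frac1C(H_{c(\omega),L}-E_0)$ together with the min-max principle gives, for the $n$-th eigenvalue counted with multiplicity, $\lambda_n(H_{\omega,L})-E_0\ge\frac1C(\lambda_n(H_{c(\omega),L})-E_0)$, hence for the finite-volume eigenvalue counting functions
\[
\#\{\lambda\in\sigma(H_{\omega,L}):\lambda<E\}\le\#\{\lambda\in\sigma(H_{c(\omega),L}):\lambda<E_0+C(E-E_0)\},\qquad E\ge E_0.
\]
Dividing by $|\Lambda_{2L+1}|$, taking expectations, and letting $L\to\infty$, the left side converges to $N(E)$ and the right side to $\tilde N(E_0+C(E-E_0))$ at common continuity points of these monotone distribution functions; by monotonicity and right-continuity the bound $N(E)\le\tilde N(E_0+C(E-E_0))$ then holds for all $E\ge E_0$, which is exactly the displayed inequality preceding the statement.

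\emph{Step 2: absorbing the constant.} Set $E'=E_0+C(E-E_0)$, so $E'-E_0=C(E-E_0)$ and $E'\to E_0^+$ as $E\to E_0^+$. For $E$ close enough to $E_0$ both $N(E)$ and $\tilde N(E')$ lie in $(0,1)$, so $N(E)\le\tilde N(E')$ yields $|\log N(E)|\ge|\log\tilde N(E')|$ and hence $\log|\log N(E)|\ge\log|\log\tilde N(E')|>0$. Dividing by $\log(E-E_0)<0$,
\[
\frac{\log|\log N(E)|}{\log(E-E_0)}\le\frac{\log|\log\tilde N(E')|}{\log(E'-E_0)}\cdot\frac{\log(E'-E_0)}{\log(E-E_0)}.
\]
Since $\log(E'-E_0)=\log(E-E_0)+\log C$ and $\log(E-E_0)\to-\infty$, the last factor tends to $1$, while the first factor has $\limsup\le-\tfrac12$ by~\eqref{eq:tildetail} and is in particular eventually negative; taking $\limsup$ as $E\to E_0^+$ therefore gives the claimed bound $-\tfrac12$.

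\emph{Main obstacle.} All of the substance has already been assembled: Proposition~\ref{prop:formcomp}, which says that pushing every bubble to its nearest corner is the worst case for the IDS near $E_0$ and which in turn rests on the non-flatness of $E_0(a)$ at the corners (Corollary~\ref{cor:offcenterdecay}), and the Lifshitz tail~\eqref{eq:tildetail} for the discrete closest-corner model proved in~\cite{KN2} via the uniqueness of the periodic minimizing configuration from~\cite{BLS2}. The only points needing (routine) care are the passage from the finite-volume operators to the IDS---existence of the limiting distribution function, its continuity points, measurability---and the elementary book-keeping in Step~2 showing that the iterated logarithm absorbs the constant $C$; neither is a genuine difficulty.
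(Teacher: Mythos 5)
Your proposal is correct and follows essentially the same route as the paper: it derives $N(E)\le\tilde N(E_0+C(E-E_0))$ from Proposition~\ref{prop:formcomp} via min-max and a passage to the IDS, and then observes that the iterated logarithm absorbs the multiplicative constant $C$ when comparing with the bound~\eqref{eq:tildetail} from \cite{KN2}. The paper states both of these steps without detail, and your Steps 1 and 2 supply the routine justifications correctly.
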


We expect that the Lifshitz exponent $1/2$ is not optimal and should instead be $d/2$, the standard value known from the Anderson model. We can think of two ways in which one could try to get this improvement, none of which we know how to make rigorous. One approach would be to show that (\ref{eq:tildetail}) holds with exponent $d/2$, which would immediately give the same in Theorem~\ref{thr:1}. That this should hold is discussed in \cite{KN2}, where the exponent $1/2$ is found due to one part of the proof which uses an essentially one-dimensional argument.

Another way to argue would be to make use of the term $\sum_i |\omega_i - c(\omega_i)| \chi_{\Lambda_1(i)}$ which was dropped in the proof of Proposition~\ref{prop:formcomp}. Under our assumptions (\ref{eq:rhosupp}) and (\ref{eq:rhosmooth}) this means that one would have to show standard Lifshitz tails for an Anderson-type model where the unperturbed operator is the random operator $H_{c(\omega)}$. However, the known methods do not work for the irregular background potential appearing here.

\section{A Wegner estimate} \label{sec:wegner}

Throughout this section, we write $H_i(a)=H_{\Lambda_1(i)}^N(a)$ for simplicity.
Our goal in this section if to prove the following Wegner estimate for energies near
$E_0=\inf_{a\in\overline{G}}E_0(a)$, where $E_0(a)=\inf\sigma(H_i(a))$:

\begin{theorem}
  \label{thm:wegner}
  There exists $\delta>0$ such that, for any $\alpha\in(0,1)$, there
  exists $C_\alpha>0$ such that, for every interval $I\subset
  [E_0,E_0+\delta]$ and $L\in \N$,
  \begin{equation} \label{eq:wegnerest}
    \E(\tr\chi_I(H_{\omega,L})) \le C_\alpha |I|^\alpha L^{d}.
  \end{equation}
\end{theorem}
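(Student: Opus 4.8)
\textbf{Proof proposal for Theorem~\ref{thm:wegner}.}

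The plan is to follow the Wegner argument from \cite{CHN} and \cite{HK}, adapted to the non-monotone RDM by replacing the usual single-site monotonicity with a vector field along which $H_{\omega,L}$ is monotone near the corners. First I would invoke Corollary~\ref{cor:offcenterdecay}: since $\nabla E_0(a)\neq 0$ at each corner $c\in\mathcal C$ (pointing ``inward'', away from $c$), there is a smooth vector field $\omega_i\mapsto \xi(\omega_i)$ on a neighborhood of $\mathcal C$, vanishing outside that neighborhood, such that differentiating $H_i(\omega_i)=-\Delta+q(\cdot-i-\omega_i)$ along $\xi$ gives a form-positive contribution near the ground state energy. This is exactly the content to be packaged as Proposition~\ref{prop:key} (the analogue of the positivity used in \cite{KNNN}): along $\xi$ one has, in the sense of quadratic forms restricted to the relevant low-energy subspace, a bound of the type $\partial_\xi (H_{\omega,L}-E_0) \ge \kappa\, \chi_{[E_0,E_0+\delta]}(H_{\omega,L})$ plus an error controlled by $H_{\omega,L}-E_0$ itself, so that by the Lemma~2.1-type trick of \cite{KN1} (already quoted in the proof of Lemma~\ref{le:1}) one gets genuine positivity of the derivative on the spectral window $I\subset[E_0,E_0+\delta]$. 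Here the smoothness assumption (\ref{eq:qsmooth}) and the $C^1$ density (\ref{eq:rhosmooth}) are what make the change of variables and the averaging below legitimate.

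Next I would run the standard spectral-averaging machinery. Fix $L$ and the window $I=[E,E+\varepsilon]\subset[E_0,E_0+\delta]$. Using a smooth switch function $\rho_I$ with $\car_I \le \rho_I$, $\rho_I'\le 0$ on $[E+\varepsilon,\infty)$ and the bound $\rho_I(x)-\rho_I(x-\varepsilon)\ge \car_I(x)$, one writes $\tr\chi_I(H_{\omega,L}) \le \tr[\rho_I(H_{\omega,L}) - \rho_I(H_{\omega,L}+\varepsilon\kappa^{-1}\,\text{(shift)})]$ and realizes the energy shift as the integral of the derivative of $H_{\omega,L}$ along the flow generated by $\xi$ acting simultaneously on all the $\omega_i$, $i\in\Lambda'_{2L+1}$. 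Because $\partial_\xi(H_{\omega,L}-E_0)$ is form-positive on the window (Proposition~\ref{prop:key} summed over cells, exactly as the Neumann bracketing in Section~\ref{sec:lifshitz} sums single-site estimates), $s\mapsto \tr\rho_I(H_{\omega,L}^{(s)})$ is monotone in the flow parameter $s$, and one gets $\E(\tr\chi_I(H_{\omega,L})) \le C \sum_{i\in\Lambda'_{2L+1}} \E\big(\text{boundary terms of the }s\text{-integral}\big)$. The $C^1$ density lets one integrate by parts in each $\omega_i$, moving the $s$-derivative onto $\rho(\omega_i)$ and onto the Jacobian of the flow; the resulting single-site spectral shift integral $\int \tr\big(\rho_I(H_{\omega,L})\,\partial_s(\cdots)\big)\,d\omega_i$ is estimated, after tracing out, by $C\|\rho'\|_1 \le C'$ for each $i$, since $\rho_I' $ is supported in an interval of length comparable to $\varepsilon$ and $H_{\omega,L}$ has $O(L^d)$ eigenvalues in a fixed neighborhood of $E_0$ (a Weyl-type count, or just the a priori bound $\tr\chi_{[E_0,E_0+1]}(H_{\omega,L})\le CL^d$). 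Summing over the $O(L^d)$ cells gives (\ref{eq:wegnerest}) first with $\alpha=1$; the $|I|^\alpha$ for $\alpha<1$ then comes from interpolating this linear-in-$|I|$ bound with the trivial bound $\tr\chi_I(H_{\omega,L})\le CL^d$, exactly as in \cite{CHN}, \cite{HK}, and yields H\"older continuity of the IDS as a by-product.

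The main obstacle is establishing Proposition~\ref{prop:key}, i.e.\ turning the \emph{pointwise} (in $a$) information ``$\nabla E_0(a)\neq 0$ at corners'' from Corollary~\ref{cor:offcenterdecay} into an \emph{operator} (quadratic-form) statement about $H_{\omega,L}$ on a genuine energy window, uniformly in $L$ and in the configuration $\omega$. The subtlety is that $\partial_{a_j}H_i(a) = -(\partial_{x_j}q)(\cdot-i-a)$ is a fixed bounded potential, not sign-definite, so form-positivity only holds \emph{after projecting to the low-energy sector} — and one must control the off-diagonal coupling to higher modes via the spectral gap above $E_0(a)$ together with the Neumann bracketing that decouples cells. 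I expect this to require the second-order-perturbation input from Section~\ref{sec:key} (the non-flatness of $E_0$) combined with a commutator/virial-type computation in the spirit of \cite{KNNN}, plus the uniform gap estimate of Lemma~\ref{lem:diff}; once that is in hand, the rest is the by-now-standard Wegner averaging sketched above.
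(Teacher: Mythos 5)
Your high-level strategy is the paper's strategy: package the non-flatness of $E_0(a)$ at the corners as a form-positivity statement (Proposition~\ref{prop:key}) for the derivative of $H_{\omega,L}$ along the vector field pointing toward the closest corner, cut off near $\mathcal C$, and then run a CHN/HK-type Wegner argument. You also correctly identify the technical core (low-energy projection plus gap plus Neumann bracketing) and why the $C^1$-density and the cut-off near $\mathcal C$ are what one needs.

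The gap is in the final averaging step, and it is not cosmetic. After integration by parts in $\omega_i$ (which the paper performs in polar coordinates about each corner, moving the radial derivative onto $\eta(r)\rho(r\theta-c)r^{d-1}$), what survives inside the trace is the difference $\chi(H_{\omega,L}-E+t)-\chi(H_{\omega^{(i)},L}-E+t)$, i.e.\ a spectral shift function $\xi(\,\cdot\,;H_{\omega,L},H_{\omega^{(i)},L})$ paired against $\chi'$. Your sketch estimates this "after tracing out, by $C\|\rho'\|_1$" together with a crude eigenvalue count, but that quantity is \emph{not} controlled that way: the SSF for the pair $(H_{\omega,L},H_{\omega^{(i)},L})$ is not bounded and is not known to lie in $L^1$; the mechanism that makes the argument close is the Gohberg--Krein identity followed by the invariance principle and the Combes--Hislop--Nakamura bound $\|\xi(\cdot;A,B)\|_{L^p}\le\|A-B\|_{1/p}^{1/p}$ for $p>1$ (Section~5 of \cite{HK}), applied to $g_k(H_{\omega,L})-g_k(H_{\omega^{(i)},L})$ in the super trace class $\mathcal I_{1/p}$. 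With H\"older against $\chi'$ this yields $\varepsilon^{1/q-1}$, and the extra $t$-integration over $[-2\varepsilon,2\varepsilon]$ supplies the remaining factor $\varepsilon$, giving $\varepsilon^{1/q}L^d$, i.e.\ $|I|^\alpha L^d$ for $\alpha=1/q<1$. In particular the exponent $\alpha<1$ is a \emph{direct output} of the $L^p$ SSF theory; it is \emph{not} obtained by first proving a linear-in-$|I|$ bound and then interpolating. Your claim to get $\alpha=1$ first is not available by this method (and would make the subsequent interpolation vacuous); closing that step requires the SSF $L^p$ machinery you omitted.

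A smaller point: the paper does not flow the configuration; it differentiates directly along the field $\partial'_{c,\omega_i}$ and integrates by parts cell by cell, which keeps the Jacobian issues you allude to entirely explicit (they are the $r^{d-1}$ and $\eta(r)$ factors). That is a cleaner implementation of the same idea.
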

\noindent By Chebychev's inequality this implies that for every interval $[E-\eta,E+\eta] \subset [E_0, E_0+\delta]$,
\begin{equation} \label{eq:Wegnercl}
\PP(\mbox{dist}(\sigma(H_{\omega,L},E)) \le \eta) \le C_{\alpha}' \eta^{\alpha} L^e,
\end{equation}
the more classical form of the Wegner estimate used in applications.
As a consequence of the existence of the integrated density
of states (see e.g.~\cite{Stollmann}) and Theorem~\ref{thm:wegner}, we also
get
\begin{corollary}
  \label{cor:1}
  There exists $\delta>0$ such that, for any $\alpha\in(0,1)$, the
  integrated density of states of $H_\omega$ is $\alpha$-H{\"o}lder
  continuous in $[E_0,E_0+\delta]$.
\end{corollary}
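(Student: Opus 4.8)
The plan is to prove Corollary~\ref{cor:1} as a standard consequence of the Wegner estimate of Theorem~\ref{thm:wegner} together with the existence of the integrated density of states (IDS) as a vague limit of finite-volume eigenvalue counting measures. First I would recall that under the hypotheses of this paper the IDS $N$ of $H_\omega$ exists: one has the almost sure and $L^1$-convergence
\[
N(E) = \lim_{L\to\infty} \frac{1}{|\Lambda_{2L+1}|} \,\#\{\text{eigenvalues of } H_{\omega,L} \le E\}
\]
at every continuity point $E$ of $N$, where $H_{\omega,L}$ is the Neumann restriction used throughout Section~\ref{sec:wegner} (see e.g.~\cite{Stollmann}; Neumann bracketing and ergodicity give the existence, and the particular boundary condition does not affect the limit). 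Equivalently, the finite-volume eigenvalue counting measures $\mu_{\omega,L} = |\Lambda_{2L+1}|^{-1}\sum_n \delta_{E_n(H_{\omega,L})}$ converge vaguely, almost surely, to the deterministic measure $dN$.

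The key step is then a passage to the limit. Fix $\delta>0$ and $\alpha\in(0,1)$ as provided by Theorem~\ref{thm:wegner}, and let $[a,b]\subset[E_0,E_0+\delta]$ with $b-a$ small. For each $L$ we have, by Fubini and Theorem~\ref{thm:wegner},
\[
\E\big(\mu_{\omega,L}([a,b])\big) = \frac{1}{|\Lambda_{2L+1}|}\,\E\big(\tr\chi_{[a,b]}(H_{\omega,L})\big) \le \frac{C_\alpha |b-a|^\alpha L^d}{|\Lambda_{2L+1}|} \le C_\alpha' |b-a|^\alpha,
\]
since $|\Lambda_{2L+1}| = (2L+1)^d \ge L^d$. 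Now $N(b)-N(a) = dN([a,b])$ whenever $a,b$ are continuity points, and by vague convergence together with a standard approximation argument (sandwiching $\chi_{[a,b]}$ between continuous functions, or using that vague convergence of probability-normalized measures controls the mass of intervals with continuity endpoints) one gets $dN([a,b]) \le \liminf_L \mu_{\omega,L}([a',b'])$ for slightly enlarged continuity endpoints $a'<a$, $b'>b$; since the estimate above is uniform in $L$ and the bound is continuous in the endpoints, letting $a'\uparrow a$, $b'\downarrow b$ yields $N(b)-N(a)\le C_\alpha' |b-a|^\alpha$. Because continuity points are dense and $N$ is monotone, this bound extends to all $a,b$ in $[E_0,E_0+\delta]$, which is precisely $\alpha$-Hölder continuity of $N$ on that interval.

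The only mild subtlety — and the step I would be most careful with — is the interchange of the expectation with the (almost sure, vague) limit defining $N$: one wants $N(b)-N(a) \le \liminf_L \E(\mu_{\omega,L}([a,b]))$ rather than merely $\E(\liminf_L \mu_{\omega,L}(\cdot))$, and for this it is cleanest to invoke the known fact that $\E(\mu_{\omega,L}([a,b]))$ itself converges to $dN([a,b])$ at continuity points (the $L^1$-convergence of the finite-volume IDS to $N$), which is part of the standard existence theory cited in \cite{Stollmann}. With that input the argument is immediate and requires no monotonicity in the random parameters, so it applies verbatim in the present non-monotone setting. I would therefore present the proof in three short moves: (1) recall existence and $L^1$-convergence of the finite-volume IDS; (2) average the Wegner bound and divide by the volume; (3) pass to the limit and extend from continuity points to all of $[E_0,E_0+\delta]$ by monotonicity.
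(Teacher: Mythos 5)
Your proof is correct and follows exactly the standard argument the paper has in mind; the paper itself gives no explicit proof and merely notes that the corollary is a consequence of the existence of the IDS (citing \cite{Stollmann}) and the Wegner estimate of Theorem~\ref{thm:wegner}, which is precisely the chain of reasoning you spell out (average the Wegner bound, normalize by volume, pass to the infinite-volume limit via $L^1$-convergence of the finite-volume counting measures, then extend from continuity points by monotonicity).
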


The rest of this section will be devoted to the proof of Theorem~\ref{thm:wegner}.

For a function $f$ on $G$ we set
\[
(\partial_c f)(a) :=  \frac{c(a)-a}{|c(a)-a|} \cdot \nabla f(a),
\]
with $c(a)$ denoting the corner closest to $a$ as in Section~\ref{sec:lifshitz}. Thus, $\partial_c$ denotes the directional derivative in the direction of the closest corner, where points $a$ with multiple closest corners will not play a role in the arguments below (starting from (\ref{eq:cutoff}) below we introduce a cut-off which restricts the values of $a$ relevant for the proof to small neighborhoods of the corners).

By Corollary~\ref{cor:offcenterdecay} there exist $\delta_0>0$ and $r_0>0$ such that
\begin{equation} \label{eq:boundonderiv}
\partial_{c}E_0(a) \leq -\delta_0 \quad \mbox{for all $a\in A_{r_0} := \{a \in G: |c(a)-a| \leq r_0\}$},
\end{equation}
a neighborhood of $\mathcal{C}$.

Let $\eta\in C^\infty(\mathbb{R})$ such that $0\le \eta \le 1$, $\eta(r)=1$ for $r\leq r_0$ and
$\eta(r)=0$ for $r \geq 2r_0$. Using this function as a cut-off, we localize the vector fields associated with $\partial_c$ onto a neighborhood of the corners, defining
\begin{equation} \label{eq:cutoff}
 (\partial'_c f)(a) := \eta(|c(a)-a|) (\partial_c f)(a).
\end{equation}

For each $i\in \mathbb{Z}^d$, we write
\begin{equation} \label{eq:Hamderiv}
\partial'_{c,\omega_i}H_\omega
=\partial'_{c,\omega_i} q(\cdot-i-\omega_i)
=-\eta(|c(\omega_i)-\omega_i|)
\frac{c(\omega_i)-\omega_i}{|c(\omega_i)-\omega_i|}  \cdot(\nabla q)(\cdot -i-\omega_i).
\end{equation}

If $\psi \in H^1(\Lambda_{2L+1})$, the form domain of $H_{\omega,L}$, then $\psi_i := \psi|_{\Lambda_1(i)} \in H^1(\Lambda_1(i))$, the form domain of $H_i(\omega_i)$, and, with the usual abuse of notation for the quadratic form,
\begin{equation} \label{eq:formdecomp}
\langle \psi, H_{\omega,L} \psi \rangle = \sum_{i\in \Lambda_{2L+1}'} \langle \psi_i, H_i(\omega_i) \psi_i \rangle,
\end{equation}
as well as
\begin{equation} \label{eq:derlocal}
\sum_{i\in \Lambda_{2L+1}'} \langle \psi, \partial'_{c,\omega_i} H_{\omega,L} \psi \rangle = \sum_{i\in \Lambda_{2L+1}'} \langle \psi_i, \partial'_{c,\omega_i} H_i(\omega_i) \psi_i \rangle.
\end{equation}

\begin{proposition} \label{prop:key}
There exist $\delta_1>0$ and $\delta_2>0$ such that
\begin{equation} \label{eq:key}
-\sum_{i\in \Lambda_{2L+1}'} \langle \psi, (\partial'_{c,\omega_i}
H_{\omega,L}) \psi \rangle \ge \delta_1 \|\psi\|^2
\end{equation}
for all $L\in \N$, and $\psi \in H^1(\Lambda_{2L+1})$ with $\langle \psi, (H_{\omega,L}-E_0)\psi \rangle \le \delta_2 \|\psi\|^2$.
\end{proposition}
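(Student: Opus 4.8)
\emph{Proof plan.} The plan is to reduce the estimate to a single cell by means of the Neumann decompositions~\eqref{eq:formdecomp} and~\eqref{eq:derlocal}, which rewrite both $\langle\psi,(H_{\omega,L}-E_0)\psi\rangle$ and $-\sum_i\langle\psi,(\partial'_{c,\omega_i}H_{\omega,L})\psi\rangle$ as sums over the unit cells $\Lambda_1(i)$, $i\in\Lambda'_{2L+1}$, of quantities depending only on $\psi_i=\psi|_{\Lambda_1(i)}$ and on $\omega_i$; moreover $\sum_i\|\psi_i\|^2=\|\psi\|^2$. Since $H_i(\omega_i)\ge E_0(\omega_i)\ge E_0$, each summand $\langle\psi_i,(H_i(\omega_i)-E_0)\psi_i\rangle$ is non-negative, so the hypothesis $\langle\psi,(H_{\omega,L}-E_0)\psi\rangle\le\delta_2\|\psi\|^2$ forces all of them, and their sum, to be small. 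I would also record a uniform spectral gap: the Neumann ground state of $H_i(a)$ on the connected cell $\Lambda_1(i)$ is simple, so $E_1(a)>E_0(a)\ge E_0$ for every $a\in\overline G$, and by continuity of the eigenvalues (Lemma~2.1 of \cite{BLS1}) and compactness of $\overline G$ there is $\gamma_0>0$ with $E_1(a)-E_0(a)\ge\gamma_0$, hence $E_1(a)-E_0\ge\gamma_0$, for all $a\in\overline G$.

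The heart of the matter is a single-site bound with uniform constants $c_0>0$ and $C_0<\infty$,
\begin{equation*}
-\,\langle\psi_i,(\partial'_{c,\omega_i}H_i(\omega_i))\psi_i\rangle\ \ge\ c_0\|\psi_i\|^2-C_0\,\langle\psi_i,(H_i(\omega_i)-E_0)\psi_i\rangle\qquad\text{for all }\psi_i\in H^1(\Lambda_1(i)),
\end{equation*}
proved by distinguishing cells with $\omega_i$ near a corner from the rest. If $\omega_i\in A_{r_0}$, then $\eta(|c(\omega_i)-\omega_i|)=1$, so by~\eqref{eq:Hamderiv} the operator $-\partial'_{c,\omega_i}H_i(\omega_i)$ is multiplication by $W:=\sigma\cdot(\nabla q)(\cdot-i-\omega_i)$ with $\sigma=(c(\omega_i)-\omega_i)/|c(\omega_i)-\omega_i|$. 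I would decompose $\psi_i=\beta\varphi_i+\psi_i^{\perp}$, with $\varphi_i$ the normalized ground state of $H_i(\omega_i)$, $\beta=\langle\varphi_i,\psi_i\rangle$ and $\psi_i^{\perp}\perp\varphi_i$. The first-order perturbation formula~\eqref{eq:fopt} gives $\langle\varphi_i,W\varphi_i\rangle=-\partial_c E_0(\omega_i)\ge\delta_0$ by~\eqref{eq:boundonderiv}. Expanding $\langle\psi_i,W\psi_i\rangle$ in this decomposition, estimating the two cross terms by Young's inequality so as to absorb a $\tfrac{\delta_0}{2}|\beta|^2$, and bounding the $\psi_i^{\perp}$-term crudely by $\|\nabla q\|_\infty\|\psi_i^{\perp}\|^2$, one obtains $\langle\psi_i,W\psi_i\rangle\ge\tfrac{\delta_0}{2}\|\psi_i\|^2-C_1\|\psi_i^{\perp}\|^2$; since $\langle\psi_i,(H_i(\omega_i)-E_0)\psi_i\rangle\ge(E_1(\omega_i)-E_0)\|\psi_i^{\perp}\|^2\ge\gamma_0\|\psi_i^{\perp}\|^2$, the single-site bound follows with $c_0=\delta_0/2$ and $C_0\ge C_1/\gamma_0$. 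If instead $\omega_i\notin A_{r_0}$, bound the left-hand side from below crudely by $-\|\nabla q\|_\infty\|\psi_i\|^2$ and use the linear growth~\eqref{eq:1}, $E_0(\omega_i)-E_0\ge r_0/C$, to get $\|\psi_i\|^2\le\tfrac{C}{r_0}\langle\psi_i,(H_i(\omega_i)-E_0)\psi_i\rangle$; the single-site bound then holds with the same $c_0$ and a possibly larger $C_0$. This single-site analysis is patterned on the argument of \cite{KNNN}.

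Summing the single-site bound over $i\in\Lambda'_{2L+1}$ and inserting~\eqref{eq:formdecomp}, \eqref{eq:derlocal} and the hypothesis then gives
\begin{equation*}
-\sum_{i\in\Lambda'_{2L+1}}\langle\psi,(\partial'_{c,\omega_i}H_{\omega,L})\psi\rangle\ \ge\ c_0\sum_i\|\psi_i\|^2-C_0\,\langle\psi,(H_{\omega,L}-E_0)\psi\rangle\ \ge\ (c_0-C_0\delta_2)\|\psi\|^2 ,
\end{equation*}
so choosing $\delta_2:=c_0/(2C_0)$ proves~\eqref{eq:key} with $\delta_1:=c_0/2$; both are independent of $L$, as required.

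I expect the main difficulty to lie not in any single step but in the interplay between the two regimes: genuine positivity of the vector-field term is available only on cells where $\omega_i$ sits near a corner, which is precisely where Corollary~\ref{cor:offcenterdecay}, through~\eqref{eq:boundonderiv}, enters, whereas on the remaining cells one has only a sign-indefinite bound of size $\|\nabla q\|_\infty$. The mechanism reconciling the two is that a state of low total energy must concentrate, in $L^2$, on the near-corner cells; making this quantitative and uniform in $L$ requires both the uniform spectral gap $\gamma_0$ and the linear lower bound~\eqref{eq:1}, together with a careful choice of the threshold $\delta_2$ so that the errors coming from the non-corner cells and from the cross terms are dominated by $c_0\|\psi\|^2$.
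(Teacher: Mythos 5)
Your proposal is correct, and it follows the same overall strategy as the paper: reduce to single cells via the Neumann decomposition, use Corollary~\ref{cor:offcenterdecay} (through~\eqref{eq:boundonderiv}) to get genuine positivity on cells where $\omega_i$ lies near a corner, use the uniform spectral gap to control the portion of $\psi_i$ orthogonal to the ground state, and use the low-energy hypothesis to show the remaining terms are small. The difference is in the per-cell estimate. The paper's Lemma~\ref{lem3} is stated for arbitrary $\varphi\in L^2(\Lambda_1(i))$ and is proved by operator calculus — expressing $\partial'_c H_i$ via $\partial'_c P_i$, $\overline{P}_i(\partial'_c H_i)\overline{P}_i$, etc., and then verifying uniform boundedness of $\partial'_{c,a}P_i(a)$ and $H_i(a)\partial'_{c,a}P_i(a)$ through the Cauchy integral representation of the eigenprojection. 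You instead expand $\psi_i = \beta\varphi_i + \psi_i^\perp$ and treat $W=\sigma\cdot(\nabla q)(\cdot - i - \omega_i)$ directly as a bounded multiplication operator: the Feynman--Hellmann identity gives the diagonal term, and Young's inequality handles the cross terms without ever needing to control derivatives of projections. This is more elementary and sidesteps the resolvent calculus. You also fold the bad-cell contribution (the paper's Lemma~\ref{lem2}) into the same single-site inequality by paying the constant $C_0$, rather than keeping it as a separate estimate to be combined at the end. Both organizations are valid and yield the same constants up to inessential factors; the paper's Lemma~\ref{lem3} is stated in slightly greater generality (it keeps the error linear in $\|\overline{P}_i\varphi\|$ as well as quadratic, and separates $\partial'_cE_0$ as an explicit prefactor), which is then exploited in the final summation by distinguishing the signs of $\partial'_{c,\omega_i}E_0(\omega_i)$ on $\Lambda''_{2L+1}$ and its complement.
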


In the proof of Theorem~\ref{thm:wegner} at the end of this section, Proposition~\ref{prop:key} provides the crucial technical result which will allow us to link our argument with previously known strategies for proving Wegner estimates. Thus, we pause here to motivate the origin of Proposition~\ref{prop:key}, comparing with the situation of an Anderson-type model
\[ H_{\omega}^A = -\Delta + \sum_{i \in \Z^d} \lambda_i q(x-i) \]
with random coupling constants $\lambda_i$ and suitable non-negative single-site potential $q$. If $H_{\omega,L}^A$ is the restriction of $H_{\omega}^A$ to $\Lambda_{2L+1}$ with appropriate boundary condition, $I\subset \R$ a compact interval, and $\{\phi_j\}$ all normalized eigenfunctions to eigenvalues of $H_{\omega,L}^A$ in $I$, then Proposition~\ref{prop:key} should be considered an analogue to the fact that
\[ \sum_{i \in \Lambda_{2L+1}'} \langle \phi_j, q(\cdot -i) \phi_j \rangle \ge C_0 >0 \]
uniformly in $j$ and $L$, which is a key step in the proof of a Wegner estimate for $H_{\omega}^A$ in $I$, e.g.\ \cite{CHN}.

For this, first note that, in a sense as in (\ref{eq:Hamderiv}), $\partial H_{\omega,L}^A / \partial \lambda_i = q(\cdot-i)$. In the RDM the random parameters $\omega_i$ are vector-valued, which gives us the choice to differentiate in (\ref{eq:key}) with respect to a suitably chosen vector field, for which we take for every $\omega_i$ the partial derivative in the direction of the closest corner. Finally, as indicated by the prime in (\ref{eq:key}), we need to sum only over those $i$ where $\omega_i$ is close to a corner. This is plausible by the fact that we establish (\ref{eq:key}) only for $\psi$ with energy close to $E_0$, which forces them to have most of their mass in cubes where $\omega_i$ is close to a corner, see Lemma~\ref{lem2} below.

The proof of Proposition~\ref{prop:key} will be prepared by a series of lemmas.

Let $P_i(a)$ be the eigenprojection onto the groundstate of $H_i(a)$, $\overline{P}_i(a) := I-P_i(a)$, and
\[ E_1 := \inf_{a\in \overline{G}}  (\sigma(H_i(a)) \setminus \{E_0(a)\} ) > E_0.\]

\begin{lemma} \label{lem1}
If $\psi \in H^1(\Lambda_{2L+1})$ with $\langle \psi,(H_{\omega,L}-E_0) \psi \rangle \le \delta_2 \|\psi\|^2$, then
\begin{equation} \label{eq:lem1}
\sum_{i\in \Lambda_{2L+1}'} \|\overline{P}_i (\omega_i)\psi_i\|^2
\le \frac{\delta_2}{E_1-E_0} \|\psi\|^2.
\end{equation}
\end{lemma}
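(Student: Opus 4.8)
The plan is to exploit the orthogonal decomposition $\psi_i = P_i(\omega_i)\psi_i + \overline{P}_i(\omega_i)\psi_i$ on each unit cell and the spectral gap above $E_0(a)$ which is uniform in $a$ by the definition of $E_1$. First I would use the form decomposition (\ref{eq:formdecomp}) to write
\[
\langle \psi, (H_{\omega,L}-E_0)\psi \rangle = \sum_{i\in \Lambda_{2L+1}'} \langle \psi_i, (H_i(\omega_i)-E_0)\psi_i \rangle.
\]
Since $E_0 = \inf_{a\in\overline G} E_0(a) \le E_0(\omega_i)$, each summand is bounded below by $\langle \psi_i, (H_i(\omega_i)-E_0(\omega_i))\psi_i\rangle$, which is a nonnegative quadratic form. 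Decomposing $\psi_i$ along the ground state projection, the contribution of $P_i(\omega_i)\psi_i$ to this form vanishes (it lies in the kernel of $H_i(\omega_i)-E_0(\omega_i)$), and on the range of $\overline{P}_i(\omega_i)$ we have $H_i(\omega_i)-E_0(\omega_i) \ge E_1 - E_0(\omega_i) \ge E_1 - $ (something), so in fact $H_i(\omega_i)-E_0 \ge (E_1-E_0)\overline{P}_i(\omega_i)$ as quadratic forms on $H^1(\Lambda_1(i))$. Hence
\[
\langle \psi_i, (H_i(\omega_i)-E_0)\psi_i\rangle \ge (E_1-E_0)\|\overline{P}_i(\omega_i)\psi_i\|^2.
\]

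Summing over $i\in\Lambda_{2L+1}'$ and using the hypothesis $\langle\psi,(H_{\omega,L}-E_0)\psi\rangle \le \delta_2\|\psi\|^2$ together with $\sum_i \|\psi_i\|^2 = \|\psi\|^2$ then yields
\[
(E_1-E_0)\sum_{i\in\Lambda_{2L+1}'}\|\overline{P}_i(\omega_i)\psi_i\|^2 \le \sum_{i\in\Lambda_{2L+1}'}\langle\psi_i,(H_i(\omega_i)-E_0)\psi_i\rangle = \langle\psi,(H_{\omega,L}-E_0)\psi\rangle \le \delta_2\|\psi\|^2,
\]
which is exactly (\ref{eq:lem1}) after dividing by $E_1-E_0 > 0$.

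The only point requiring a little care — and the one I would flag as the main (minor) obstacle — is the claim $H_i(\omega_i) - E_0 \ge (E_1-E_0)\overline{P}_i(\omega_i)$ as a form inequality on all of $H^1(\Lambda_1(i))$, not just on nice eigenfunction expansions. This follows from the spectral theorem applied to the self-adjoint operator $H_i(\omega_i)$: writing $A := H_i(\omega_i)$, for $\phi$ in the form domain one has $\langle\phi,(A-E_0)\phi\rangle = \langle\phi,(A-E_0)P_i\phi\rangle + \langle\phi,(A-E_0)\overline P_i\phi\rangle$ with the first term equal to $(E_0(\omega_i)-E_0)\|P_i\phi\|^2 \ge 0$ and the second bounded below, via the spectral decomposition of $A$ restricted to the range of $\overline P_i$, by $(E_1-E_0)\|\overline P_i\phi\|^2$, since $\sigma(A)\setminus\{E_0(\omega_i)\} \subset [E_1,\infty)$ by definition of $E_1$. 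One should also note that $P_i(\omega_i)$ commutes with $A$ and maps $H^1(\Lambda_1(i))$ into itself (the ground state is smooth up to the boundary by Lemma~\ref{lem:efsmooth}), so the decomposition is legitimate at the level of forms. With this established the proof is complete; everything else is bookkeeping with (\ref{eq:formdecomp}) and the normalization $\sum_i\|\psi_i\|^2 = \|\psi\|^2$.
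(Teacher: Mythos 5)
Your proposal is correct and follows essentially the same route as the paper: decompose the form via (\ref{eq:formdecomp}), split $\psi_i$ along $P_i(\omega_i)$ and $\overline{P}_i(\omega_i)$, drop the nonnegative $P_i$ contribution $(E_0(\omega_i)-E_0)\|P_i(\omega_i)\psi_i\|^2$, and bound the $\overline{P}_i$ contribution below using the uniform gap $E_1-E_0$. The placeholder ``$\ge E_1 - $ (something)'' should simply read: on the range of $\overline{P}_i(\omega_i)$ one has $H_i(\omega_i) \ge E_1$ by definition of $E_1$, hence $H_i(\omega_i)-E_0 \ge E_1-E_0$ there; with that cleaned up, the argument matches the paper's.
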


\begin{proof}
We have
\begin{eqnarray*}
\langle \psi, (H_{\omega,L}-E_0)\psi \rangle & = & \sum_{i\in \Lambda_{2L+1}'} \left\{ (E_0(\omega_i)-E_0) \|P_i(\omega_i) \psi_i\|^2 \right. \\
& & \mbox{} \left. + \langle \overline{P}_i(\omega_i) \psi_i, (H_i(\omega_i)-E_0) \overline{P}_i(\omega_i) \psi_i \rangle \right\} \\
& \ge & (E_1-E_0) \sum_{i\in \Lambda_{2L+1}'} \|\overline{P}_i(\omega_i) \psi_i\|^2,
\end{eqnarray*}
which yields (\ref{eq:lem1}) by the assumption.
\end{proof}

By the results of \cite{BLS1} we have
\[ E_{0,r_0} := \inf_{a\in \overline{G}\setminus A_{r_0}} E_0(a) > E_0.\]

Define $\Lambda_{2L+1}'' := \{ i\in \Lambda_{2L+1}' : \;\omega_i \not\in A_{r_0} \}$.

\begin{lemma} \label{lem2}
If $\psi \in H^1(\Lambda_{2L+1})$ with $\langle \psi, (H_{\omega,L}-E_0) \psi \rangle \le \delta_2 \|\psi\|^2$, then
\begin{equation} \label{eq:lem2}
\sum_{i\in \Lambda_{2L+1}''} \|\psi_i\|^2 \le \frac{\delta_2}{E_{0,r_0}-E_0} \|\psi\|^2.
\end{equation}
\end{lemma}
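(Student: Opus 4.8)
The plan is to mimic almost verbatim the proof of Lemma~\ref{lem1}, using the splitting of the quadratic form over unit cells together with the definition of $E_{0,r_0}$ as the infimum of the ground state energies over displacements that stay away from the corners. First I would write, for $\psi$ with $\langle \psi, (H_{\omega,L}-E_0)\psi\rangle \le \delta_2\|\psi\|^2$, the decomposition
\[
\langle \psi, (H_{\omega,L}-E_0)\psi\rangle = \sum_{i\in \Lambda_{2L+1}'} \langle \psi_i, (H_i(\omega_i)-E_0)\psi_i\rangle,
\]
which is just (\ref{eq:formdecomp}) shifted by $E_0$. Each summand is nonnegative: since $E_0 = \inf_{a\in\overline G} E_0(a) \le E_0(\omega_i) = \inf\sigma(H_i(\omega_i))$, we have $H_i(\omega_i) - E_0 \ge 0$ as a form on $H^1(\Lambda_1(i))$.

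Next I would bound the contribution of the cells indexed by $\Lambda_{2L+1}''$ from below. For $i\in\Lambda_{2L+1}''$ we have $\omega_i \notin A_{r_0}$, hence $E_0(\omega_i) \ge E_{0,r_0}$, and therefore $H_i(\omega_i) - E_0 \ge (E_{0,r_0}-E_0) I$ on $H^1(\Lambda_1(i))$, giving $\langle \psi_i, (H_i(\omega_i)-E_0)\psi_i\rangle \ge (E_{0,r_0}-E_0)\|\psi_i\|^2$. Summing over $i\in\Lambda_{2L+1}''$ and discarding the (nonnegative) remaining terms indexed by $\Lambda_{2L+1}'\setminus\Lambda_{2L+1}''$, I obtain
\[
\delta_2\|\psi\|^2 \ge \langle \psi,(H_{\omega,L}-E_0)\psi\rangle \ge (E_{0,r_0}-E_0)\sum_{i\in\Lambda_{2L+1}''}\|\psi_i\|^2,
\]
which is exactly (\ref{eq:lem2}) after dividing by $E_{0,r_0}-E_0>0$. (The positivity $E_{0,r_0}-E_0>0$ is guaranteed by the cited result of \cite{BLS1}, namely alternative (i), under the standing assumption that $E_0(a)$ does not vanish identically; I would simply invoke it.)

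There is essentially no obstacle here — the lemma is a soft consequence of Neumann bracketing plus the strict gap $E_{0,r_0} > E_0$ away from the corners, in complete parallel with Lemma~\ref{lem1}. The only point requiring a line of care is that one uses the variational/form characterization of the ground state energy cell by cell, so that $H_i(\omega_i)-E_0 \ge (E_0(\omega_i)-E_0) I \ge (E_{0,r_0}-E_0)I$ holds as a form inequality on $H^1(\Lambda_1(i))$ and not merely on the operator domain; this is immediate from the spectral theorem. I expect the whole proof to be three or four lines.
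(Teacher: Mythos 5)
Your proof is correct and follows exactly the same route as the paper: Neumann bracketing over unit cells via (\ref{eq:formdecomp}), dropping the nonnegative contributions from cells with $i\notin\Lambda_{2L+1}''$, and bounding $H_i(\omega_i)-E_0\ge(E_{0,r_0}-E_0)I$ cellwise using the strict gap $E_{0,r_0}>E_0$. The paper's proof is just a one-line compression of this same argument.
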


\begin{proof}
This follows from the assumption and
\[ \langle \psi, (H_{\omega,L}-E_0) \psi \rangle \ge \sum_{i\in \Lambda_{2L+1}''} (E_{0,r_0}-E_0) \|\psi_i\|^2.\]
\end{proof}

\begin{lemma} \label{lem3}
There exist $C_1<\infty$ and $C_2 <\infty$ such that for $\varphi
\in L^2(\Lambda_1(i))$,
\begin{equation} \label{eq:lem3}
-\langle \varphi, (\partial'_{c,a} H_i(a)) \varphi \rangle \ge
-(\partial'_{c,a} E_0(a)) \|P_i(a)\varphi\|^2 - C_1 \|\varphi\|
\|\overline{P}_i \varphi\| - C_2 \|\overline{P}_i \varphi\|^2.
\end{equation}
\end{lemma}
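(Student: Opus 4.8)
The plan is to prove the bound by splitting $\varphi$ into its ground-state component and the orthogonal remainder, and expanding the quadratic form $-\langle \varphi, (\partial'_{c,a}H_i(a))\varphi\rangle$ along this decomposition. Write $\varphi = P_i(a)\varphi + \overline{P}_i(a)\varphi$, abbreviating $P = P_i(a)$, $\overline{P} = \overline{P}_i(a)$, and recall from~\eqref{eq:Hamderiv} that $\partial'_{c,a}H_i(a) = \eta(|c(a)-a|)\,\partial_{c,a}q(\cdot-a)$ is a bounded multiplication operator (bounded uniformly in $a$, since $q\in C^\infty$ has compact support and $0\le\eta\le 1$). Call this operator $W_a$, so $\|W_a\|_\infty \le C$ for a constant $C$ independent of $a$. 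Then
\begin{equation*}
-\langle\varphi,W_a\varphi\rangle = -\langle P\varphi, W_a P\varphi\rangle - 2\,\mathrm{Re}\langle P\varphi, W_a\overline{P}\varphi\rangle - \langle\overline{P}\varphi, W_a\overline{P}\varphi\rangle.
\end{equation*}
The first term I would evaluate exactly: since $P\varphi = \langle u_0(\cdot,a),\varphi\rangle\, u_0(\cdot,a)$ and, by the first-order perturbation formula~\eqref{eq:fopt} applied in the direction $\sigma = (c(a)-a)/|c(a)-a|$ together with the definition~\eqref{eq:cutoff}, one has $-\langle u_0(\cdot,a), W_a u_0(\cdot,a)\rangle = -(\partial'_{c,a}E_0)(a)$, so this term equals exactly $-(\partial'_{c,a}E_0(a))\,\|P\varphi\|^2$, the leading term on the right of~\eqref{eq:lem3}.

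For the two remaining terms I would just use the uniform operator bound on $W_a$ and Cauchy--Schwarz: the cross term is bounded below by $-2\|W_a\|_\infty \|P\varphi\|\,\|\overline{P}\varphi\| \ge -2C\|\varphi\|\,\|\overline{P}\varphi\|$ (using $\|P\varphi\|\le\|\varphi\|$), which is the $-C_1\|\varphi\|\,\|\overline{P}\varphi\|$ term with $C_1 = 2C$; and the last term is bounded below by $-\|W_a\|_\infty\|\overline{P}\varphi\|^2 \ge -C\|\overline{P}\varphi\|^2$, which is the $-C_2\|\overline{P}\varphi\|^2$ term with $C_2 = C$. Collecting the three estimates gives~\eqref{eq:lem3} with constants $C_1, C_2$ depending only on $\|q\|_{C^1}$ and hence independent of $a$ and $i$ (translation invariance reduces every $\Lambda_1(i)$ to $\Lambda_1(0)$).

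The only genuinely substantive point — really the only place where the earlier results of the paper enter — is the exact identification of the ground-state term: one must be careful that the perturbation formula~\eqref{eq:fopt} indeed gives $-(\partial'_{c,a}E_0)(a)$ and not just some quantity close to it, which hinges on $P_i(a)$ being a rank-one projection onto the normalized real eigenfunction $u_0(\cdot,a)$ — guaranteed by the non-degeneracy of the ground state and the smoothness from Lemma~\ref{lem:diff} — and on linearity of $a\mapsto (\partial'_{c,a}E_0)(a)$ in the directional derivative and of $W_a$ in the direction vector $\sigma$, which lets the scalar prefactor $\eta(|c(a)-a|)$ and the unit vector $\sigma$ pass through the expectation $\langle u_0, (\cdot)\, u_0\rangle$ cleanly. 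Everything else is elementary, and I expect no real obstacle: the estimate is deliberately crude in the $\overline{P}\varphi$ directions precisely because those contributions will later be absorbed using Lemma~\ref{lem1}.
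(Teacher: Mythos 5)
Your proof is correct, and it is genuinely simpler than the paper's. The paper starts from the operator identity $H_i = E_0 P_i + \overline{P}_i H_i \overline{P}_i$ and differentiates it along the vector field $\partial'_c$; this makes the $-(\partial'_c E_0)\|P_i\varphi\|^2$ term drop out automatically, but at the cost of generating cross terms that involve $\partial'_c P_i$ and $H_i(\partial'_c P_i)$, whose uniform boundedness the paper must then establish via contour-integral representations of $P_i(a)$. You instead leave $\partial'_{c,a}H_i(a) = W_a$ as the explicitly bounded multiplication operator $-\eta(|c(a)-a|)\,\sigma\cdot(\nabla q)(\cdot-i-a)$, expand the quadratic form $-\langle\varphi,W_a\varphi\rangle$ over the orthogonal splitting $\varphi = P\varphi + \overline{P}\varphi$, evaluate the diagonal $P$-block exactly by the Feynman--Hellmann formula \eqref{eq:fopt} (the scalar $\eta$ and the unit vector $\sigma$ passing through $\langle u_0,\cdot\,u_0\rangle$ by linearity, as you note), and bound the remaining two blocks by $\|W_a\|_\infty \le \|\nabla q\|_\infty$ via Cauchy--Schwarz. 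The two resulting error expressions are algebraically equal to the paper's, since both are what remains after removing identical leading and $\overline{P}W_a\overline{P}$ pieces from the same quantity, but your version needs only the $C^1$-bound on $q$ and no information about $\partial'_c P_i$ at all. The one thing you should make explicit, which the paper buries in Lemma~\ref{lem:diff} and you only allude to, is that $P_i(a)$ is rank one because $E_0(a)$ is non-degenerate for all $a\in\overline G$; this is what makes $P\varphi = \langle u_0,\varphi\rangle u_0$ and hence the exact evaluation of the diagonal block possible.
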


\begin{proof}
Omitting the variable $a$ we have,
\[ \partial'_c P_i = -\partial'_c \overline{P}_i = -\partial'_c
(\overline{P}_i^2) = \overline{P}_i (\partial'_c P_i) + (\partial'_c
P_i) \overline{P}_i \] and
\begin{eqnarray*}
\partial'_c H_i & = & \partial'_c (E_0 P_i + \overline{P}_i H_i
\overline{P}_i) \\
& = & (\partial'_c E_0) P_i + E_0 (\partial'_c P_i) + (\partial'_c
\overline{P}_i) H_i \overline{P}_i + \overline{P}_i (\partial'_c H_i)
\overline{P}_i + \overline{P}_i H_i (\partial'_c \overline{P}_i) \\
& = & (\partial'_c E_0) P_i + E_0 (\partial'_c P_i) \overline{P}_i +
E_0 \overline{P}_i (\partial'_c P_i) \\
& & \mbox{} + \overline{P}_i (\partial'_c H_i) \overline{P}_i -
(\partial'_c P_i) H_i \overline{P}_i - \overline{P}_i H_i
(\partial'_c P_i).
\end{eqnarray*}
Thus, for $\varphi \in L^2(\Lambda_1(i))$,
\begin{eqnarray*}
-\langle \varphi, (\partial'_c H_i) \varphi \rangle & = & -(\partial'_c
E_0) \langle \varphi, P_i \varphi \rangle - 2 E_0 \re \langle
\overline{P}_i \varphi, (\partial'_c P_i) \varphi \rangle \\
& & \mbox{} - \langle \overline{P}_i \varphi, (\partial'_c H_i)
\overline{P}_i \varphi \rangle +2 \re \langle H_i (\partial'_c P_i)
\varphi, \overline{P}_i \varphi \rangle.
\end{eqnarray*}
This implies (\ref{eq:lem3}) if we can show that the operators
$\partial'_{c,a} H_i(a)$, $(\partial'_{c,a} P_i)(a)$ and $H_i(a)
(\partial'_{c,a} P_i)(a)$ are bounded in $L^2(\Lambda_1(i))$,
uniformly in $a\in \overline{G}$. This is clear for
\[
\partial'_{c,a} H_i(a) = -\eta(|c(a)-a|) \frac{c(a)-a}{|c(a)-a|} \cdot (\nabla q)(\cdot-i-a).
\]
For $P_i(a)$
we write, as in Section~\ref{sec:key},
\[ P_i (a) = \frac{1}{2\pi i} \oint_C (H_i(a)-z)^{-1}\, dz,\]
where $C$ circles around $E_0(a)$, and can be chosen locally
independent of $a$ and with distance to $E_0(a)$ which is bounded
below uniformly in $a\in \overline{G}$ (as the distance of $E_0(a)$
and $E_1(a)$ is uniformly bounded below). From this we can conclude
that
\[ (\partial'_{c,a} P_i)(a) = \frac{1}{2\pi i} \oint_C
(H_i(a)-z)^{-1} (\partial'_{c,a} H_i(a)) (H_i(a)-z)^{-1}\,dz \] is
uniformly bounded in $a$. In this expression one can absorb an
additional factor $H_i(a)$ to show uniform boundedness of $H_i(a)
(\partial'_{c,a} P_i)(a)$.

\end{proof}

We are now ready to complete the proof of
Proposition~\ref{prop:key}:

\begin{proof}
By Lemmas~\ref{lem1} and \ref{lem2} we have
\begin{eqnarray*}
\sum_{i\not\in \Lambda_{2L+1}''} \|P_i(\omega_i) \psi_i\|^2 & = &
\sum_{i \not\in \Lambda_{2L+1}''} (\|\psi_i\|^2 -
\|\overline{P}_i(\omega_i) \psi_i\|^2) \\
& \ge & \|\psi\|^2 - \sum_{i\in \Lambda_{2L+1}''} \|\psi_i\|^2 -
\sum_{i\in \Lambda_{2L+1}'} \|\overline{P}_i(\omega_i) \psi_i\|^2 \\
& \ge & \left( 1-\frac{\delta_2}{E_1-E_0}
-\frac{\delta_2}{E_{0,r_0}-E_0} \right) \|\psi\|^2.
\end{eqnarray*}
Let $c_3 := \sup_{a\in G} |\partial'_{c,a} E_0(a)|$. With
Lemma~\ref{lem3} we find
\begin{eqnarray}
\lefteqn{- \sum_{i\in \Lambda_{2L+1}'} \langle \psi, (\partial'_{c,\omega_i}
H_{\omega,L})\psi \rangle \;\; = \;\; - \sum_{i\in \Lambda_{2L+1}'}
\langle \psi_i, (\partial'_{c,\omega_i} H_i(\omega_i)) \psi_i \rangle} \nonumber \\
& \ge & \sum_{i\in \Lambda_{2L+1}'} (\partial'_{c,\omega_i}
E_0(\omega_i)) \|P_i(\omega_i) \psi_i\|^2 \nonumber \\
& & \mbox{} - C_1 \sum_{i \not\in \Lambda_{2L+1}'} \|\psi_i\|
\|\overline{P}_i(\omega_i) \psi_i\| - C_2 \sum_{i\in
\Lambda_{2L+1}'} \|\overline{P}_i(\omega_i) \psi_i\|^2 \nonumber \\
& \ge &  -\sum_{i \not\in \Lambda_{2L+1}''} (\partial'_{c,\omega_i}
E_0(\omega_i)) \|P_i(\omega_i)\psi_i\|^2 - \sum_{i\in
\Lambda_{2L+1}''} (\partial'_{c,\omega_i} E_0(\omega_i))
\|P_i(\omega_i) \psi_i\|^2 \nonumber \\
& & \mbox{} -\frac{C_1}{2} \|\psi\| \left( \sum_{i\in
\Lambda_{2L+1}'} \|\overline{P}_i(\omega_i) \psi_i\|^2 \right)^{1/2}
- C_2 \sum_{i\in \Lambda_{2L+1}'} \|\overline{P}_i(\omega_i)
\psi_i\|^2.
\end{eqnarray}

Now we use that $-\partial'_{c,\omega_i} E_0(\omega_i)$ is bounded from below by $-c_3$ if $i\in
\Lambda_{2L+1}''$ and by $\delta_0$ if $i\not\in \Lambda_{2L+1}''$ (the latter means $\omega_i \in A_0$ and thus $\partial'_{c,\omega_i} E_0(\omega_i) = \partial_{c,\omega_i} E_0(\omega_i)$, so we can use (\ref{eq:boundonderiv})). Also using the bounds from Lemma~\ref{lem1} and
\ref{lem2} again, we arrive at
\begin{eqnarray*}
-\sum_{i\in \Lambda_{2L+1}'} \langle \psi, (\partial'_{c,\omega_i}
H_{\omega,L})\psi \rangle & \ge & \delta_0 \left(
1-\frac{\delta_2}{E_1-E_0} -\frac{\delta_2}{E_{0,r_0}-E_0} \right)
\|\psi\|^2 \\
& & \mbox{} - c_3 \frac{\delta_2}{E_{0,r_0}-E_0} \|\psi\|^2 -
\frac{C_1}{2} \sqrt{\frac{\delta_2}{E_1-E_0}} \|\psi\|^2 \\
& & \mbox{} - C_2 \frac{\delta_2}{E_1-E_0} \|\psi\|^2.
\end{eqnarray*}
Choose $\delta_2 > 0$ such that
\[ \frac{c_3 \delta_2}{E_{0,r_0}-E_0} + \frac{C_1}{2}
\sqrt{\frac{\delta_2}{E_1-E_0}} + \frac{C_2 \delta_2}{E_1-E_0} <
\frac{\delta_0}{4},\]
and
\[ \frac{\delta_2}{E_1-E_0} + \frac{\delta_2}{E_{0,r_0}-E_0} <
\frac{\delta_0}{2}. \]
Then
\[  -\sum_{i\in \Lambda_{2L+1}'} \langle \psi,
(\partial'_{c,\omega_i} H_{\omega,L}) \psi \rangle \ge
\frac{\delta_0}{4} \|\psi\|^2.\]

This proves (\ref{eq:key}) with $\delta_1 = \delta_0/4$.

\end{proof}

We now prove Theorem~\ref{thm:wegner}. We follow the approach
developed in \cite{HK} based on $L^p$ estimates of the spectral shift
function (see also \cite{CHN}). The method can be adapted to our model thanks
to Proposition~\ref{prop:key}.

For $\delta_2$ from Proposition~\ref{prop:key} choose $\delta =
\delta_2/2$ and let $I\subset [E_0,E_0+\delta]$ be an interval
of the form $I=[E-\varepsilon,E+\varepsilon]$.

Let $\chi \in C^{\infty}(\R)$ be a real-valued function such that $\chi(x)=-1$
for $x\le -\varepsilon$; $\chi(x)=0$ for $x\ge \varepsilon$;
$\chi'\geq0$; and $\|\chi'\|_{\infty} \le 1/\varepsilon$.

By our assumption on $I$ and the Gohberg-Krein formula, see e.g.\
Proposition~2 in \cite{Simon98}, Proposition~\ref{prop:key} implies
that
\begin{equation*}
  \begin{split}
    \tr \left(-\sum_{i\in \Lambda_{2L+1}'} \partial'_{c,\omega_i}[
      \chi(H_{\omega,L}-E+t)]\right)&=\tr\left(\chi'(H_{\omega,L}-E+t)\left(
      -\sum_{i\in \Lambda_{2L+1}'} (\partial'_{c,\omega_i} H_{\omega})
    \right)\right)\\&\geq \delta_1\tr\left(\chi'(H_{\omega,L}-E+t)\right).
  \end{split}
\end{equation*}
Then, as $\supp \chi' \subset [-\varepsilon,\varepsilon]$ and as
$\chi'\geq0$, one has
\begin{eqnarray}
 \label{eq:wegner1}
 \esp(\tr \chi_I(H_{\omega,L})) & \le & \esp(\tr
 \int_{-2\varepsilon}^{2\varepsilon} \chi'(H_{\omega,L}-E+t)\,dt) \nonumber \\
 & \leq &\frac1{\delta_1} \sum_{i\in \Lambda_{2L+1}'}
 \int_{-2\varepsilon}^{2\varepsilon}\esp\left(\tr\left(-\partial'_{c,\omega_i}[
     \chi(H_{\omega,L}-E+t)]\right)\right)\,dt.
\end{eqnarray}

In the above expectation we want to write the integration with respect to $\omega_i$ over $G$ as a sum of integrals over the intersection of $G$ with each one of the $2^d$ orthants, using polar coordinates with respect to the corners $c\in {\mathcal C}$ in each orthant. For this we represent $a\in G$ by $(r, \theta, c(a)) \in (0,\infty) \times \mathbb{S}^{d-1} \times {\mathcal C}$, where $c(a)$ again denotes the corner closest to $a$ and $(r,\theta)$ polar coordinates of $a-c(a)$. For a function $f$ supported near the corners ${\mathcal C}$ this means that
\[ \int_G f(a) \rho(a)\,da = \sum_{c\in {\mathcal C}} \int_{\mathbb{S}_c^{d-1}} \int_0^{\infty} f(r\theta -c) \rho(r\theta -c) r^{d-1}\,dr\,d\theta, \]
with $\mathbb{S}_c^{d-1}$ denoting the intersection of $\mathbb{S}^{d-1}$ with the orthant containing $-c$.

With $a=\omega_i$ this leads to
\begin{multline}\label{eq:4}
\mathbb{E}\bigl(\tr\bigl(-\partial'_{c,\omega_i}\bigl[
\chi(H_{\omega_,L}-E+t)\bigr]\bigr)\bigr) \\
= \hat{\E}_i \biggl(\tr \sum_{c\in {\mathcal C}} \int_{\mathbb{S}_c^{d-1}} \int_0^{2r_0} \partial_r [ \chi(H_{\omega,L}-E+t)] \eta(r) \rho(r\theta -c) r^{d-1} \,dr\,d\theta \biggr),
\end{multline}
where  $\hat{\E}_i$
denotes the expectation with respect to the random variables $(\omega_j)_{j\not= i}$. Here a sign-change is due to the fact that $\partial_r$ acts in the direction opposite to $\partial_{c,\omega_i}$. By integration by parts, we have
\begin{multline}\label{eq:6}
\int_0^{2r_0} \partial_r \bigl[ \chi(H_{\omega,L}-E+t)\bigr] \eta(r)\rho(r\theta-c) r^{d-1} dr \\
= - \int_0^{2r_0} \bigl[\chi(H_{\omega,L}-E+t)-\chi(H_{\omega^{(i)},L}-E+t)
\bigr] \partial_r (\eta(r)\rho(r\theta-c)r^{d-1})dr,
\end{multline}
where $\omega^{(i)}$ is the random variable such that
$\omega^{(i)}_j=\omega_j$ for $j\neq i$, and
$\omega^{(i)}_i=c(\omega_i)$. Note that the second term in (\ref{eq:6}) actually integrates out to zero, as $\eta(r) r^{d-1}$ vanishes at both endpoints. But we include this term in the integral to be able to make use of bounds on $\chi(H_{\omega,L}-E+t)-\chi(H_{\omega^{(i)},L}-E+t)$.

Now~\eqref{eq:6} implies that
\begin{multline} \label{eq:4b}
  \tr\left(\int_0^{2r_0}
    \partial_{r}[
    \chi(H_{\omega,L}-E+t)]\eta(r)\rho(r\theta-c)r^{d-1}dr\right)
  \\=-\int_0^{2r_0}\left(\int_\R\xi ( \cdot \; ;
  H_{\omega,L} , H_{\omega^{(i)},L} )\chi'(\lambda)d\lambda\right)
  \partial_r[\eta(r)\rho(r\theta-c)r^{d-1}]dr
\end{multline}
This uses the spectral shift function $\xi ( \lambda \; ; H_{\omega,L} ,
H_{\omega^{(i)},L} )$ for the pair $( H_{\omega,L} , H_{\omega^{(i)},L} )$ which is
defined so that
\begin{equation*}
  \int_\R\varphi'(\lambda)\xi  ( \lambda \; ; H_{\omega,L} ,
  H_{\omega^{(i)},L} )d\lambda=\tr(\varphi(H_{\omega,L})-
  \varphi(H_{\omega^{(i)},L})).
\end{equation*}
for all $\varphi \in \mathcal{C}_0^{\infty}(\R)$. The invariance principle for the spectral shift function (see
e.g.~\cite{Yafaev} or \cite{BY}) states that the spectral shift function $\xi (
\lambda \; ; H_{\omega,L} , H_{\omega^{(i)},L} )$ can be written as
\begin{equation*}
  \xi ( \lambda \; ;  H_{\omega,L} , H_{\omega^{(i)},L} ) = -
  \xi ( g_k ( \lambda ) \; ; g_k (  H_{\omega,L} ), g_k(H_{\omega^{(i)},L} ) ) .
\end{equation*}
Here we define $g_k(\lambda)=(\lambda+M)^{-k}$ and  $M$ is picked such that
$\omega$-a.s., $\inf_xV_\omega(x)\geq -M+1$.

By definition
$H_{\omega,L}-H_{\omega^{(i)},L}=q(\cdot-i-\omega_i)-q(\cdot-i-c(\omega_i))$ which is
a bounded, compactly supported potential.   In Section 5 of~\cite{HK}, it is
proved that (actually for more general operators), if $k>pd/2+1$ and
$p>1$, the operator $g_k(H_{\omega,L})-g_k(H_{\omega^{(i)},L})$ is super
trace class of order $1/p$, i.e. its singular values to the power $1/p$
are summable; the $p$-th power of this sum is denoted by
$\|\cdot\|_{1/p}$.  Moreover,
$\|g_k(H_{\omega,L})-g_k(H_{\omega^{(i)},L})\|_{1/p}\leq C_0$ for $C_0>0$
independent of $\omega$ and $L$.

Using a simple change of variables and the bound $\|\xi(\cdot; A, B)\|_{L^p} \le \|A-B\|_{1/p}^{1/p}$ proven in \cite{CHN} we find
\begin{eqnarray} \label{eq:7}
\int_{-\varepsilon}^{\varepsilon} |\xi(\lambda;H_{\omega,L}, H_{\omega^{(i)},L})|^p \,d\lambda
& = & \int_{-\varepsilon}^{\varepsilon} |\xi(g_k(\lambda); g_k(H_{\omega,L}), g_k(H_{\omega^{(i)},L}))|^p\, d\lambda \nonumber \\
& \le & C \int_{\R} |\xi(s; g_k(H_{\omega,L}), g_k(H_{\omega^{(i)},L}))|^p\, ds \nonumber \\
& \le & C \|g_k(H_{\omega,L}) - g_k(H_{\omega^{(i)},L})\|_{1/p} \nonumber \\
& \le & CC_0.
\end{eqnarray}
As $\|\chi'\|_{\infty} \le 1/\varepsilon$ and
supp$\chi'\subset[-\varepsilon,\varepsilon]$,~\eqref{eq:7} and the
H{\"o}lder inequality imply that, for any $q\in(1,+\infty)$, there exists
$C_q>0$ such that
\begin{equation*}
  \sup_\omega\left|\int_\R\xi ( \cdot \; ;
    H_{\omega,L} , H_{\omega^{(i)},L} )\chi'(\lambda)d\lambda\right|
  \leq C_q\,\varepsilon^{1/q-1}.
\end{equation*}
As $\partial_r[\eta(r)\rho(r\theta-c)r^{d-1}]$ is a bounded, compactly supported
function uniformly in $\theta$ and $c$, \eqref{eq:4b} then implies that
\begin{equation*}
  \sup_\omega\left|\tr\left(\int_0^{2r_0}\partial_{r}[
      \chi(H_{\omega,L}-E+t)]\eta(r)\rho(r\theta-c)r^{d-1} dr\right) \right|
  \leq C_q\,\varepsilon^{1/q-1}.
\end{equation*}
Plugging this into~\eqref{eq:4} and then into~\eqref{eq:wegner1}, we
get that, for any $q\in(1,+\infty)$, there exists $C_q>0$ such that
\begin{equation*}
  \esp(\tr \chi_I(H_{\omega,L})) \leq
  C_q\,|\Lambda_{2L+1}'|\,\varepsilon^{1/q-1}
  \int_{-2\varepsilon}^{2\varepsilon}dt\leq
  \tilde C_q\,\varepsilon^{1/q}L^d.
\end{equation*}
This completes the proof of Theorem~\ref{thm:wegner}.\qed
\section{Concluding remarks} \label{sec:conclusion}

\subsection{Remarks on multi-scale analysis} \label{sec:msa}

It is well known to experts in the field that Lifshitz tails of the IDS, more precisely, a bound as in Corollary~\ref{le:3}, combined with a Wegner estimate such as Theorem~\ref{thm:wegner} lead to a proof of spectral and dynamical localization via multi-scale analysis (MSA). For the sake of reaching a broader audience we include some discussion of the strategy and additional tools which are behind this.

Two very convenient references for this discussion are \cite{Stollmann} and \cite{Klein}, which both make a point of thoroughly discussing detailed input assumptions which make MSA work and thus apply to a wide range of models, including ours. We follow the survey article \cite{Klein} here, as the results presented there are based on \cite{GeKl}, where it was shown that MSA leads to dynamical localization in the form (\ref{eq:dynloc}), the strongest result which has been obtained via MSA.

The required assumptions singled out in \cite{Klein} can be divided into deterministic and probabilistic assumptions. The deterministic assumptions listed in the following hold for classes of Schr{\"o}dinger operators much larger than what we require here. For discussion and references on their proofs see \cite{Klein} or \cite{Stollmann}. We use the same acronyms as \cite{Klein}.

(i) A property leading to the existence of suitable generalized eigenfunction expansions (SGEE).

(ii) A Simon-Lieb inequaltiy (SLI) relating resolvents at different scales.

(iii) An eigenfunction decay inequality (EDI) providing estimates for generalized eigenfunctions in terms of finite volume resolvents.

The required probabilistic properties are

(iv) $\Z^d$-ergodicity, which could be relaxed as discussed in \cite{Klein}, but is clearly satisfied for our model.

(v) Independence at distance (IAD), trivially satisfied in our model due to the non-overlap assumption for the single-site terms.

(vi) A Wegner estimate of the form (\ref{eq:wegnerest}), (\ref{eq:Wegnercl}). In \cite{Klein} the stronger form  $\E(\tr\chi_I(H_{\omega,L})) \le C_\alpha |I| L^{d}$ of the Wegner estimate (linear in the interval length) is used as an assumption, but all arguments can be modified to only require the slightly weaker (\ref{eq:wegnerest}), see Remark~4.6 in \cite{Klein}.

For random Schr{\"o}dinger operators satisfying all these assumptions it was shown in \cite{GeKl} that a certain ``suitability''-property of the finite volume resolvent, sometimes referred to as an {\it initial length estimate}, implies localization in the form claimed in Theorem~\ref{thm:main}. To establish this initial length estimate, the Lifshitz-tail bound from Corollay~\ref{le:3} is used. The argument behind this is well known, with details found, for example, in \cite{Stollmann}. Thus, we only outline the two main steps:

First, by a Neumann bracketing argument one deduces the following from (\ref{eq:Lifshitzbound}): For any $\xi>0$ and $\beta \in (0,1)$ there exists $L_2=L_2(\beta,\xi)$ such that
\[ \PP(H_{\omega,L} \:\mbox{has an eigenvalue less than}\: E_0+L^{\beta-1}) \le L^{-\xi}\]
for all $L\ge L_2$. The main difference to (\ref{eq:Lifshitzbound}) is that one trades in a larger distance of eigenvalues to $E_0$ for less, but still sufficient, probability.

Second, for $E\in I:= [E_0, E_0+L^{\beta-1}/2]$ one may now use a Combes-Thomas estimate (which holds for very general semi-bounded Schr{\"o}dinger operators, and thus certainly in our setting) to turn this into an initial length estimate such as the suitability property required in Theorem~5.4 of \cite{Klein}. This allows to start the MSA machine which leads to all the results stated in Theorem~\ref{thm:main}.

The quadruple MSA needed to prove Theorem~5.4 in \cite{Klein} is carried out in \cite{GeKl}. An equally self-contained but somewhat less refined MSA scheme is provided in \cite{Stollmann}. Here MSA is iterated twice, obtaining pure point spectrum with exponentially decaying eigenfunctions in the first run-through and using a second MSA argument to prove dynamical localization in the form
\begin{equation} \label{eq:dlstollmann}
\E \left( \sup_{|g|\le 1} \| |X|^p g(H_{\omega}) \chi_{I}(H_{\omega}) \chi_y\| \right) < \infty
\end{equation}
for all $p>0$ on an interval $I=[E_0, E_0 +\delta(p)]$ with $\delta(p)>0$ depending on $p$. This is much weaker than (\ref{eq:dynloc}), but we still consider the presentation in \cite{Stollmann} as a very accessible introduction into the mathematics of MSA for non-experts.

\subsection{Related results and problems} \label{sec:discussion}

(i) The Wegner estimate and the Lifshitz tails, and therefore our main
result Theorem~\ref{thm:main}, hold under weaker assumptions on the
distribution $\mu$ of the displacements. E.g. the proof as written in
Section~\ref{sec:wegner} only requires that $\mu$ has a $C^1$-density
$\rho$ near the corners ${\mathcal C}$. This is made possible through
the introduction of the cut-off $\eta$ supported near the corners in
(\ref{eq:cutoff}).

It is evident from (\ref{eq:4b}) that the $C^1$ condition for $\rho$
is only needed in the radial direction with respect to the corners. In
fact, similar to \cite{GhKl} we could allow distributions supported on
a suitable submanifold. Examples would be the uniform distribution
supported on a ``cross'' in $d=2$ or, in general dimension, uniform
distribution supported on the boundary $\partial G$ of
$G=(-d_{max},d_{max})^d$.

(ii) Under our assumptions and in $d\ge 2$, among all {\it periodic} configurations $\omega \in \overline{G}^{\Z^d}$, $\omega^*$ as defined in (\ref{eq:omegastar}) is, up to translations, the unique minimizer in the sense that $\inf \sigma(H_{\omega^*}) = \inf \Sigma$. This was proven in \cite{BLS2} and enters the argument in \cite{KN2} leading to Corollary~\ref{le:3} and the Lifshitz tail estimate Theorem~\ref{thr:1} above.

As also shown in \cite{BLS2}, in $d=1$ there are many periodic minimizing configurations. This has strong consequences for the IDS at the bottom of the spectrum. An extreme case is given by the 1D {\it Bernoulli displacement model}, i.e.\ $\mu = \frac{1}{2} \delta_{d_{max}} + \frac{1}{2} \delta_{-d_{max}}$, whose IDS satisfies the lower bound $N(E_0+\varepsilon) \ge C/\ln^2 \varepsilon$. This singular behavior is the extreme opposite of a Lifshitz tail.

(iii) The previous remark might mislead into expecting that the Bernoulli displacement model is not localized at low energy. Spectral localization for the 1D random displacement model (at all energies and for arbitrary non-trivial distribution of the displacements) has been proven in \cite{BS} and \cite{Sims} (using methods of \cite{DSS}). These methods are completely different from what is available for $d>1$, and, in particular, do not require smallness of the IDS. One uses dynamical systems tools such as results on products of random matrices, in particular Furstenberg's theorem. In fact, one finds that the Lyapunov exponents are positive with the possible exception of a discrete set of energies. As far as dynamical localization is concerned, it might be violated at those critical energies. For examples of this see \cite{JSS, DLS}. Away from the critical energies, however, one also has dynamical localization.

In the case of the one-dimensional Bernoulli displacement model, the energy $E_0$ provides a new example of a critical energy. This is seen as follows: By the results of \cite{BLS2}, for any $\varepsilon>0$ almost surely there is a solution $u_0$ of $H_{\omega}u=E_0 u$ and $C>0$ such that $\frac{1}{C}\exp(-x^{-1/2-\varepsilon}) \le |u_0(x)| \le C \exp(x^{1/2+\varepsilon})$. Using the lower bound and the standard reduction of order argument one shows that there is a second linearly independent solution $u_1$ which satisfies the same upper bound. This shows that the transfer matrices grow sub-exponentially. Thus, the Lyapunov exponent at $E_0$ vanishes.

(iv) An interesting open problem arises from cases of the random displacement model where, in the language used in the introduction, alternative (ii) holds, i.e.\ where $E_0(a)$ vanishes identically. Examples for this (non-generic) situation can be constructed as follows:

Let $0< \varphi \in C^{\infty}(\Lambda_1(0))$ be constant near the boundary (but not constant throughout $\Lambda_1(0)$). In the definition (\ref{eq:hamiltonian}), (\ref{eq:potential}) of the random displacement $H_{\omega}$ model pick the single-site potential as $q=\Delta \varphi/\varphi$. By construction, this leads to alternative (ii). Moreover, for {\it every} displacement configuration $\omega$ a generalized ground state of $H_{\omega}$ to $E_0=0$ is given by $\phi = \sum_n \varphi(\cdot-n-\omega_n) \chi_{\Lambda_1(n)}$ (here we think of $\varphi$ as extended by a constant onto all of $\R^d$). Note that $1/C \le \phi \le C$ for some $C>0$ uniformly in $\omega$.

This leads to van Hove behavior of the IDS at $E_0=0$, i.e.
\[ \frac{1}{C} E^{d/2} \le N(E) \le CE^{d/2} \]
for $E>0$, which follows with the same argument as provided for a closely related example in Section~3 of \cite{KN1}.

It would be interesting to know if this can generate non-trivial transport (and thus prevent dynamical localization). This is the case in dimension $d=1$. Starting with $\phi$, the reduction of order argument provides a second, linearly independent, solution which grows at most linearly. Thus, the transfer matrix also grows linearly. By Corollary~2.1 in \cite{DLS}, which is based on work in \cite{DT}, this implies that the time-averaged moments $|X|^p$ of suitable solutions of the time-dependent Schr{\"o}dinger equation are bounded below by $C T^{(p-5)/2}$. This rules out dynamical localization in the sense of (\ref{eq:dlstollmann}) for $p>5$. Clearly, there are multiple obstacles to extending these methods to higher dimension.

We finally remark that the methods of \cite{DT} and \cite{DLS} do not suffice to obtain non-trivial transport under the sub-exponential growth bounds on the transfer matrix discussed in remark (iii) above.

\vspace{.3cm}

\noindent {\bf Acknowledgements:} F.\ K., S.\ N.\ and G.\ S.\
gratefully acknowledge stays at the Banff International Reserach
Station, Centre Interfacultaire Bernoulli (supported by the SNFS) at
{\'E}cole Polytechnique F{\'e}d{\'e}rale de Lausanne, and Mathematisches
Forschungsinstitut Oberwolfach, which supported their collaboration on
this project.  G.\ S.\ also is thankful for hospitality at the
University of Tokyo. We thank the referees for useful suggestions to improve our presentation.

\bigskip

\end{document}